\pdfoutput=1
\documentclass[letterpaper]{article}
\def\IRISArxiv{1}
\newif\ifIRISIncludeMain
\newif\ifIRISIncludeSupplement
\IRISIncludeMaintrue
\IRISIncludeSupplementtrue
\ifdefined\IRISMainOnly
  \IRISIncludeSupplementfalse
\fi
\ifdefined\IRISSupplementOnly
  \IRISIncludeMainfalse
\fi
\usepackage{aaai2027}  
\nocopyright           
\makeatletter
\newcommand{\IRIScorrespondingplural}{\global\aaai@corrmultitrue}
\makeatother
\usepackage[hyphens]{url}  
\usepackage{graphicx} 
\usepackage{natbib}  
\usepackage{caption} 
\usepackage{amsmath}
\usepackage{amssymb}
\usepackage{amsthm}
\usepackage{booktabs}
\usepackage{longtable}
\usepackage{algorithm}      
\usepackage{algpseudocode}  
\usepackage{bibunits}
\usepackage{xr}             
\graphicspath{{figures/}}

\makeatletter
\newenvironment{breakablealgorithm}
{%
\begin{center}
    \refstepcounter{algorithm}
    \hrule height.8pt depth0pt \kern2pt
    \renewcommand{\caption}[2][\relax]{%
    {\raggedright\textbf{\ALG@name~\thealgorithm} ##2\par}%
    \ifx\relax##1\relax
        \addcontentsline{loa}{algorithm}{\protect\numberline{\thealgorithm}##2}%
    \else
        \addcontentsline{loa}{algorithm}{\protect\numberline{\thealgorithm}##1}%
    \fi
    \kern2pt\hrule\kern2pt
    }
}{%
    \kern2pt\hrule\relax
\end{center}
}
\makeatother

\newtheorem{theorem}{Theorem}
\newtheorem{proposition}{Proposition}
\newtheorem{corollary}{Corollary}
\newtheorem{definition}{Definition}
\newtheorem{lemma}{Lemma}

\newcommand{\IRIS}{\textsc{IRIS}}

\ifdefined\IRISMainOnly
\fi
\ifdefined\IRISSupplementOnly
\fi
\defaultbibliographystyle{aaai2027}
\defaultbibliography{references}
\newwrite\IRISbuaux
\AtBeginDocument{\immediate\openout\IRISbuaux=bu.aux\relax\immediate\closeout\IRISbuaux}

\title{Which Model Is Actually Serving You?\\
\IRIS{}: Budgeted Black-Box Auditing of Model Substitution and Routing Dilution in LLM Gateways}
\author{Yuewei Zhang\textsuperscript{\rm 1},
Zhi-Hai Zhang\textsuperscript{\rm 1}\corresponding,
Hanzhang Qin\textsuperscript{\rm 2}\corresponding\IRIScorrespondingplural
}
\affiliations{
  \textsuperscript{\rm 1}Department of Industrial Engineering, Tsinghua University, Beijing, China\\
  \textsuperscript{\rm 2}Department of Industrial Systems Engineering and Management, National University of Singapore, Singapore\\
  \{zyw23@mails.tsinghua.edu.cn, zhzhang@tsinghua.edu.cn, hzqin@nus.edu.sg\}
}

\begin{document}
\ifIRISIncludeMain
\begin{bibunit}
\maketitle

\begin{abstract}
    Commercial LLM gateways mediate access to hosted models, but the served backend may not match the advertised one: it may substitute a cheaper model on every request or route only a fraction $\epsilon$ of requests to it. Prior black-box auditors often need a privileged signal (log-probabilities, token ranks, or reference samples) or a target-specific probe, fix the query budget in advance, and return a yes/no verdict. We present \IRIS{}, an audit that needs only the returned text: it asks endpoints to generate random numbers or strings, fingerprints the backend, and is the first to combine, in one text-only audit, detection of whole-stream substitution and fractional dilution, attribution of the served backend, routing-fraction ($\epsilon$) estimation, and a query budget it sizes itself. A cheap pilot fits the exponential query-error decay and freezes that budget before any suspect query is issued. On an intra-family Qwen3 ladder \IRIS{} verifies the backend at $0.99$ AUROC and sharpens attribution as queries accumulate; across a commercial OpenRouter library it catches $\epsilon{=}0.3$ dilution on margin-qualified pairs at $0.85$ mean power ($0.017$ false-positive rate) and recovers $\epsilon$ to within $0.04$ for enrolled diluents; and a live cross-provider audit flags $14$ of $15$ same-model provider pairs by genuine quantization and kernel deviations, corroborated on third-party MET traces. Against comparable black-box auditors, \IRIS{} matches or beats detection on shared tasks, and adaptive allocation lifts the matched-budget target-hit rate from $73\%$ to $87\%$. Further experiments cover adversarial gateways, knob identifiability, unseen diluents, and false-positive control.
\end{abstract}

\ifdefined\IRISArxiv
\begin{links}
  \link{Code and data}{https://github.com/Photen/IRIS-audit}
\end{links}
\fi
\section{Introduction}
\label{sec:intro}

Commercial API gateways and shadow APIs increasingly aggregate LLMs from competing vendors behind one endpoint and in some markets are resold as low-cost commodity routers~\citep{supplychain2026}. This indirection breaks the link between the model a client requests and the backend that actually serves the call. Measurements document silent downgrades, model switches, billing deviations, and fingerprint failures~\citep{realmoney2026,gatescope2026}; formal work studies cheaper, quantized, or randomized substitutes~\citep{substitution2025}; and GhostPrint shows a weaker model can be fine-tuned to spoof fingerprinting audits~\citep{ghostprint2026}. We call whole-stream replacement \emph{substitution} and fractional replacement \emph{dilution}. Either may arise through misconfiguration, cost control, or misrepresentation, and either requires an audit that infers from returned text alone which model served the request and how much traffic was replaced.

\begin{figure}[t]
  \centering
  \includegraphics[width=\columnwidth]{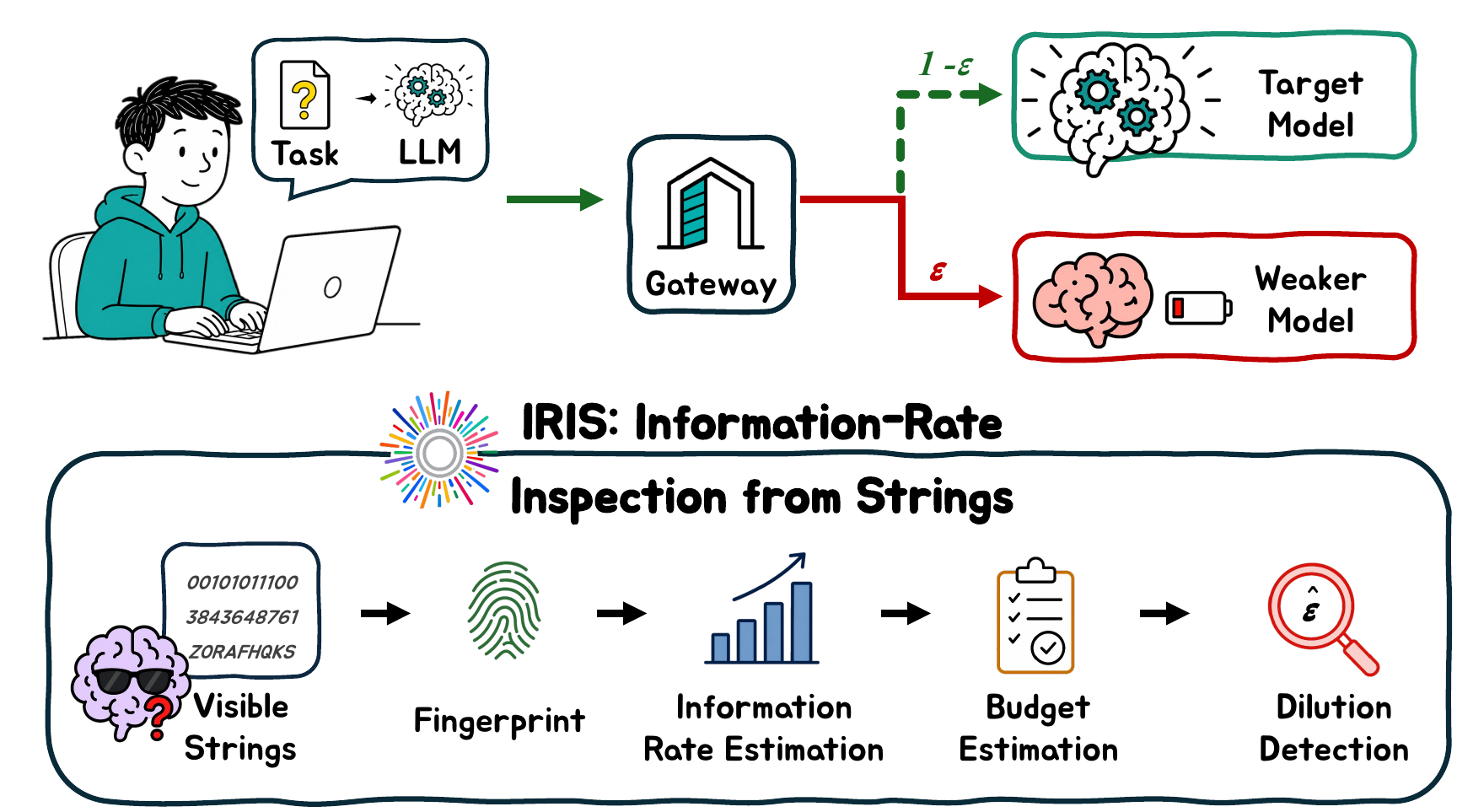}
  \caption{\IRIS{} audits gateway routing from visible strings only: a user task may be served by the intended model or by a weaker substitute; \IRIS{} turns the returned strings into fingerprints, estimates the endpoint information rate, sizes the audit budget, and detects routing dilution.}
  \label{fig:overview}
\end{figure}

Output-only auditors make served-model checking practical~\citep{met2025,rut2026,b3it2026,kbf2026}, yet four gaps still separate them from a general endpoint-level audit: (i)~probes and references are often endpoint-, reference-, or method-specific rather than reusable across endpoints; (ii)~most methods return only a binary substituted-or-not verdict; (iii)~the method that does estimate a routing fraction assumes a known substitute under fixed routing; and (iv)~query counts are fixed by design, so the auditor cannot tell in advance how many interactions a target error rate requires, or whether a given dilution is detectable within a reasonable budget.

\IRIS{} (\textbf{I}nformation-\textbf{R}ate \textbf{I}nspection from \textbf{S}trings) addresses these gaps by making the audit budget an object of estimation. The auditor probes each endpoint with a random-generation challenge whose visible output exposes backend-specific sampling biases~\citep{hopkins2023random,baddice2026}. \emph{Audit-Plan Construction} turns these strings into a reusable candidate library, calibrates thresholds, and freezes the query budget before any suspect traffic is queried. \emph{Audit Execution} then spends that budget on the gateway, attributes the served backend, and, when the diluting substitute is enrolled, estimates the routing fraction (Figure~\ref{fig:overview}).
\paragraph{Contributions.}
\begin{itemize}
  \item \textbf{A reusable audit for cross-model substitution and dilution.} To our knowledge, \IRIS{} is the first visible-string random-generation audit that uses reusable probes to detect cross-model gateway substitution, estimate the routing fraction, attribute the served backend, and set the live-query budget before the audit.
  \item \textbf{An estimate-then-budget strategy for endpoint audits.} A cheap pilot fits the exponential rank-error decay, calibrates response-level tell rates (rates of reference-atypical responses), and prescribes the query count a target reliability requires, all before Audit Execution begins.
  \item \textbf{Theory and guarantees for audit effectiveness.} Evidence accumulates across independent queries, not within one long response. When enrollment observes enough response-level tells, dilution cost grows only logarithmically in the target miss probability and roughly inversely with the routing fraction; the measured tail exponent identifies departures from this regime.
  \item \textbf{Validation on real commercial models.} Over claimed--substitute pairs from a $K{=}6$ ladder and $17$ OpenRouter APIs, the pilot exponent predicts query difficulty and, on the ladder, drives adaptive allocation that meets the target on $87\%$ of pairs versus $73\%$ for a fixed rule at matched budget. At scale, on margin-qualified pairs, \IRIS{} catches $\epsilon{=}0.3$ dilution at mean power $0.85$ with a $0.017$ false-positive rate, names the substitute, and flags an unenrolled one.
\end{itemize}

\section{Related Work}
\label{sec:related}

\paragraph{The market reality of substitution.}
Measurement studies establish that substitution is real, but they characterize gateway populations rather than giving an error-controlled audit of one endpoint. \citet{realmoney2026} find cross-model impersonation on real shadow APIs, including a GPT-5 endpoint fingerprinting as GLM-4-9B, and report that about $45.8\%$ of endpoints fail fingerprint verification; GateScope records silent downgrades, model switches, and billing deviations across live gateways~\citep{gatescope2026}; and \citet{supplychain2026} broaden the market view to $428$ commodity LLM API routers. Model-substitution theory formalizes the same threat class~\citep{substitution2025}.

\paragraph{Identifying the served model.}
Black-box identification methods differ mainly in which behavioral signal they read. Input-based methods craft the query, using either discriminative prompt banks (LLMmap; \citealp{llmmap2025}) or adversarial suffixes (TRAP; \citealp{trap2024}). Knowledge-based methods read what the model knows, through factual-capacity probes (IKP; \citealp{ikp2026}), combined trigger-pattern and knowledge-level fingerprints (DuFFin; \citealp{duffin2026}), or recent-event recall that locates a training cutoff (LLMLagBench; \citealp{llmlagbench2025}). Output-based methods read how the model writes, from characteristic wrong-answer patterns~\citep{madcrowds2024}, word-level and semantic idiosyncrasies~\citep{idiosyncrasies2025}, or provenance and derivation between models~\citep{provenance2025}. Close to our signal, FLIPS~\citep{flips2026} trains a supervised classifier on pseudo-random sequence bias to fingerprint deployment configurations at a fixed query count, while \IRIS{} can further estimate the audit budget and the routing fraction. Most methods require a model-specific probe or reference and return a yes/no verdict, not a routing fraction that a client may tolerate, price, or contest.

\paragraph{Detecting substitution and dilution.}
Substitution and dilution auditors also vary by access assumption. KBF~\citep{kbf2026} uses numerical recall at the knowledge boundary and diagnostically estimates the routing fraction for a known substitute under fixed routing; B3IT~\citep{b3it2026} uses low-temperature border inputs; Log Probability Tracking~\citep{logprobtrack2026} uses a single token's mean log-probability; MET~\citep{met2025} and RUT~\citep{rut2026} use reference-based equality and rank-uniformity. Classical testing theory gives exponential error decay in independent samples~\citep{chernoff1952,hoeffding1965,cover2006elements}, but dilution is harder than substitution: only an $\epsilon$ fraction of responses carries substitute evidence, a low-rate regime that \citet{substitution2025} treat as a reason to fall back on trusted hardware. The novelty is therefore not dilution alone: \IRIS{} pairs reusable random-generation probes with an estimate-then-budget plan frozen before any suspect query, and one batch of visible strings supports detection, attribution, and enrolled-diluent rate estimation.

\section{Problem Formulation}
\label{sec:prelim}

We formalize endpoint auditing as follows. A client requests a nominal model through an API gateway. A trusted reference endpoint serves that model faithfully, producing visible outputs from distribution $P$ that the auditor may sample during Audit-Plan Construction. A suspect endpoint advertises the same model but may instead serve a different, cheaper base model $R$ on some or all requests. We model its served distribution as
\begin{equation}
  Q_\epsilon=(1-\epsilon)\,P+\epsilon\,R,\qquad \epsilon\in[0,1],
\label{eq:mixture}
\end{equation}
so $\epsilon{=}0$ is honest service, $\epsilon{=}1$ is \emph{substitution}, and $\epsilon\in(0,1)$ is \emph{dilution}, with the substitute $R$ serving as the \emph{diluent}. Thus, substitution replaces the advertised model on every request, whereas dilution replaces it on only a fraction $\epsilon$ of requests. Temperature, decoding, or reasoning-effort changes may affect output quality, but under token-count billing they do not create extra gateway profit and therefore are not treated as auditing targets.

\paragraph{Observation model.}
The audit is strictly black-box: the auditor fixes a probe (a query prompt) and issues $m$ independent calls to the suspect endpoint, observing only the returned visible strings $y_1,\dots,y_m\mathrel{\smash{\overset{\scriptscriptstyle\mathrm{i.i.d.}}{\sim}}}Q_\epsilon$, with no weights, log-probabilities, or token ranks exposed. The reference $P$ and a library $\{R_j\}$ of candidate substitutes can be sampled under known labels during Audit-Plan Construction, whereas the suspect stream is unlabeled.

\paragraph{Audit goals.}
From the $m$ suspect outputs, the auditor performs three increasingly informative tasks: (1)~\emph{detect} any deviation from the nominal model by testing $H_0{:}\,\epsilon{=}0$ against $H_1{:}\,\epsilon{>}0$ at false-positive rate $\alpha$ (flagging an honest endpoint) and miss probability $\delta$ (failing to flag substitution or dilution); (2)~\emph{estimate} the routing fraction $\epsilon$; and (3)~given the candidate library $\{R_j\}$, \emph{attribute} the served substitute $R$. The returned decision is $(\text{flag}\in\{0,1\},\,\hat\epsilon,\,\hat R)$, with $\hat R=\varnothing$ when no substitute is named.

\section{The \IRIS{} Audit}
\label{sec:method}

\IRIS{} has two stages. \emph{Audit-Plan Construction} collects labeled responses from the trusted reference and candidate models, builds their fingerprints, and generates a user-selected audit plan for either substitution or dilution. \emph{Audit Execution} uses the plan to query the suspect endpoint, detect the selected deviation type, and identify the served backend when applicable.

Although substitution is the $\epsilon{=}1$ special case of dilution (Eq.~\eqref{eq:mixture}), we retain a dedicated mode because market audits often need the simpler, direct decision of whether an endpoint consistently serves the advertised backend. Accordingly, substitution aggregates episode-level evidence under a one-backend assumption, while dilution uses response-level tells to detect and quantify fractional routing.

\subsection{Audit-Plan Construction}
\label{sec:audit-plan-construction}

\paragraph{Probe design and labeled data collection.}
The audit uses generation probes that ask the endpoint for short random strings: a coin-style binary draw, a single digit, an integer in a range, or a short bit/digit sequence. We write $c_{n,L}$ for a probe with per-draw alphabet or range size $n$ and requested length $L$, so a single binary draw is $c_{2,1}$; below, $c$ denotes an arbitrary probe. Because these tasks have no factual answer to memorize, the returned strings expose backend-specific sampling biases under black-box access.

\IRIS{} uses a task-agnostic probe set $\mathcal C$ that varies alphabet size and response length; the full set is provided in App.~\ref{app:probes}. For every $c\in\mathcal C$, Audit-Plan Construction collects labeled responses from the trusted reference $P$ and candidate substitutes $\{R_j\}_{j=1}^{J}$. An audit plan covers candidate indices $\mathcal J_{\mathrm{audit}}\subseteq\{1,\dots,J\}$ and follows the user-selected mode $z\in\{\mathrm{s},\mathrm{d}\}$.

\paragraph{Visible-string classifier and score.}
\IRIS{} maps each response $y$ to a fixed $179$-dimensional visible-string vector $\phi(y)$ of format-compliance, symbol-frequency, transition, run-structure, and positional-balance statistics (listed in App.~\ref{app:features}). For a fixed probe $c$, the labeled responses train a multiclass classifier $g_c$ whose output is a posterior over the enrolled endpoint labels $M\in\mathcal{M}=\{P,R_1,\dots,R_J\}$:
\begin{align}
  \hat{\mathbb P}_c(M\mid y)&=\big[g_c(\phi(y))\big]_M .
\label{eq:posterior-map}
\end{align}
The deployed score is the reference negative log-posterior,
\begin{align}
  s_c(y)&=-\log\hat{\mathbb P}_c(P\mid y).
\label{eq:score}
\end{align}
Larger values mean that the response is less typical of the trusted reference for probe $c$. Given $m$ independent responses $y_1,\ldots,y_m$, \IRIS{} averages the score into the episode-level evidence statistic
\begin{align}
  S_m(c)&=\frac{1}{m}\sum_{i=1}^{m}s_c(y_i).
\label{eq:evidence}
\end{align}
When the selected probe is clear from context, we write $s(y)$ and $S_m$. Audit-Plan Construction also stores the posterior signature $\bar g_{M,c}$, the mean of $u_c(y)=\hat{\mathbb P}_c(\cdot\mid y)$ over endpoint-$M$ enrollment responses. The stored posterior vector, signature, and score are available to either execution mode. Audit-Plan Construction calibrates the two branch-specific budgets independently, but the deployed plan freezes only the branch requested by audit mode $z$.

\paragraph{Substitution audit budget.}
The substitution branch uses a mean-evidence threshold and a query budget. For each probe $c$ and enrolled pair $(P,R_j)$, a cheap labeled pilot estimates the rate at which episode-level rank error decays. Define
\begin{align}
  \mathrm{AUROC}_{cj}(m)
  &=\mathbb P\{S_m^{P,c}<S_m^{R_j,c}\}\notag\\
  &\quad+\tfrac12\mathbb P\{S_m^{P,c}=S_m^{R_j,c}\},
\label{eq:auroc-pilot}
\end{align}
where $S_m^{P,c}$ and $S_m^{R_j,c}$ are independent $m$-query episode means from $P$ and $R_j$. Because $1-\mathrm{AUROC}$ decays exponentially in $m$ (see Prop.~\ref{prop:exp}), the pilot fits
\begin{align}
  \log(1-\mathrm{AUROC}_{cj}(m))
  &\approx \hat a_{cj}-\widehat I_{\mathrm{auc},cj}m,~ m\le4,
\label{eq:auc-fit}
\end{align}
where $\hat a_{cj}$ is the fitted intercept and $\widehat I_{\mathrm{auc},cj}$ is the operational rank-error exponent. Because the $m\le4$ extrapolation is mildly optimistic, \IRIS{} uses a one-sided $(1{-}\gamma)$ lower bound $\underline I_{\mathrm{auc},cj}$ and upper bound $\overline a_{cj}$, both $t$-based over split-level fits (App.~\ref{app:lcb}). With target rank error $\delta_{\mathrm{auc}}{=}\delta$, the substitution budget is
\begin{align}
  m_{\mathrm{sub}}(c)=
  \max_{j\in\mathcal J_{\mathrm{audit}}}
  \left\lceil\frac{\overline a_{cj}-\log\delta_{\mathrm{auc}}}
  {\underline I_{\mathrm{auc},cj}}\right\rceil .
\label{eq:sub-budget}
\end{align}

The mean-evidence threshold is also generated during Audit-Plan Construction. For a probe $c$ and live budget $m$, group held-out reference responses into $B_{\mathrm{mean}}$ batches, compute $S_{m,b}^{P,c}$ by Eq.~\eqref{eq:evidence}, and take
\begin{align}
  \tau_{\mathrm{mean}}(c,m)
  =\widehat Q_{1-\alpha}\!\left(\{S_{m,b}^{P,c}\}_{b=1}^{B_{\mathrm{mean}}}\right).
\label{eq:mean-threshold}
\end{align}
After probe selection, the substitution plan freezes $\tau_{\mathrm{mean}}(c^\star,m^\star)$, so an honest reference episode exceeds it with calibrated probability $\le\alpha$.

\paragraph{Dilution audit budget.}
Dilution needs a response-level test because a substitute may serve only an $\epsilon$ fraction of queries. On a calibration split separate from the data used to train $g_c$, a threshold $\tau$ turns a response into a \emph{tell} when $s_c(y)>\tau$. \IRIS{} searches only reference quantiles: for a small honest-tail grid $\mathcal A_{\mathrm{resp}}$,
\begin{align}
  \mathcal T_c
  =\left\{\widehat Q_{1-\alpha_0}\!\left(\{s_c(y_r^P)\}_{r=1}^{n_P}\right)
  :\alpha_0\in\mathcal A_{\mathrm{resp}}\right\}.
\label{eq:resp-threshold-grid}
\end{align}
Here $y_{1:n_P}^P$ are held-out reference responses. For each candidate substitute $R_j$ with held-out responses $y_{1:n_j}^{R_j}$, threshold $\tau$ has calibrated honest and substitute tell rates
\begin{align}
  \widehat\alpha_{1,cj}(\tau)
  &=\frac{1}{n_P}\sum_{r=1}^{n_P}\mathbf{1}\{s_c(y_r^P)>\tau\},
\label{eq:honest-tell-rate}\\
  \widehat q_{cj}(\tau)
  &=\frac{1}{n_j}\sum_{r=1}^{n_j}\mathbf{1}\{s_c(y_r^{R_j})>\tau\}.
\label{eq:substitute-tell-rate}
\end{align}
Let $\mathcal T_{cj}^{+}$ collect thresholds satisfying $\widehat q_{cj}(\tau)>\widehat\alpha_{1,cj}(\tau)$; an empty set marks $(c,j)$ low-margin. At a usable threshold, an $\epsilon$-diluted response is a tell with rate $\widehat p_{\epsilon,cj}(\tau)=(1-\epsilon)\widehat\alpha_{1,cj}(\tau)+\epsilon\,\widehat q_{cj}(\tau)$. Let $U_m(b;p)=\mathbb P_{K\sim\mathrm{B}(m,p)}(K\ge b)$ and $b_\alpha(m,p_0)=\min\{b:U_m(b;p_0)\le\alpha\}$. The resulting level-$\alpha$ power is
\[
  \pi_{\epsilon,cj}(m,\tau)
  =U_m\!\left(b_\alpha(m,\widehat\alpha_{1,cj}(\tau));
  \widehat p_{\epsilon,cj}(\tau)\right).
\]
For target dilution $\epsilon_{\min}$, the query budget at threshold $\tau$ is
\begin{align}
  m_{\mathrm{dil}}(c,j;\tau)
  &=\min\Big\{m:\pi_{\epsilon_{\min},cj}(m',\tau)\ge1-\delta
  \notag\\
  &\hspace{2.7cm}\text{for all }m'\ge m\Big\},
\label{eq:mdil}
\end{align}
\IRIS{} sets $\tau_{cj}\in\arg\min_{\tau\in\mathcal T_{cj}^{+}}m_{\mathrm{dil}}(c,j;\tau)$ and writes the corresponding rates and budget as $\alpha_{1,cj},q_{cj},m_{\mathrm{dil}}(c,j)$.

\paragraph{Plan selection.}
Plan selection is mode-specific. For $z=\mathrm{s}$, all probes with valid pilot fits are eligible. For $z=\mathrm{d}$, \IRIS{} first forms $\mathcal C_{\mathrm{feas}}$, the probes with $\mathcal T_{cj}^{+}\neq\emptyset$ for every $j\in\mathcal J_{\mathrm{audit}}$. The live-query budget for probe $c$ is
\begin{align*}
  m^\star(c;z)=
  \begin{cases}
  m_{\mathrm{sub}}(c), & z=\mathrm{s},\\
  \max_{j\in\mathcal J_{\mathrm{audit}}}m_{\mathrm{dil}}(c,j),
  & z=\mathrm{d}.
  \end{cases}
\end{align*}
\IRIS{} selects a lowest-budget eligible probe $c^\star$ for the chosen mode, breaking ties by lower expected response cost, and denotes the selected budget $m^\star(c^\star;z)$ by $m^\star$. The selected $m^\star$ fixes the online standards: $\tau_{\mathrm{mean}}(c^\star,m^\star)$ for substitution and, for each audited $j$, $(\tau_{c^\star j},b_j^\star)$ for dilution, with $b_j^\star=b_\alpha(m^\star,\alpha_{1,c^\star j})$.

\subsection{Audit Execution}
\label{sec:audit-execution}
Audit Execution is the only stage that queries the suspect endpoint. It sends probe $c^\star$ exactly $m^\star$ times, collects $\{y_i\}_{i=1}^{m^\star}$, computes the score and posterior quantities fixed during Audit-Plan Construction, and applies only the decision rule for mode $z$.

\paragraph{Substitution audit execution.}
When $z=\mathrm{s}$, \IRIS{} computes $S_{m^\star}(c^\star)$ by Eq.~\eqref{eq:evidence} and sets
$\mathrm{flag}=\mathbf{1}\{S_{m^\star}(c^\star)>\tau_{\mathrm{mean}}(c^\star,m^\star)\}$. All responses are treated as coming from one backend and attributed by aggregated log posterior,
\begin{align}
  \hat M_{\mathrm{sub}}
  &=\arg\max_{M\in\mathcal M}
  \sum_{i=1}^{m^\star}\log \hat{\mathbb P}_{c^\star}(M\mid y_i).
\label{eq:posterior-attribution}
\end{align}
For a flagged episode, the substitute is named as $\hat M_{\mathrm{sub}}$ if $\hat M_{\mathrm{sub}}\neq P$, or $\varnothing$ if the posterior favors the reference. 

\paragraph{Dilution audit execution.}
Each candidate $j\in\mathcal J_{\mathrm{audit}}$ uses its frozen tell threshold:
$k_j=\sum_{i=1}^{m^\star}\mathbf{1}\{s_{c^\star}(y_i)>\tau_{c^\star j}\}$,
$\hat f_j=k_j/m^\star$, and $\mathrm{flag}_j=\mathbf{1}\{k_j\ge b_j^\star\}$.

Let $\mathcal J_+=\{j:\mathrm{flag}_j=1\}$ and $\mathrm{flag}=\mathbf{1}\{\mathcal J_+\neq\emptyset\}$. For every flagged candidate with $q_{c^\star j}>\alpha_{1,c^\star j}$, the same tell fraction estimates the routing fraction. Suppressing the fixed $(c^\star,j)$ indices,
\begin{align}
  \hat\epsilon&=\mathrm{clip}_{[0,1]}\frac{\hat f-\alpha_1}{q-\alpha_1}.
\label{eq:epshat}
\end{align}
For an enrolled diluent, Eq.~\eqref{eq:epshat} gives a calibrated estimate; if the diluent is unenrolled, $q$ is unknown and $q\le1$ gives the lower bound $\hat\epsilon_{\mathrm{lb}}=\mathrm{clip}_{[0,1]}((\hat f-\alpha_1)/(1-\alpha_1))$. A delta-method interval propagates uncertainty in $(\hat f,\alpha_1,q)$.

The branch identifies the diluent from the full posterior vectors of the same batch. Let $\bar v=(m^\star)^{-1}\sum_i u_{c^\star}(y_i)$ and abbreviate the frozen signature $\bar g_{M,c^\star}$ as $\bar g_M$. The execution-stage attribution weights $\hat\eta$ are the nonnegative solution to the least-squares problem
\begin{align}
\min_{\mathbf{1}^{\top}w\le1}
  \Big\|\bar v-(1-\mathbf{1}^{\top}w)\bar g_P-\sum_jw_j\bar g_{R_j}\Big\|_2^2 .
\label{eq:signature-unmixing}
\end{align}
Among flagged candidates, $\hat j=\arg\max_{j\in\mathcal J_+}\hat\eta_j$. Let $h=\max_{M\in\mathcal M}\bar v_M$ and let $\tau_{\mathrm{os}}$ be its fifth percentile on matched held-out enrolled episodes. The diluent is named as $R_{\hat j}$ when $h\ge\tau_{\mathrm{os}}$ and left unknown otherwise.

Algorithm~\ref{alg:iris} uses the thresholds defined above: $\tau_{\mathrm{mean}}$ for substitution, and $(\tau_{cj},b_j^\star,\tau_{\mathrm{os}})$ for dilution.

\begin{breakablealgorithm}
\caption{The \IRIS{} audit}
\label{alg:iris}
\small
\algrenewcommand\algorithmiccomment[1]{\(\triangleright\)\ \textit{#1}}
\begin{algorithmic}[1]
\Require $P$, $\{R_j\}$, suspect endpoint, $\mathcal C$, $\mathcal J_{\mathrm{audit}}$, $z$, $\alpha$, $\delta$, $\gamma$, $\epsilon_{\min}$
\Ensure $(\mathrm{flag},\hat\epsilon,\hat R)$
\Statex \Comment{Stage I: construct the audit plan}
\For{$c\in\mathcal C$}
    \State collect labeled responses from $P$ and $\{R_j\}$
    \State train $g_c$; store score $s_c$ and posterior signatures
    \State calibrate the thresholds and set $m_c\leftarrow m^\star(c;z)$
\EndFor
\State $c^\star\leftarrow\arg\min\{m_c:c\text{ is eligible}\}$
\State $m^\star\leftarrow m_{c^\star}$; freeze the thresholds for $c^\star$
\Statex \Comment{Stage II: execute the frozen audit plan}
\State collect $\{y_i\}_{i=1}^{m^\star}$ from the suspect endpoint using $c^\star$
\If{$z=\mathrm{s}$}
    \State apply the frozen substitution rule to obtain $(\mathrm{flag},\hat R)$
    \State $\hat\epsilon\leftarrow\mathrm{flag}$
\Else
    \For{$j\in\mathcal J_{\mathrm{audit}}$}
        \State compute $\mathrm{flag}_j$, $\hat\epsilon_j$, and lower bound $\ell_j$
    \EndFor
    \State $\mathcal J_+\leftarrow\{j\in\mathcal J_{\mathrm{audit}}:\mathrm{flag}_j=1\}$
    \State $(\mathrm{flag},\hat\epsilon,\hat R)\leftarrow(\mathbf{1}\{\mathcal J_+\neq\emptyset\},0,\varnothing)$
    \If{$\mathrm{flag}=1$}
        \State compute $\hat\eta$; set $\hat j\leftarrow\arg\max_{j\in\mathcal J_+}\hat\eta_j$
        \State $(\hat\epsilon,\hat R)\leftarrow(\hat\epsilon_{\hat j},R_{\hat j})$ if the gate accepts; else $(\ell_{\hat j},\varnothing)$
    \EndIf
\EndIf
\State \Return $(\mathrm{flag},\hat\epsilon,\hat R)$
\end{algorithmic}
\end{breakablealgorithm}

\paragraph{Service deployment.}
When \IRIS{} is exposed as an on-demand audit service, Audit-Plan Construction can run as a low-frequency background job that periodically refreshes. Each user request then triggers only Audit Execution using the latest frozen plan. This temporal decoupling amortizes construction cost across audits and keeps user-triggered service fast and inexpensive.

\section{Audit Sample Complexity}
\label{sec:theory}

The audit budget is a sample-complexity question: at false-positive rate $\alpha$ and miss probability $\delta$, how many independent suspect queries are needed, before the audit begins, to distinguish the suspect endpoint from the trusted reference? Evidence accumulates across the number of queries $m$, not the length of one response (Prop.~\ref{prop:len}). For substitution, the mean score $S_m$ has a positive error exponent (Prop.~\ref{prop:exp}), so verification error falls exponentially in $m$ and a small pilot can estimate the rate. For dilution, only an $\epsilon$ fraction of queries carries substitute evidence, so the response-level tell count becomes the right statistic, and the gap between substitute and reference tell rates sets the cost (Thm.~\ref{thm:mix}): a wide margin gives the favorable $m=\Theta(\epsilon^{-1})$ regime, a narrow margin the $m=\Theta(\epsilon^{-2})$ wall. All proofs are in App.~\ref{app:proofs}.

\paragraph{Response model.}
Open-ended generation ranges over a model's full tokenizer vocabulary, whose support is too large and task-dependent for a direct audit; the probes of Section~\ref{sec:audit-plan-construction} instead concentrate the visible output on a small known alphabet or range, so the parsed symbols support tractable score calibration, tell-rate estimation, and sample-complexity analysis. Fix a probe context $c$ and a requested response length $L$; a query to an endpoint serving model $M$ returns a parsed visible response $y=y_{1:L}$ drawn from the autoregressive law
\begin{equation}
  P_M(y\mid c)=\textstyle\prod_{t=1}^{L} P_M\!\left(y_t\mid y_{<t},c\right).
  \label{eq:ar}
\end{equation}
Under the fixed probe $c$, let $P$ and $R$ denote the reference and substitute response distributions. Symbols within one response are dependent through $y_{<t}$, whereas separate API calls are independent; the audit budget therefore counts calls $m$, not tokens within one call.

\subsection{Substitution}

Four rates play distinct roles: the unattainable oracle benchmark, the information retained by the scalar score, the rate achieved by the deployed fixed-threshold mean test, and the empirical slope \IRIS{} uses for budget prediction.

\begin{definition}[Audit exponents]
\label{def:I}
For a probe $c$, response laws $P,R$, and score $s(y)=-\log\hat{\mathbb P}(P\mid y)$, define:
\begin{enumerate}\itemsep2pt
\item The \emph{oracle exponent} is the full-response Chernoff rate
\[
  I^\star=-\min_{0\le\lambda\le1}\log\sum_y P(y)^{1-\lambda}R(y)^\lambda .
\]
\item The \emph{score exponent} $I^{\mathrm{sc}}$ is the same Chernoff information after mapping the response to the scalar score $s(y)$.
\item The \emph{mean exponent} $I^{\mathrm{mean}}$ is the large-deviations rate at which the deployed fixed-threshold test on the score mean $S_m=\frac1m\sum_{i=1}^m s(y_i)$ decays, i.e.\ the maximin crossing $\max_\tau\min\{I_P^\ast(\tau),I_R^\ast(\tau)\}$ of the Cram\'er rate functions of $s$ under $P$ and $R$ (App.~\ref{app:proofs}, Eq.~\eqref{eq:Imean}).
\item The \emph{pilot exponent} $\widehat I_{\mathrm{auc}}$ is the fitted slope of $\log(1-\mathrm{AUROC}(m))$ on calibration episodes.
\end{enumerate}
\end{definition}

The pilot rate is a budgeting fit, not an information-theoretic lower bound: because AUROC compares an $m$-sample statistic from $P$ with an independent one from $R$, $\widehat I_{\mathrm{auc}}$ is not generally ordered below $I^{\mathrm{sc}}$. \IRIS{} therefore budgets against its lower-confidence bound (Eq.~\eqref{eq:sub-budget}) and verifies the transfer empirically.

\begin{proposition}[Mean-score separation]
\label{prop:exp}
Assume the fixed \IRIS{} score $s(y)$ is bounded and has different means under $P$ and $R$. Then the episode mean $S_m$ admits a fixed threshold whose error decays as
\begin{align}
  P_{\mathrm{err}}(S_m)=\exp\{-I^{\mathrm{mean}}m(1+o(1))\}, ~I^{\mathrm{mean}}>0 .
\end{align}
The deployed rate is bounded by the score-level and full-response oracle rates:
$I^{\mathrm{mean}}\le I^{\mathrm{sc}}\le I^\star$.
\end{proposition}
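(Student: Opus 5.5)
The plan is to reduce the statement to Cramér's theorem for bounded i.i.d.\ scores, then obtain the two inequalities by viewing the mean test as one specific test on the score and applying the data-processing inequality. Write $\mu_P=\mathbb E_P[s]$ and $\mu_R=\mathbb E_R[s]$, and assume without loss of generality $\mu_P<\mu_R$ (the score is the reference negative log-posterior, so this is the natural orientation). Because $s$ is bounded, the log-moment generating functions $\Lambda_M(\theta)=\log\mathbb E_M[e^{\theta s}]$ are finite and smooth on all of $\mathbb R$. The Cramér rate functions are $I_M^\ast(\tau)=\sup_\theta\{\theta\tau-\Lambda_M(\theta)\}$. Each is convex and lower semicontinuous, vanishes only at $\mu_M$, and is strictly positive elsewhere.

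\textbf{Existence of the threshold and positivity of the rate.} Consider the test that flags when $S_m>\tau$. Its two error probabilities are $P(S_m>\tau)$ and $R(S_m\le\tau)$. By Cramér's theorem, with the upper bound given by Chernoff's inequality and the lower bound by the standard tilting argument, these decay as $\exp\{-mI_P^\ast(\tau)(1+o(1))\}$ for $\tau>\mu_P$ and as $\exp\{-mI_R^\ast(\tau)(1+o(1))\}$ for $\tau<\mu_R$. The larger of the two dominates the total error, so it decays at rate $\min\{I_P^\ast(\tau),I_R^\ast(\tau)\}$.

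On $(\mu_P,\mu_R)$, the function $I_P^\ast$ is continuous and increasing from $0$, while $I_R^\ast$ is continuous and decreasing to $0$. Hence they cross at some $\tau^\ast$ in the open interval. At that crossing both values are strictly positive, which gives $I^{\mathrm{mean}}=\min\{I_P^\ast(\tau^\ast),I_R^\ast(\tau^\ast)\}>0$. This $\tau^\ast$ is the maximizer in the definition of $I^{\mathrm{mean}}$ (Definition~\ref{def:I}), and fixing the threshold there yields the displayed $\exp\{-I^{\mathrm{mean}}m(1+o(1))\}$.

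One technical point needs care: the large-deviation lower bound at an interior point in the lattice or atomic case. Boundedness keeps the tilt $\theta$ finite there, so the standard argument applies.

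\textbf{The inequality $I^{\mathrm{sc}}\le I^\star$.} Let $P^s$ and $R^s$ be the pushforwards of $P$ and $R$ through $s$. For each $\lambda\in[0,1]$, the quantity $\sum_y P(y)^{1-\lambda}R(y)^\lambda$ can only increase under a deterministic map. Grouping the $y$ by score value, Hölder's inequality gives $\sum_{y\in A}P^{1-\lambda}R^\lambda\le P(A)^{1-\lambda}R(A)^\lambda$ on each group $A$. The Chernoff coefficient is therefore larger after coarsening, and after taking $-\min_\lambda\log$ the Chernoff information is smaller.

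\textbf{The inequality $I^{\mathrm{mean}}\le I^{\mathrm{sc}}$.} The mean test is one particular test that uses only the scores $s(y_1),\dots,s(y_m)$, which are i.i.d.\ draws from $P^s$ or $R^s$. By the Chernoff theorem, the Bayes-optimal error of any test based on those $m$ samples decays at exactly rate $I^{\mathrm{sc}}$. Hence every test, including the threshold test at $\tau^\ast$, has error at least $\exp\{-mI^{\mathrm{sc}}(1+o(1))\}$. Comparing exponents gives $I^{\mathrm{mean}}\le I^{\mathrm{sc}}$.

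The main obstacle I expect is matching conventions. The error criterion must be the same in both places: either the maximum of the two error probabilities or their (Bayes-weighted) sum, both of which share the exponent. I would fix the criterion as the maximum of the two errors and note that sum and max have the same exponent. Then the lower-bound direction of Chernoff's theorem applies to the fixed threshold test as stated.
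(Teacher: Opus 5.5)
Your proposal is correct and follows essentially the same route as the paper's proof: Cramér's theorem for the bounded score, with the maximin crossing of $I_P^\ast$ and $I_R^\ast$ giving $I^{\mathrm{mean}}>0$; the Hölder coarsening (data-processing) inequality for $I^{\mathrm{sc}}\le I^\star$; and optimality of the score-level Bayes (likelihood-ratio) test for $I^{\mathrm{mean}}\le I^{\mathrm{sc}}$. One loose end, which the paper shares, is that your finite-tilt and ``crossing'' claims need $\tau^\ast$ to lie inside the convex hull of both score supports; otherwise one rate is $+\infty$, and the maximin is attained at a support edge rather than at an equality crossing.
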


The oracle rate $I^\star$ provides comparison benchmarks: a full-string Bayes test would achieve $e^{-I^\star m(1+o(1))}$, and $K$-way identification among enrolled endpoints pays roughly $m\gtrsim\log(K/\delta)/I^\star_{\min}$.

\begin{proposition}[Query accumulation]
\label{prop:len}
Let $I_{\mathrm{seq}}(L)$ be the Chernoff information in one full length-$L$ response. For fixed $L$, $m$ independent API calls accumulate linearly,
\[
  C(P^{\otimes m},R^{\otimes m})=m\,I_{\mathrm{seq}}(L).
\]
Longer same-context responses can only add oracle information, but one autoregressive response is not $L$ independent position tests.
\end{proposition}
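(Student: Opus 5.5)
The plan is to treat the $m$ calls as a product experiment and use the standard additivity of the Chernoff exponent under independent products. Write $C(P,R)=-\min_{0\le\lambda\le1}\log\sum_y P(y)^{1-\lambda}R(y)^\lambda$, with $P,R$ the laws of one full parsed response $y=y_{1:L}$ given by Eq.~\eqref{eq:ar}, so that $I_{\mathrm{seq}}(L)=C(P,R)$. For the product laws on $(y^{(1)},\dots,y^{(m)})$, the sum over the product alphabet factorizes:
\[
\sum_{y^{(1:m)}}\prod_{i=1}^m P(y^{(i)})^{1-\lambda}R(y^{(i)})^\lambda=\Big(\sum_y P(y)^{1-\lambda}R(y)^\lambda\Big)^m .
\]
Taking $-\log$ gives $m$ times the single-response function $\Lambda(\lambda)=-\log\sum_y P^{1-\lambda}R^\lambda$. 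The minimizing $\lambda$ is the same for every $m$, because the objective is just scaled by $m>0$. Hence $C(P^{\otimes m},R^{\otimes m})=m\,I_{\mathrm{seq}}(L)$. I would note that independence of separate API calls, as stated in the observation model, is exactly what licenses the product form. This step is routine.

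For the second claim, that longer same-context responses can only add information, I would use monotonicity under marginalization. Take $L'>L$ and suppose the length-$L$ response is the prefix of the length-$L'$ response under the same context; this is what the autoregressive factorization gives when the probe is otherwise fixed. The prefix map is then a deterministic channel. The data-processing inequality for Rényi/Chernoff quantities gives, for each $\lambda$, $\sum P_{L'}^{1-\lambda}R_{L'}^\lambda\le\sum P_L^{1-\lambda}R_L^\lambda$, which follows from Hölder's inequality applied within each fiber of the prefix map. Therefore $I_{\mathrm{seq}}(L')\ge I_{\mathrm{seq}}(L)$.

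\textbf{Main obstacle.} The delicate point is the modeling caveat that the length-$L$ and length-$L'$ probes differ in their requested length, so the prefix law may not coincide. I would state the result under the prefix-consistency assumption and flag this dependence explicitly.

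For the final clause, that one response is not $L$ independent position tests, I would give a short counterexample rather than a theorem. Take a reference and substitute that differ only in the first-symbol marginal and then deterministically copy $y_1$ for $t\ge2$. Then $I_{\mathrm{seq}}(L)=I_{\mathrm{seq}}(1)$ for all $L$, whereas treating the positions as independent would predict $L\,I_{\mathrm{seq}}(1)$. More generally, the chain rule $\sum_y P^{1-\lambda}R^\lambda=\sum_{y_{<L}}P^{1-\lambda}R^\lambda\sum_{y_L}P(y_L\mid y_{<L})^{1-\lambda}R(y_L\mid y_{<L})^\lambda$ shows that the per-position contributions are conditional and history-weighted, not additive i.i.d. terms. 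This contrasts with the exact linearity in $m$ established above.
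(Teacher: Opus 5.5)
Your proposal is correct and matches the paper's proof essentially step for step. Exact linearity in $m$ comes from factorizing the Chernoff affinity over independent calls, and monotonicity in $L$ comes from H\"older's inequality on the fibers of the deterministic truncation map, with the same caveat the paper makes that different requested lengths are different contexts rather than prefixes of one response. The only difference is your counterexample for the last clause: your deterministic-copy law gives the sub-additive gap $I_{\mathrm{seq}}(L)=I_{\mathrm{seq}}(1)<L\,I_{\mathrm{seq}}(1)$, whereas the paper compares a correlated law with a product law on $\{0,1\}^2$ to get super-additivity $I_{\mathrm{seq}}(2)>2\,I_{\mathrm{seq}}(1)$, so the two examples together illustrate the paper's point that the gap can have either sign.
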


The length non-monotonicity observed in Section~\ref{sec:exp} concerns the deployed score and prompt-length conditions, not the oracle monotonicity in Prop.~\ref{prop:len}.

\subsection{Dilution}

Under dilution, each suspect query is drawn i.i.d.\ from the mixture $Q_\epsilon$ of Eq.~\eqref{eq:mixture}; the mean statistic $S_m$ still detects large shifts, but low-rate dilution is better read from the tells of Section~\ref{sec:audit-plan-construction}. For a fixed response-level threshold $\tau$ on the score $s(y)$, the population honest and substitute tell rates are $\alpha_1(\tau)=\mathbb P_{y\sim P}(s(y)>\tau)$ and $q(\tau)=\mathbb P_{y\sim R}(s(y)>\tau)$.

\begin{theorem}[Tell-rate budget]
\label{thm:mix}
Fix $\alpha,\delta,\epsilon$ and a threshold $\tau$. Dilution detection is governed by the honest and substitute tell rates $(\alpha_1(\tau),q(\tau))$:
\begin{enumerate}\itemsep2pt
\item[\textup{(a)}] If the suspect routes at least an $\epsilon$ fraction to $R$, the any-tell test (flag if any tell occurs) detects within $m$ queries with probability at least $1-(1-\epsilon q(\tau))^m\ge1-e^{-\epsilon q(\tau)m}$, so
\begin{equation}
  m^\star=\Big\lceil \ln(1/\delta)\,/\,\big(\epsilon\,q(\tau)\big)\Big\rceil
  \label{eq:mstar}
\end{equation}
queries guarantee power at least $1-\delta$.
\item[\textup{(b)}] On an honest endpoint, the false-positive probability is at most $1-(1-\alpha_1(\tau))^m\le m\alpha_1(\tau)$; hence $\alpha_1(\tau)\le\alpha/m$ controls type-I error at level $\alpha$.
\item[\textup{(c)}] A separating tail with $q(\tau)=\Omega(1)$ and $\alpha_1(\tau)=o(\epsilon)$ gives the favorable $m^\star=\Theta(\epsilon^{-1}\log(1/\delta))$ law. Without such a tail, light-tail low-separation mixtures face the local budget
\begin{equation}
  m^\star=\Theta\!\Big((z_{1-\alpha}+z_{1-\delta})^2\big/\big(\epsilon^{2}\,\chi^2(R\|P)\big)\Big)=\Theta(\epsilon^{-2}),
  \label{eq:eps2}
\end{equation}
with an all-test $\Omega(\epsilon^{-2})$ floor for sufficiently low-separation substitutes.
\end{enumerate}
\end{theorem}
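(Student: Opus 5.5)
Parts (a) and (b) reduce to per-response tell probabilities followed by independence across calls. Part (c) combines the explicit budget from (a) with a standard local two-point lower bound built on the $\chi^2$ divergence.

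\emph{Part (a).} For a single suspect response $y\sim Q_{\epsilon'}$ with $\epsilon'\ge\epsilon$, the mixture form of Eq.~\eqref{eq:mixture} gives
\[
\mathbb P(s(y)>\tau)=(1-\epsilon')\alpha_1(\tau)+\epsilon' q(\tau)\ge\epsilon' q(\tau)\ge\epsilon q(\tau).
\]
Separate API calls are independent (the observation model, and Prop.~\ref{prop:len}). Hence the probability that none of the $m$ calls is a tell is at most $(1-\epsilon q)^m\le e^{-\epsilon q m}$, using $1-x\le e^{-x}$. Setting $e^{-\epsilon q m}\le\delta$ and solving for $m$ yields Eq.~\eqref{eq:mstar}.

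\emph{Part (b).} On an honest endpoint each call is a tell independently with probability $\alpha_1(\tau)$. The any-tell test therefore fires with probability $1-(1-\alpha_1)^m$. A union bound (equivalently, Bernoulli's inequality) bounds this by $m\alpha_1$, so $\alpha_1\le\alpha/m$ gives level $\alpha$.

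\emph{Part (c), upper direction.} Here $q=\Omega(1)$, so (a) gives $m^\star=O(\epsilon^{-1}\log(1/\delta))$. For this $m^\star$ we still need (b) to hold, i.e.\ $\alpha_1 m^\star\le\alpha$. The hypothesis $\alpha_1=o(\epsilon)$ handles this up to the $\log(1/\delta)$ factor: either we read it as $\alpha_1\le c\,\alpha\,\epsilon/\log(1/\delta)$, or we replace the any-tell rule by the binomial threshold $b_\alpha$. In the latter case the null count has mean $m\alpha_1=o(1/q)$ while the alternative count has mean $\ge m\epsilon q$. A Chernoff bound on $\mathrm{B}(m,p)$ then separates them once $m\epsilon q\gtrsim\log(1/\delta)+\log(1/\alpha)$.

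For the matching $\Omega(\epsilon^{-1}\log(1/\delta))$ lower bound, note that every call is honest with probability $1-\epsilon$ regardless of $R$. All $m$ calls are then drawn from $P$ with probability $(1-\epsilon)^m$, and on that event the test cannot do better than its level. Miss probability $\le\delta$ therefore forces $(1-\epsilon)^m(1-\alpha)\lesssim\delta$, i.e.\ $m\gtrsim\epsilon^{-1}\log(1/\delta)$.

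\emph{Part (c), low-separation regime.} I would use a local asymptotic argument.
\begin{itemize}
\item[(i)] \textbf{Upper bound.} Take the likelihood-ratio-type linear statistic $T=\sum_i (dR/dP)(y_i)-m$. Under $P$ its mean is $0$ and its variance is $m\chi^2(R\|P)$. Under $Q_\epsilon$ its mean shifts by $m\epsilon\chi^2$, with variance $m\chi^2(1+O(\epsilon))$. The CLT (light tails) then gives power $1-\delta$ at level $\alpha$ once $\sqrt m\,\epsilon\sqrt{\chi^2}\ge z_{1-\alpha}+z_{1-\delta}$, which is Eq.~\eqref{eq:eps2}.
\item[(ii)] \textbf{Lower bound.} For any test, $\mathrm{TV}(P^{\otimes m},Q_\epsilon^{\otimes m})$ is bounded via
\[
\chi^2(Q_\epsilon^{\otimes m}\|P^{\otimes m})=(1+\epsilon^2\chi^2(R\|P))^m-1,
\]
using the identity $\chi^2(Q_\epsilon\|P)=\epsilon^2\chi^2(R\|P)$. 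If $m\epsilon^2\chi^2=o(1)$, the total variation is small and $\alpha+\delta<1$ cannot be achieved. This gives the $\Omega(\epsilon^{-2})$ floor for every test whenever $\chi^2(R\|P)$ is bounded ("low separation").
\end{itemize}

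\emph{Main obstacle.} The hardest step is making "light-tail low-separation" precise enough that the upper constant in (i) and the lower constant in (ii) match inside one $\Theta$, including the $(z_{1-\alpha}+z_{1-\delta})^2$ factor. I would handle this with Le Cam's local asymptotic normality for the one-parameter mixture family at $\epsilon=0$, whose Fisher information is exactly $\chi^2(R\|P)$. That yields the Gaussian shift experiment with signal $\sqrt m\,\epsilon\sqrt{\chi^2}$, and from it both directions of the $z$-score formula follow directly.
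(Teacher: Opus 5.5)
Your proposal is correct and follows the paper's proof essentially step for step. For (a) you use i.i.d.\ Bernoulli tells with per-query rate at least $\epsilon q$, and for (b) the Bernoulli/union bound $1-(1-\alpha_1)^m\le m\alpha_1$. For the separating tail you plug the budget of (a) into the feasibility condition $\alpha_1\le\alpha\epsilon q/\ln(1/\delta)$. For the low-separation regime you use the locally most powerful score $T=R/P-1$ with a CLT upper bound, plus a two-point total-variation floor driven by $\chi^2(Q_\epsilon\|P)=\epsilon^2\chi^2(R\|P)$.

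The differences from the paper are minor. You tensorize $1+\chi^2$ where the paper tensorizes the Bhattacharyya affinity; both give $\Omega(1/(\epsilon^2\chi^2(R\|P)))$. Your bound $p\ge\epsilon' q\ge\epsilon q$ avoids the paper's $q\ge\alpha_1$ monotonicity step. Your all-honest-event lower bound $(1-\epsilon)^m(1-\alpha)\le\delta$ is a valid addition the paper does not give for the separating-tail regime. The LAN step you flag as the main obstacle is unnecessary: the $\Theta$ is only in $\epsilon$ at fixed $(P,R)$, and the paper deliberately accepts mismatched constants, $(z_{1-\alpha}+z_{1-\delta})^2$ above versus $(1-\alpha-\delta)^2$ below.
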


The near-$1/\epsilon$ case is a measured tail event, not a generic consequence of $P\neq R$: low-entropy autoregressive conditionals can put most $P$ mass on a thin slice of response space, so an $R$-typical response may fall deep in the reference tail, with $q(\tau)$ staying $\Omega(1)$ while $\alpha_1(\tau)$ is pushed near zero. This is the regime behind many gateway dilutions. With finite enrollment data, confidence bounds on $(\alpha_1,q)$ determine whether the requested cheap audit is certified for the pair or must fall back to the $\epsilon^{-2}$ mean-shift wall of Eq.~\eqref{eq:eps2}.

\paragraph{Probe-design consequences.}
First, requested length is an unreliable budget axis: the token-optimal length maximizes the per-token tell yield $q(L)/L$ rather than simply growing $L$ (Cor.~\ref{cor:len}, App.~\ref{app:length}). Second, temperature insensitivity is intended specificity, not a blind spot: a pure decoding-temperature retune is a rank-one, on-family move with only second-order separation, and at the greedy boundary repeated single draws add no evidence (Cor.~\ref{cor:greedy}, Prop.~\ref{prop:temp}; App.~\ref{app:temp},~\ref{app:tempmag}), so \IRIS{} treats retunes as out of scope rather than as dilution. A true backend substitution, by contrast, perturbs the logits off that on-family curve and can supply first-order evidence when the pair is measurably separated.

\section{Experiments}
\label{sec:exp}

\begin{figure}[t]
  \centering
  \includegraphics[width=0.96\columnwidth]{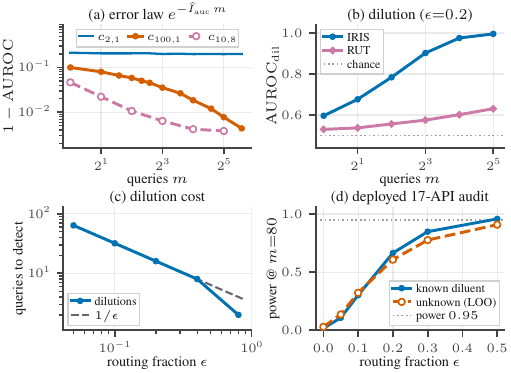}
  \caption{Experimental overview. \textbf{(a)}~Verification error falls as $e^{-\widehat I_{\mathrm{auc}}m}$: the ladder exemplar $c_{100,1}$, the same decay on $17$ commercial APIs ($c_{10,8}$, dashed), and the entropy-capped $c_{2,1}$ (flat). \textbf{(b)}~Dilution evidence accumulates for \IRIS{} but not for RUT ($c_{2,16}$; MET and FLIPS cannot audit dilution, Table~\ref{tab:baseline-main}). \textbf{(c)}~Margin-qualified commercial dilutions cost near-$1/\epsilon$ queries to detect. \textbf{(d)}~Deployed $m{=}80$ gateway audit: known and leave-one-out (unknown) diluent power nearly coincide; $\epsilon{=}0$ realizes the honest FPR $0.017$.}
  \label{fig:mscaling}
\end{figure}

No public benchmark targets gateway substitution/dilution auditing, so we build testbeds following prior output-only auditors~\citep{met2025,rut2026,flips2026,substitution2025} and validate on real endpoints, corroborated by the third-party MET corpus (App.~\ref{app:extended}). The evaluation has two parts: a controlled local testbed (Ollama; $6{\times}$A40, Xeon~8358, $2\,$TiB RAM, Ubuntu~22.04) checks the sample-complexity predictions under same-family substitutions and hosts the baseline head-to-head; a commercial OpenRouter testbed then evaluates the same budget laws under constructed gateway dilution at scale.

\subsection{Controlled local testbed}
\label{sec:exp-local}
The local testbed hosts a deliberately hard $K{=}6$ same-family Qwen3 ladder (0.6B--32B). All six probes of $\mathcal C$ (Section~\ref{sec:audit-plan-construction}) run at $T{=}1.0$ ($120$ repeats per model--probe; a sweep covers $T\in\{0,0.5,1,1.5,2\}$). The classifier is a $300$-tree random forest over the $179$ visible-string features; evidence aggregates $m$ independent responses, and each curve averages $40$ stratified splits with $95\%$ bootstrap intervals. Substitution/endpoint claims use the content-only (length/format-invariant) subset; law, budget, and probe-ranking analyses use the full set with a length ablation (Table~\ref{tab:featureset}, App.~\ref{app:featureset}).

\paragraph{Metrics.}
Detection is scored by \emph{FPR} and \emph{power} ($1{-}$miss), the empirical counterparts of $\alpha$ and $1{-}\delta$ from Section~\ref{sec:prelim}; attribution by \emph{accuracy} in naming the served model among the $K$ candidates; and quantification by the routing-fraction error $|\hat\epsilon-\epsilon|$. \emph{AUROC} ranks honest against deviating $m$-query episodes by the evidence $S_m$ ($0.5$ chance, $1$ perfect), with AUROC$_{\rm sub}$ and AUROC$_{\rm dil}$ denoting substitution and dilution (at the stated $\epsilon$), respectively. \emph{Hit@B} is the fraction of (claimed, candidate) pairs whose audit meets the target reliability at mean query budget $B$; we write $X@m$ for metric $X$ at budget $m$.

\paragraph{Queries, not length.}
Verification error decays exponentially in queries, as Prop.~\ref{prop:exp} predicts: $1-\mathrm{AUROC}$ falls log-linearly, matching the empirical $e^{-\widehat I_{\mathrm{auc}} m}$ rank-error law (Fig.~\ref{fig:mscaling}a; $c_{100,1}$ fit $R^2{=}0.96$), and attribution accuracy rises monotonically with independent queries (Prop.~\ref{prop:len}; $0.62\!\to\!0.95$ over $m{=}1{\to}48$). The fingerprint is randomness, not formatting or length: dropping length/format features ($179{\to}144$) leaves the lead sequence probes $c_{2,16}$ and $c_{10,8}$ effectively unchanged at $m{=}8$ (${\ge}.998$).

\paragraph{Budgets and probes from a cheap pilot.}
Fit on a cheap pilot ($m\le4$), $\widehat I_{\mathrm{auc}}$ predicts held-out $1{-}\mathrm{AUROC}$ at larger $m$ to a median $|\log$-ratio$|$ of $0.19$ over $224$ points, allowing $m_{\mathrm{sub}}$ to be frozen before any suspect query (conservative LCB; App.~\ref{app:lcb}). The same pilot orders probes by realized efficiency: $\rho(\widehat I_{\mathrm{auc}},\mathrm{acc}@8){=}0.82$ over $13$ probes, the six enrolled plus seven further $c_{n,L}$ variants ($p{<}10^{-3}$; App.~\ref{app:probeopt}), with high-rate sequence probes reaching AUROC$\,{\ge}0.95$ in a single query while $c_{2,1}$ never does within $48$. The exponent (Def.~\ref{def:I}; Prop.~\ref{prop:Isc}, App.~\ref{app:score}) also follows the predicted entropy--temperature gate (Prop.~\ref{prop:temp}, Cor.~\ref{cor:greedy}; Fig.~\ref{fig:temp}, App.~\ref{app:temp}): it is $\approx0$ for the one-bit $c_{2,1}$, peaks near $T{\approx}1$, and vanishes under greedy $T{=}0$. Because a pure temperature retune is only second order in this gate (Prop.~\ref{prop:tempmag}), \IRIS{} does not false-flag honest retunes ($\hat\epsilon{=}0.04$ vs.\ $0.53$ for a base-model swap; App.~\ref{app:knobs},~\ref{app:quant}).

\paragraph{Comparison with prior auditors.}
This head-to-head uses the controlled $K{=}6$ Qwen3 ladder, where FLIPS~\citep{flips2026}, MET~\citep{met2025}, and RUT~\citep{rut2026} can all run (MET needs reference samples, RUT log-ranks); Table~\ref{tab:baseline-main} lists probe and budget settings. \IRIS{} gives the strongest results on nearly every reported metric and is the only compared method that simultaneously detects substitution, estimates the routing fraction $\hat\epsilon$, and attributes the served model. MET-MMD is degenerate at a single draw and RUT stays weak on dilution (Fig.~\ref{fig:mscaling}b). KBF~\citep{kbf2026} and B3IT~\citep{b3it2026} use target-specific probes (factual recall, border inputs) we cannot reproduce on this shared random probe, so App.~\ref{app:auditors} compares them qualitatively. Because every compared baseline fixes its query count, we also isolate budget estimation itself: at matched mean budget $m{\approx}9$ on $c_{10,1}$, pilot-driven allocation hits the target (Hit@B) on $0.87$ of pairs versus $0.73$ for fixed $m$ (decomposition in App.~\ref{app:baselines}).
\begin{table}[t]
  \centering\footnotesize\setlength{\tabcolsep}{4pt}
  \begin{tabular}{@{}lcccc@{}}
    \toprule
    Metric & \IRIS{} & FLIPS & MET & RUT \\
    \midrule
    AUROC$_{\rm sub}$ & $\mathbf{.993}/.9997$ & $.990/\mathbf{.9999}$ & $.111/.993$ & $.634/.726$ \\
    AUROC$_{\rm dil}$  & $\mathbf{.902}/\mathbf{.995}$ & -- & -- & $.575/.631$ \\
    $\hat\epsilon$ err.& $\mathbf{.04}$ & -- & -- & -- \\
    Attr.              & $\mathbf{.927}/.998$ & $.921/\mathbf{.999}$ & -- & -- \\
    Hit@B              & $\mathbf{.87}@9$ & fixed & fixed & fixed \\
    \bottomrule
  \end{tabular}
  \caption{Head-to-head with prior black-box auditors on the shared $K{=}6$ Qwen3 ladder and probe $c_{2,16}$ ($30$ ordered pairs). AUROC rows give two budgets ($m{=}1/8$ for substitution; $m{=}8/32$ for dilution at $\epsilon{=}0.2$); Attr.\ is closed-set attribution accuracy ($m{=}1/8$); \textbf{bold} is the column-wise best. ``--'' = task not supported or not reported; ``fixed'' = a fixed, non-estimated budget.}
  \label{tab:baseline-main}
\end{table}

\subsection{Commercial models via OpenRouter}
\label{sec:exp-wild}

The local laws transfer to the $17$ commercial and open models accessed through OpenRouter's API (Fig.~\ref{fig:mscaling}a, dashed): $c_{10,8}$ attribution rises from $0.68$ to $0.98$ over $m{=}1{\to}32$ and AUROC reaches $0.99$ by $m{=}8$, the same Prop.~\ref{prop:exp} signature as on the ladder. Entropy gating persists, requested length again shows no stable relationship with accuracy, and the pattern remains consistent when extended to $45$ models (App.~\ref{app:scale}). App.~\ref{app:gallery} renders the $53$ enrolled endpoints ($18$ families) as an iris gallery.

\paragraph{A near-$1/\epsilon$ dilution budget.}
Dilution across distinct commercial models follows near-$1/\epsilon$ scaling (Fig.~\ref{fig:mscaling}c): in the AUROC-rank diagnostic on $c_{10,8}$ (the median over pairs of the smallest $m$ whose honest-vs-diluted episode AUROC reaches $0.95$), $\epsilon{=}5/10/20/40\%$ is detected in roughly $64/32/16/8$ queries. The deployed fixed-$p_0$ audit below (the honest tell rate $p_0$ is frozen in advance) instead pays the $\epsilon^{-2}$ rate for type-I robustness, but a pre-fixed-FPR test reproduces the same near-$1/\epsilon$ slope given enough reference calibration (App.~\ref{app:fixedfpr}); same-model temperature twins sit at the $\epsilon^{-2}$ wall. This is the $\Theta(\epsilon^{-1}\log(1/\delta))$ law of Thm.~\ref{thm:mix} when a separating tail exists, with the $1/\epsilon$-to-$\epsilon^{-2}$ crossover set by the measured tail exponent (Prop.~\ref{prop:phase}, App.~\ref{app:tail}). Across the $17$- and $45$-model pools, $62$--$85\%$ of margin-qualified pairs are consistent with the exact $1/\epsilon$ law, and budgeting uses each pair's measured exponent (App.~\ref{app:betastab}).

\paragraph{Detection, $\hat\epsilon$, and unseen diluents.}
\label{sec:extra}
We emulate a diluting gateway from real OpenRouter responses: for each ordered (claimed $P$, diluent $R$) pair among the $17$ models, the diluted stream draws each query i.i.d.\ from $Q_\epsilon$ in Eq.~\eqref{eq:mixture}. We fix $\tau$ and the honest tell rate $p_0$ on a fresh honest window of $P$, then run a content-only $m{=}80$ two-proportion test on held-out diluted streams; this is the deployed fixed-$p_0$ audit anticipated above, not the any-tell $1/\epsilon$ diagnostic. At $\epsilon{=}0.3$, \IRIS{} detects the $218/272$ margin-qualified pairs at mean power $0.85$, names the enrolled substitute, and keeps the pooled in-distribution FPR at $0.017$ (Fig.~\ref{fig:mscaling}d). Sweeping $\epsilon\in[0,0.5]$, power rises monotonically and $\hat\epsilon$ tracks the truth; App.~\ref{app:dilscale} adds five live OpenRouter audits, FPR hardening, and a held-out margin-selection check. The $54$ low-margin pairs are known-hard cases: close relatives that may share training lineage (within Qwen3, GPT-4 or Gemini--Gemma), and claimed models whose single-response fingerprint is itself non-distinctive. The fraction estimate tracks truth to $|\text{bias}|\!\approx\!0.04$ conditional on the response library, with near-nominal though slightly optimistic coverage (App.~\ref{app:estimateq}); App.~\ref{app:e2e} audits the population $(\alpha,\delta)$ guarantee (Prop.~\ref{prop:e2e}). Detection does not require enrolling the diluent: leave-one-out barely changes power ($0.78$ vs.\ $0.85$), turns $\hat\epsilon$ into a lower bound, and flags the never-enrolled \texttt{qwen-2.5-72b}.

\paragraph{Real endpoints and stress tests.}
A \emph{real} cross-provider audit (one open-weight model pinned across OpenRouter providers) flags $14/15$ provider pairs as distinguishable, real quantization/kernel deviations rather than author-injected, corroborated on the third-party MET corpus (App.~\ref{app:extended}). The appendix adds the checks a referee expects: probe-paraphrase and probe-aware-gateway robustness, non-i.i.d.\ routing, a \texttt{q4}-for-\texttt{fp16} cheat, knob identifiability, $45$-model scale, exact-binomial FPR hardening, a one-class detector, multi-diluent unmixing, and matched-budget baselines with significance tests (App.~\ref{app:auditors}--\ref{app:extended}).

\section{Conclusion}
\label{sec:conclusion}

\IRIS{} is a budgeted black-box framework for auditing model substitution and routing dilution in LLM gateways from returned text alone: random-generation probes and an estimate-then-budget pilot predict audit difficulty before querying traffic and reuse those responses for detection, attribution, and routing-fraction estimation. Our analysis explains exponential evidence accumulation and the near-$1/\epsilon$ versus $\epsilon^{-2}$ dilution budgets, which a Qwen3 ladder, a $45$-model library, a live cross-provider audit of real deviations, and third-party MET traces confirm, with appendices hardening false-positive control and unseen diluents.


\putbib
\end{bibunit}
\fi

\ifIRISIncludeSupplement
\ifIRISIncludeMain
\clearpage
\fi
\setcounter{page}{1}
\setcounter{section}{0}
\setcounter{subsection}{0}
\setcounter{figure}{0}
\setcounter{table}{0}
\setcounter{equation}{0}
\setcounter{algorithm}{0}
\setcounter{theorem}{0}
\setcounter{proposition}{0}
\setcounter{corollary}{0}
\setcounter{definition}{0}
\setcounter{lemma}{0}
\renewcommand{\thesection}{S\arabic{section}}

\begin{bibunit}
\twocolumn[
\begin{center}
{\Large\bf Supplementary Material for\\
\IRIS{}: Budgeted Black-Box Auditing of Model Substitution and Routing Dilution in LLM Gateways\par}
\vspace{0.5em}
{\normalsize Yuewei Zhang, Zhi-Hai Zhang, Hanzhang Qin\par}
\end{center}
]



\section{Temperature Gating}
\label{app:temp}
\begin{proposition}[Distributional separation]
\label{prop:temp}
For a single-draw probe at temperature $T$ with categorical laws $p^T_P,p^T_R$ on $n$ symbols:
\begin{enumerate}\itemsep2pt
\item One draw carries at most the visible-output information $I(\Theta;Y)\le H(Y)\le\log n$, while the testing exponent satisfies $I^\star\le\min\!\big(\mathrm{KL}(p^T_P\|p^T_R),\mathrm{KL}(p^T_R\|p^T_P)\big)$.
\item At the greedy boundary $T\to0$, repeated identical draws have zero additional-query accumulation exponent.
\end{enumerate}
\end{proposition}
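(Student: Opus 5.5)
Part 1 splits into two independent bounds, and I would prove each directly from the definitions. For the information bound I would introduce a latent label $\Theta\in\{P,R\}$ under any prior and let $Y\mid\Theta$ follow $p^T_\Theta$ on the $n$ symbols. Then $I(\Theta;Y)=H(Y)-H(Y\mid\Theta)\le H(Y)$, because conditional entropy of a discrete variable is nonnegative. Since $Y$ takes at most $n$ values, $H(Y)\le\log n$ by the uniform maximum-entropy bound.

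For the exponent bound I would start from Definition~\ref{def:I}, which gives $I^\star=-\min_{\lambda\in[0,1]}\log\sum_y p_P^{1-\lambda}p_R^{\lambda}$. Write $f(\lambda)=-\log\sum_y p_P^{1-\lambda}p_R^\lambda$. This $f$ is concave, since $\log\sum_y p_P\,(p_R/p_P)^\lambda$ is a log-moment-generating function in $\lambda$ and hence convex. It also satisfies $f(0)=f(1)=0$, with $f'(0)=\mathrm{KL}(p_P\|p_R)$ and $f'(1)=-\mathrm{KL}(p_R\|p_P)$.

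Concavity lets $f$ lie below its tangent lines. The tangent at $0$ gives $f(\lambda)\le\lambda\,\mathrm{KL}(p_P\|p_R)\le\mathrm{KL}(p_P\|p_R)$. The tangent at $1$ gives $f(\lambda)\le(1-\lambda)\mathrm{KL}(p_R\|p_P)\le\mathrm{KL}(p_R\|p_P)$. Taking the maximum over $\lambda$ yields the stated minimum of the two divergences. If some $p_P(y)=0<p_R(y)$, the corresponding KL is $+\infty$ and that side of the bound is vacuous, so the claim still holds.

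Part 2 concerns the greedy limit. As $T\to0$, the tempered softmax $p^T_M$ converges to the point mass on $\arg\max$ of the logits, assuming a unique maximizer, or to uniform mass on the tie set. I would read "additional-query accumulation exponent" as the per-query gain in Chernoff information beyond the first draw, i.e. $\lim_m \bigl[C(P^{\otimes m},R^{\otimes m})-C(P,R)\bigr]/m$, or equivalently as the growth of evidence from repeated draws.

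At $T=0$ each endpoint outputs a deterministic symbol $a_M$. The $m$ draws are therefore the constant vector $(a_M,\dots,a_M)$, which is a deterministic function of one draw. By data processing, $C(P^{\otimes m},R^{\otimes m})=C(P,R)$ for every $m$, so the increment is zero. The case $a_P\neq a_R$ needs care: there $C=\infty$ already at $m=1$, and the statement should be read as saying that draws beyond the first add nothing. The case $a_P=a_R$ is clean, since $C=0$ for all $m$.

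The main obstacle is making the limit statement precise for small positive $T$. I would bound $\mathrm{KL}$ and hence $I^\star$ by Part 1. With a common argmax and logit gap $\Delta>0$, both tempered laws put mass $O(e^{-\Delta/T})$ off the argmax, so the divergences, and therefore $I^\star$, are $O(e^{-\Delta/T})\to0$. Since Prop.~\ref{prop:len} gives linear accumulation $m\,I_{\mathrm{seq}}$, the per-query exponent itself tends to zero. I would state this common-argmax case as the rigorous content of Part 2 and treat the distinct-argmax case separately as the degenerate single-draw-identifiable case.
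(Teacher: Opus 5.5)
Your proposal is correct and follows the paper's proof essentially step for step: the mutual-information chain via $H(Y\mid\Theta)\ge0$ and the maximum-entropy bound, the KL gating bound (your tangent lines to the concave $-\log\sum_y p^{1-\lambda}r^{\lambda}$ at $\lambda=0,1$ are exactly the paper's Jensen step, and the same convexity argument appears in its proof of Prop.~\ref{prop:len}), and the split of the greedy limit into a same-mode branch where $I^\star\to0$ and a distinct-mode branch where the first draw already separates and identical repeats add nothing. One small correction to your rate claim: with a common argmax $y^\star$ the KL divergences are in general only $O(T^{-1}e^{-\Delta/T})$, because off-argmax log-likelihood ratios grow like $1/T$, whereas the direct bound $\sum_y p^{1-\lambda}r^{\lambda}\ge\min\bigl(p^T_P(y^\star),p^T_R(y^\star)\bigr)$ gives $I^\star=O(e^{-\Delta/T})$ outright; either way the limit is zero, so the conclusion is unaffected.
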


\noindent Thus a small alphabet caps the mutual information in one draw, but weak evidence also requires the two categorical laws to be close. The $c_{2,1}$ probe is capped at one bit per draw; it is not automatically useless. The greedy statement concerns accumulation from repetition: a single deterministic draw may still separate two endpoints, but repeated identical draws do not add independent evidence.

\begin{corollary}[Greedy has no accumulation]
\label{cor:greedy}
Under idealized greedy decoding ($\mathsf T{=}0$), repeated single draws are deterministic and identical, so the accumulation exponent from repeating the same draw is $I=0$.
\end{corollary}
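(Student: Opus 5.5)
The corollary follows from the second part of Prop.~\ref{prop:temp}; I also give a direct, self-contained argument. First I formalize idealized greedy decoding. At $\mathsf T{=}0$ each endpoint $M\in\{P,R\}$ returns the argmax symbol of its conditional logits at every position, with ties broken by a fixed deterministic rule. Under the autoregressive law~\eqref{eq:ar}, each conditional $P_M(y_t\mid y_{<t},c)$ is then a point mass, so by induction on $t$ the whole response law $P_M(\cdot\mid c)$ is a Dirac mass $\delta_{y^M}$ at a single string $y^M$. Hence $m$ independent calls produce the product law $\delta_{y^M}^{\otimes m}=\delta_{(y^M,\dots,y^M)}$: the $m$ responses are almost surely identical copies of the first one.

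Next I compute the accumulation exponent. There are two cases. If $y^P=y^R$, the two response laws coincide, every test errs with probability bounded below by a constant, and the exponent is $0$. If $y^P\neq y^R$, a single draw already decides the question with zero error. Adding draws changes nothing, because the data $(y_1,\dots,y_m)$ is a deterministic function of $y_1$. The first observation is therefore a sufficient statistic, and the error probability of the optimal test is the same for every $m\ge1$.

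In both cases the error probability $P_{\mathrm{err}}(m)$ does not depend on $m$ for $m\ge1$. The accumulation exponent, defined as $\lim_{m\to\infty}-\frac1m\log\frac{P_{\mathrm{err}}(m)}{P_{\mathrm{err}}(1)}$, is then $0$. Equivalently, the incremental exponent from repetition is zero.

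The main obstacle is definitional: Prop.~\ref{prop:len} states $C(P^{\otimes m},R^{\otimes m})=m\,I_{\mathrm{seq}}$, which formally diverges or is undefined for distinct point masses. I would address this by stating explicitly that the corollary concerns \emph{additional} evidence from repetition beyond the first draw, matching the remark after Prop.~\ref{prop:temp}. Concretely, I would note that the conditional Chernoff information of $y_2,\dots,y_m$ given $y_1$ vanishes, because each later draw is a deterministic function of $y_1$. I would also cite the second part of Prop.~\ref{prop:temp} as the limiting $T\to0$ version of the same statement.
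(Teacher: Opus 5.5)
Your argument is correct, but it proves the statement for a different exponent than the paper does.

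The paper works with the operational quantity it actually fits, $\widehat I_{\mathrm{auc}}=-\frac{d}{dm}\log\bigl(1-\mathrm{AUROC}(m)\bigr)$, computed for the deployed episode mean $S_m$. Under greedy decoding the score law is the Dirac mass at $s(y^\star_M)$. So $S_m\equiv s(y^\star_M)$ for every $m$, $\mathrm{AUROC}(m)=\mathrm{AUROC}(1)\in\{0,\tfrac12,1\}$, and the slope vanishes.

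You argue at the oracle level instead. The $m$-draw sample is almost surely a deterministic function of $y_1$, so $y_1$ is sufficient for $\{P^{\otimes m},R^{\otimes m}\}$ and the optimal error is constant in $m$.
\begin{itemize}
\item Your route is test-agnostic and implies the paper's version immediately. Every statistic, including $S_m=s(y_1)$, is a.s.\ a function of $y_1$, so no AUROC can change with $m$.
\item The paper's route speaks directly about the quantity plotted in Fig.~\ref{fig:temp} and used for budgeting.
\item Your token-by-token induction is more faithful to greedy decoding than the paper's $y^\star_M=\arg\max_y P_M(y\mid c)$, since greedy takes a token-wise argmax, not a sequence-level one. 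Only determinism matters here, though.
\end{itemize}

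One repair is needed. When $y^P\neq y^R$, your displayed exponent $\lim_{m}-\frac1m\log\bigl(P_{\mathrm{err}}(m)/P_{\mathrm{err}}(1)\bigr)$ is $0/0$. The unnormalized $\lim_m-\frac1m\log P_{\mathrm{err}}(m)$ is $+\infty$. So neither literally equals $0$.

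The paper hits the same issue and resolves it by convention: it treats $\log 0=-\infty$ as a constant, so the slope is $0$. Either adopt that convention, or state the conclusion as exactly what sufficiency gives: the optimal $m$-draw error equals the optimal one-draw error for every $m\ge1$.

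This formulation is also cleaner than your conditional-Chernoff phrasing. For $y^P\neq y^R$, that phrasing conditions on an event of probability zero under the other hypothesis, so it is not canonically defined.
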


\begin{figure}[t]
  \centering
  \includegraphics[width=0.86\columnwidth]{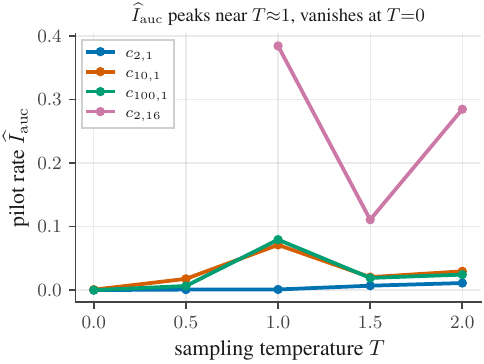}
  \caption{Pilot rank-error rate $\widehat I_{\mathrm{auc}}$ vs.\ sampling temperature $T$ for the entropy-ladder probes: $\widehat I_{\mathrm{auc}}\!\to\!0$ under greedy decoding, peaks near $T{\approx}1$, and the onset temperature scales with output entropy ($c_{2,1}$ only starts at $T{=}2$).}
  \label{fig:temp}
\end{figure}
Figure~\ref{fig:temp} shows the per-query information density $\widehat{I}_{\mathrm{auc}}$ as a function of sampling temperature, for the entropy-ladder probes of Section~\ref{sec:exp}. $\widehat{I}_{\mathrm{auc}}\!\to\!0$ under greedy decoding (repeated single draws are identical), rises to a probe-dependent interior peak near $T{\approx}1$, and falls again as sampling noise grows; the temperature at which accumulation turns on scales with the probe's output entropy ($c_{2,1}$ only starts at $T{=}2$). This is the empirical basis for the entropy-/temperature-gating claim in the body; we do not assert unimodality of $I(T)$ in general (Prop.~\ref{prop:temp}), and note that $T{=}0$ is not perfectly deterministic on every backend, so the greedy $I{=}0$ statement concerns the increment from repeating an identical single draw. The highest-entropy probe $c_{2,16}$ is already saturated at $m{=}1$ for $T{\le}0.5$---a single draw separates the models---so its accuracy-decay exponent is undefined there (zero unsaturated points to fit) and only its $T{\ge}1$ estimates appear in Fig.~\ref{fig:temp}; this is a ceiling on the \emph{estimator}, not an absence of signal, and is itself consistent with entropy gating (the strongest probe saturates first).

\section{Dilution Estimation}
\label{app:estimateq}
Theorem~\ref{thm:mix} flags \emph{that} a suspect is diluted; here we detail the routing-fraction estimator $\hat\epsilon$ of Eq.~\eqref{eq:epshat}. Under $Q_\epsilon$ the per-query tell probability at a threshold $\tau$ is $p_\epsilon=(1-\epsilon)\alpha_1+\epsilon\,q$, with $\alpha_1=\mathbb P_{P}(s>\tau)$ and $q=\mathbb P_{R}(s>\tau)$; inverting the observed tell fraction $\hat f$ over $m$ queries gives the method-of-moments estimate $\hat\epsilon=\mathrm{clip}_{[0,1]}\big((\hat f-\alpha_1)/(q-\alpha_1)\big)$ of Eq.~\eqref{eq:epshat}, identifiable whenever the diluent separates ($q-\alpha_1>0$). We fix $\tau$ at the $(1{-}\alpha_0)$ reference quantile ($\alpha_0{=}0.1$, so there is no max-gap snooping) and estimate $\alpha_1,q$ on an enrollment split disjoint from the audited responses; a delta-method $95\%$ interval propagates the binomial tell variance \emph{and} the anchor ($\alpha_1,q$) uncertainty.

On controlled mixtures of the $K{=}6$ ladder ($30$ reference--diluent pairs passing the margin check, $m{=}128$, probe $c_{2,16}$, $6000$ episodes/cell), $\hat\epsilon$ tracks the true fraction with mean $|\text{bias}|\le0.02$ and $95\%$-CI coverage $0.92$--$0.96$ across $\epsilon\in[0,0.75]$, and the $\epsilon{=}0$ control returns $\hat\epsilon{\approx}0.02$ (Fig.~\ref{fig:estimateq})---\IRIS{} reports \emph{how much} a stream is diluted, not merely that it is. Low-margin pairs ($q\!\to\!\alpha_1$) are excluded as non-identifiable.

\paragraph{Coverage on the gateway.} We re-audit the interval on the full $17$-endpoint gateway (content-only features, $m{=}80$, the $268/272$ \emph{identifiable} pairs with $q{>}\alpha_1$---a looser screen than the deployed $q{-}p_0\ge0.15$ margin that retains $218$ in App.~\ref{app:dilscale}---held-out mixtures at known $\epsilon$). The point estimate stays essentially unbiased (mean $\hat\epsilon=0.110/0.204/0.303/0.501$ at $\epsilon{=}0.1/0.2/0.3/0.5$; signed bias $\le0.010$, per-pair $|\text{bias}|\!\approx\!0.04$), but the delta-method interval is \emph{mildly anti-conservative}: empirical coverage is $0.86$--$0.87$ at the nominal $90\%$ level and $0.92$--$0.93$ at $95\%$---the low end of the $K{=}6$ figure, under-covering by $\approx3$--$4$ points on the harder content-only gateway. Dropping the anchor-uncertainty terms (a Wilson interval on $\hat f$ alone) worsens it to $0.79$--$0.86$ at $90\%$, so the $(\alpha_1,q)$ variance is a non-negligible part of the honest width and must be retained. We therefore describe the interval as approximately calibrated in-distribution, but slightly optimistic rather than exact.

\begin{figure}[t]
  \centering
  \includegraphics[width=\columnwidth]{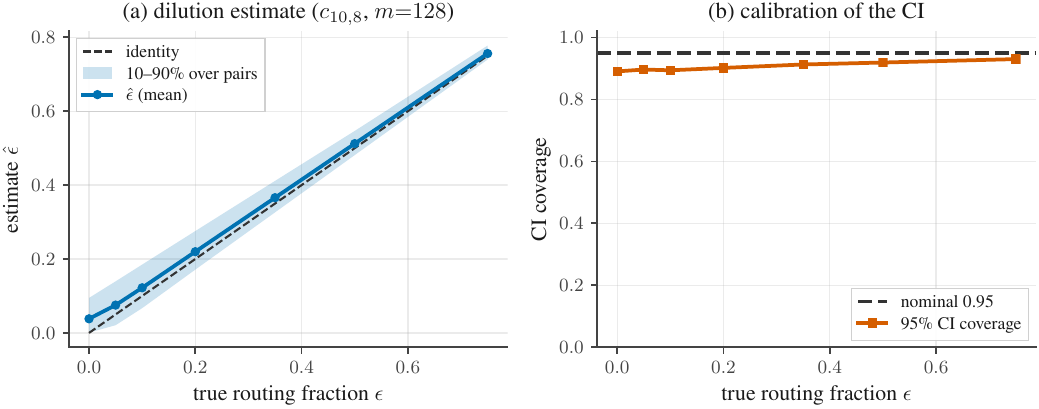}
  \caption{Routing-fraction estimation on controlled mixtures ($K{=}6$ ladder, $30$ margin-qualified pairs,
  $m{=}128$, $c_{2,16}$). Left: $\hat\epsilon$ vs.\ true $\epsilon$ (identity dashed; band is the
  $10$--$90\%$ spread over pairs). Right: empirical $95\%$-CI coverage vs.\ the nominal $0.95$.}
  \label{fig:estimateq}
\end{figure}

\section{Dilution at Scale}
\label{app:dilscale}
We scale the online controlled-dilution audit (Section~\ref{sec:extra}; dilutions we inject, not gateways found diluting) to the full $17$-endpoint OpenRouter library ($10{,}200$ real responses) and add the unknown-diluent regime. \textbf{Protocol.} Posteriors $\hat{\mathbb P}(\cdot\mid y)$ come from a $5$-fold out-of-fold content-only (length/format-invariant) random forest over the three sequence probes, scored as $s=-\log\hat{\mathbb P}(P\mid y)$. For each ordered (claimed $P$, diluent $R$) pair we use three disjoint splits of $P$: a calibration split fixes $\tau$ at the $(1{-}\alpha)$ quantile ($\alpha{=}0.05$); a fresh \emph{honest window} re-measures the null tell rate $p_0$ before the audit; and an audit split supplies the honest queries. The diluted stream draws each of $m$ queries from $R$ w.p.\ $\epsilon$ else $P$, and detection is a one-sided two-proportion test of the audit tell rate against $p_0$ (declared at $p{<}0.01$); the realized FPR is the empirical $\epsilon{=}0$ row of Table~\ref{tab:dilscale} ($0.02$), not the union-bound $\alpha/m$ used in the $m^\star$ derivation. \emph{This at-scale audit therefore runs the \textbf{fixed}-$p_0$ two-proportion test (local complexity $\Theta(\epsilon^{-2})$ by Prop.~\ref{prop:e2e}), \textbf{not} the shrinking-threshold any-tell whose $1/\epsilon$ budget is validated separately in App.~\ref{app:fixedfpr}}---these are distinct tests, and we do not claim the at-scale numbers exhibit the $1/\epsilon$ law. The calibrated estimate is $\hat\epsilon{=}(\hat f-p_0)/(q-p_0)$ with $q$ the diluent's tell rate (needs $R$ enrolled); the deployed lower bound $\hat\epsilon_{\mathrm{lb}}{=}(\hat f-p_0)/(1-p_0)$ uses only $P$. We report pairs passing the pre-audit margin check $q-p_0\ge0.15$ (a \emph{combined}-probe tell-rate margin; this differs from the \emph{per-probe} tail-plateau used in Table~\ref{tab:betastab}, which is why the count is $218/272$ here versus its $248/164/176$ for a single probe). The remaining $54/272$ pairs are marked low-margin/indeterminate; over \emph{all} $272$ ordered pairs the $\epsilon{=}0.3$ reliable-detection rate ($f_{.95}$) is correspondingly $\approx0.51$ ($0.64\times218/272$). These $54$ are not random: they concentrate on a few \emph{claimed} models (\texttt{qwen3-32b}, \texttt{qwen3-8b}, \texttt{gpt-4.1}, \texttt{gpt-4o-mini}, \texttt{gemini-2.5-flash} account for $48$) and split into two kinds. First, close relatives that plausibly share training lineage---within the Qwen3 family (8b/32b/235b), within the GPT-4 series (\texttt{gpt-4.1}/\texttt{gpt-4o-mini}), within Gemini~2.5, and Gemini--Gemma (which share a tokenizer)---whose random-generation fingerprints are genuinely similar. Second, claimed models whose own single-response fingerprint is not distinctive, so a foreign diluent rarely crosses the tell threshold ($q\!\approx\!0$); this is directional (a property of the defended model) and is \emph{not} size-specific, as it includes the flagship \texttt{gpt-4.1} and the mainstream \texttt{gemini-2.5-flash}. Both are limits of the per-response tell margin at the deployed budget, not of identifiability: full-query attribution still separates all $17$ models (verification AUROC $0.99$ by $m{=}8$).

\paragraph{Honest FPR.} The pooled mean above hides a right tail: over the $218$ margin-qualified pairs the per-pair honest ($\epsilon{=}0$) FPR has median $0.000$, mean $0.017$, $95$th percentile $0.090$, and maximum $0.557$, with $7.8\%$ of pairs above $0.05$. The tail is the $22$ pairs whose fresh honest window draws zero tells ($\hat p_0{=}0$), where the normal-approximation two-proportion test compares against an empirical null of $0$ and is anti-conservative. Replacing it with a one-sided \emph{exact} binomial test of the audited tell count against a Clopper--Pearson $(1{-}\gamma)$ \emph{upper} bound on $p_0$ ($\gamma{=}0.1$, with a $k_{\mathrm{hw}}{=}0$ guard giving $\bar p_0\!\approx\!0.023$ rather than $0$) cuts the tail---max $0.39$, $95$th percentile $0.018$, $2.8\%$ above $0.05$, pooled mean $0.0065$---while preserving power (known-diluent $\epsilon{=}0.3$: $0.85\!\to\!0.78$ under the exact test---a separate effect from the equal-valued leave-one-out power in Table~\ref{tab:dilscale}). One residual heavy-tailed pair still has FPR $0.39$: its honest-window split draws zero tells but the disjoint audit split has true honest rate $0.062$, and the count-only Clopper--Pearson bound corrects for sampling error in the window count, not for this audit-side dispersion---so we report the whole distribution, not the mean.

\paragraph{Margin validity.} To rule out that choosing the reported subset and evaluating power on the same data inflates it, we re-apply the margin check on a \emph{third disjoint} selection split and measure power and FPR on the disjoint test split. Membership is stable ($223/272$ vs.\ the $218$ in-sample, $210$ shared) and the held-out honest FPR is if anything stricter ($0.0085$ vs.\ $0.0167$). The held-out $\epsilon{=}0.3$ power is $0.66$ versus the in-sample $0.85$; but a matched control that applies the margin check \emph{in-sample} on the same half-size audit bank also gives $0.66$, so the gap is a \emph{data-budget} artifact of halving the bank (less score diversity), not selection bias ($\le0.002$ power attributable to disjoint selection). We therefore report the full-bank numbers as the operating point, with this experiment establishing that selection circularity is not their source.

\textbf{Unknown diluent (leave-one-out).} To model a substitute we have never enrolled, we delete $R$ from the library and retrain on the remaining $16$ endpoints; $R$'s responses are scored out-of-distribution by the fold classifiers. Detection uses only $P$, so it is unchanged in principle; Table~\ref{tab:dilscale} confirms it degrades only slightly in practice. The foreign responses are also flagged out-of-library by open-set rejection (knownness $\max_c\hat{\mathbb P}(c\mid y)$ below the claimed model's $95\%$-accept gate) at mean rate $0.19$ versus a $0.04$ self-rate. The deployed $\hat\epsilon_{\mathrm{lb}}$ stays below the true $\epsilon$ in aggregate (Table~\ref{tab:dilscale}, last column), as it must when $q\le1$.

\begin{table*}[t]
  \centering\footnotesize\setlength{\tabcolsep}{3pt}
  \caption{Dilution audit over all margin-qualified (claimed, diluent) pairs of the $17$-endpoint gateway,
  $m{=}80$, content-only features, $400$ episodes/cell. \emph{Known}: diluent enrolled ($218$ pairs).
  \emph{Unknown}: leave-one-out, diluent deleted from the library ($198$ pairs).
  ``power'' is the fraction of audits flagged at $p{<}0.01$; ``$f_{.95}$'' the fraction of pairs detected
  at power${\ge}0.95$; $\hat\epsilon$ the calibrated estimate, $\hat\epsilon_{\mathrm{lb}}$ the claimed-only lower bound.}
  \label{tab:dilscale}
  \begin{tabular}{@{}c ccccc cc c@{}}
    \toprule
    & \multicolumn{5}{c}{known diluent} & \multicolumn{3}{c}{unknown (leave-one-out)} \\
    \cmidrule(lr){2-6}\cmidrule(lr){7-9}
    $\epsilon$ & power & $f_{.95}$ & med.\ $p$ & $\hat\epsilon$ & $\hat\epsilon_{\mathrm{lb}}$ & power & $f_{.95}$ & $\hat\epsilon_{\mathrm{lb}}$ \\
    \midrule
    $0.00$ & 0.02 & 0.00 & $5{\times}10^{-1}$ & 0.03 & 0.02 & 0.03 & 0.00 & 0.02 \\
    $0.05$ & 0.11 & 0.00 & $2{\times}10^{-1}$ & 0.06 & 0.04 & 0.14 & 0.01 & 0.04 \\
    $0.10$ & 0.30 & 0.01 & $5{\times}10^{-2}$ & 0.10 & 0.07 & 0.32 & 0.03 & 0.07 \\
    $0.20$ & 0.67 & 0.31 & $3{\times}10^{-3}$ & 0.20 & 0.14 & 0.61 & 0.31 & 0.14 \\
    $0.30$ & 0.85 & 0.64 & $4{\times}10^{-5}$ & 0.30 & 0.22 & 0.78 & 0.53 & 0.20 \\
    $0.50$ & 0.96 & 0.87 & $2{\times}10^{-9}$ & 0.50 & 0.36 & 0.91 & 0.80 & 0.34 \\
    \bottomrule
  \end{tabular}
\end{table*}

\begin{figure}[t]
  \centering
  \includegraphics[width=\columnwidth]{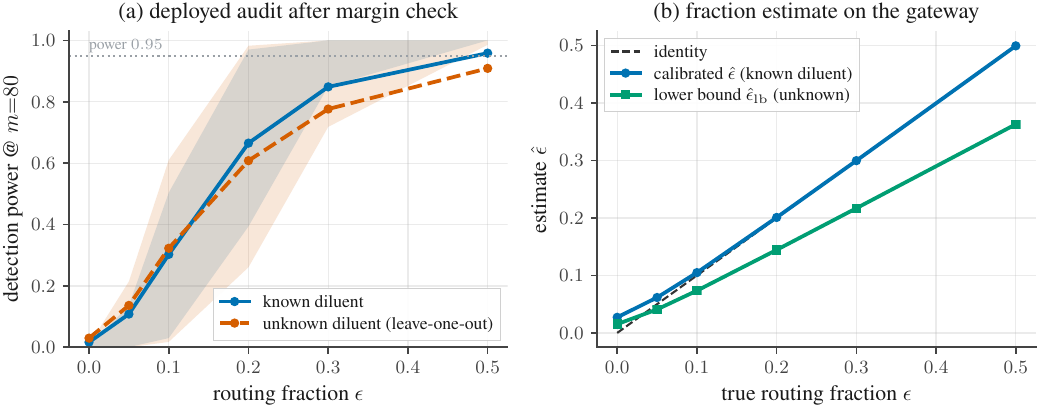}
  \caption{Deployed dilution audit on the real $17$-endpoint gateway. (a)~Detection power vs.\ routing fraction
  $\epsilon$ at $m{=}80$ under the fixed-$p_0$ two-proportion test, pooled over pairs passing the margin check (band $=25$--$75\%$ over pairs): the \emph{unknown}-diluent
  leave-one-out curve nearly coincides with the \emph{known} curve---detection needs only the claimed model's
  fingerprint. (b)~Routing-fraction recovery: the calibrated $\hat\epsilon$ tracks the identity, while the
  deployed lower bound $\hat\epsilon_{\mathrm{lb}}$ (valid for an unknown diluent) sits conservatively below it.}
  \label{fig:dilution}
\end{figure}

\begin{table*}[t]
  \centering\footnotesize\setlength{\tabcolsep}{6pt}
  \begin{tabular}{@{}llccc@{}}
    \toprule
    claimed & diluent & result & $p$ & $\hat\epsilon$ \\
    \midrule
    \multicolumn{5}{@{}l}{\emph{offline, real gateway}} \\
    \texttt{gpt-4o-mini} & \texttt{llama-3.1-8b} & $0.95$ & $2{\times}10^{-4}$ & $0.33$ \\
    \texttt{gemma-3-4b} & \texttt{gemma-3-27b} & $1.00$ & $4{\times}10^{-10}$ & $0.33$ \\
    \texttt{gemini-2.5-pro} & \texttt{qwen3-235b} & $1.00$ & $9{\times}10^{-10}$ & $0.31$ \\
    \texttt{claude-3.5-haiku} & \texttt{gpt-4o-mini} & $0.00$ & $6{\times}10^{-1}$ & --- \\
    \midrule
    \multicolumn{5}{@{}l}{\emph{live, end-to-end ($m{=}80$)}} \\
    \texttt{gpt-4o-mini} & honest ($\epsilon{=}0$) & clean & $0.31$ & --- \\
    \texttt{gpt-4o-mini} & \texttt{llama-3.1-8b} & diluted & $3{\times}10^{-3}$ & $0.26$ \\
    \texttt{llama-3.3-70b} & \texttt{gemma-3-4b} & diluted & $3{\times}10^{-5}$ & $0.30$ \\
    \texttt{gpt-4.1} & \texttt{qwen3-8b} & diluted & $2{\times}10^{-4}$ & $0.43$ \\
    \texttt{gpt-4o-mini} & \texttt{qwen-2.5-72b}$^\dagger$ & diluted & ${<}10^{-3}$ & $0.24^\ddagger$ \\
    \bottomrule
  \end{tabular}\\[2pt]
  {\footnotesize Live $\hat\epsilon$ is a single-stream point estimate, noisier than the offline calibration. $^\dagger$never enrolled (unknown diluent), a moderate twin: borderline at $m{=}80$ ($p\!\approx\!0.003$--$0.01$ across draws), robustly flagged at $m{\ge}160$ (shown: $m{=}240$, $p{=}10^{-3}$). $^\ddagger$lower bound $\hat\epsilon_{\mathrm{lb}}$.}
  \caption{Representative pairs at $\epsilon{=}0.3$, $m{=}80$ (top), and five \emph{online} end-to-end
  audits of controlled mixtures we inject through OpenRouter (bottom; author-constructed, not gateways found
  diluting; honest window measured first). The unknown substitute (\texttt{qwen-2.5-72b}) is not enrolled,
  so only its lower bound is available.}
  \label{tab:dilcases}
\end{table*}

\section{Query Complexity}
\label{app:fixedfpr}
Validating the $1/\epsilon$ law under a \emph{controlled} error rate, and deciding whether the measured tail exponent $\kappa$ is a population property or a finite-sample plateau, both turn on \emph{reference calibration size}. We address them on fresh, large reference batches (\texttt{qwen3:8b}@$T1$ as reference; \texttt{0.6b/4b/14b}@$T1$ and the temperature twin \texttt{8b}@$T0.9$ as substitutes; ${\approx}3000$ responses each on $c_{2,16}$/$c_{10,8}$). \emph{Protocol.} The reference is split three ways---a calibration split fixes $\tau$ at the per-query level $\alpha/m$, a \emph{disjoint} certification split bounds $\alpha_1$ by a one-sided Clopper--Pearson upper bound, and a held-out audit split measures the realized FPR---so the null is never tested on the data that chose it (the data-dependent-threshold objection). We report \emph{queries-to-detect} at power $0.95$ under a \emph{pre-fixed total} FPR ($1\%$ and $5\%$), \emph{not} AUROC, fit its $\epsilon$-exponent, and re-estimate $\kappa$ as the calibration grows from $n{=}50$ to $n{=}3000$.

\emph{(1) The $1/\epsilon$ law holds at a controlled FPR, not via AUROC.} Queries-to-detect at power $0.95$ under a \emph{pre-fixed total} FPR follow a log--log slope of $-1.0$ to $-1.12$ for the distinct substitutes (\texttt{0.6b},~\texttt{4b}) on both probes and at both $\alpha{=}0.05$ and $0.01$---the $1/\epsilon$ law, now against a fixed error rate rather than AUROC. The temperature twin (\texttt{8b}@$T0.9$) is undetectable at power $0.95$ within $256$ queries at \emph{every} $\epsilon$ (the wall), so the dichotomy survives a controlled-FPR test.

\emph{(2) But certifying the FPR needs thousands of reference samples.} Realizing the union-bound level $\alpha_1\le\alpha/m$ requires the reference's deep tail; a $1000$-sample calibration \emph{overshoots} (realized total FPR $0.09$--$0.14$ on $c_{2,16}$, though $\le0.04$ on the better-behaved $c_{10,8}$). The certifiable budget is $m\le\alpha/\overline{q}_0(N)$ with $\overline{q}_0(N)$ the one-sided Clopper--Pearson floor at $N$ samples \emph{assuming zero tail events} (a best-case envelope): $m\le16$ at $N{=}1000$, $m\le50$ at $N{=}3000$, $m\le166$ at $N{=}10{,}000$ ($\alpha{=}0.05$; an order of magnitude smaller at $\alpha{=}0.01$). These are the zero-event envelope; with the tail events a real reference exhibits, the achievable certified $m$ is smaller still for atom-heavy probes like $c_{2,16}$, whose realized FPR already exceeds nominal at the $N{=}1000$ calibration above. The $1/\epsilon$ regime is therefore \emph{deployable with a certified FPR} only down to the $\epsilon$ whose budget $N$ supports---$\epsilon\gtrsim0.1$ at a few thousand samples, smaller $\epsilon$ needing tens of thousands. This is the calibration cost the $1/\epsilon$ budget demands, stated explicitly rather than hidden behind AUROC; all the FPR/power figures here resample one finite held-out reference bank into episodes, so the intervals are conditional on that bank (as in App.~\ref{app:dilscale}).

\emph{(3) The tail exponent $\kappa$ is stable in the calibration size.} Re-estimating $\kappa$ as the reference grows from $n{=}50$ to $3000$, the distinct substitutes settle at a \emph{small, stable} value (\texttt{4b} at $\kappa\!\approx\!0$, \texttt{0.6b} at $0.13$) without moving toward $1$, while the adjacent-temperature stress test stays near $\kappa\!\approx\!1$ at every $n$. We therefore present $1/\epsilon$ as a measured, finite-$\epsilon$, calibration-bounded regime when the margin check passes---never an asymptotic guarantee.

\emph{(4) Continuous one-class scores.} To rule out that $\kappa\!\approx\!0$ is a random-forest discretization artifact, we re-measure $\kappa$ with two \emph{continuous}, reference-only scores---a standardized Mahalanobis density and an isolation-forest anomaly. The adjacent-temperature stress test remains low-separation ($\kappa\!\approx\!0.9$--$1.1$), while distinct substitutes keep a much smaller exponent under the Mahalanobis density and interpolate under isolation forests. We accordingly report $1/\epsilon$ as the behaviour of \IRIS{}'s deployed score after the margin check, not as a score-independent law.

\section{End-to-End Bound}
\label{app:e2e}

\begin{table}[t]
  \centering\small
  \setlength{\tabcolsep}{4pt}
  \caption{Which statistic backs which claim, and at what guarantee level. \IRIS{} couples a
  \emph{ranking}/difficulty estimator to a type-I-valid deployed test; the prescribed budget is a
  \emph{feasibility diagnostic}, not a coverage guarantee. (Referenced from the body.)}
  \label{tab:whichtest}
  \begin{tabular}{@{}p{0.27\columnwidth}p{0.30\columnwidth}p{0.33\columnwidth}@{}}
    \toprule
    claim / use & statistic & guarantee \\
    \midrule
    attribution, full-sub.\ verification & AUROC-slope $\widehat{I}_{\mathrm{auc}}$ (rank) & difficulty score; no coverage certificate \\
    dilution detection (deployed) & fixed-$p_0$ two-proportion & type-I valid given null; $\Theta(\epsilon^{-2})$ local \\
    $1/\epsilon$ at pre-fixed FPR & any-tell, CP-certified $\alpha_1$ & needs ${\sim}$thousands refs to certify \\
    routing fraction $\hat\epsilon$ & moments $+$ delta interval & approx.\ calibrated in-dist.\ (optimistic) \\
    end-to-end $(\alpha,\delta,\epsilon)$ & estimate-then-budget & type-I held; power met on a minority (diagnostic) \\
    \bottomrule
  \end{tabular}
\end{table}
\begin{proposition}[Binomial guarantee]
\label{prop:e2e}
Fix targets $\alpha,\delta\in(0,1)$, a minimum routing fraction $\epsilon$, and a per-response threshold $\tau$ with population rates $\alpha_1=\mathbb P_{P}(s{>}\tau)$ and $q=\mathbb P_{R}(s{>}\tau)$, $q>\alpha_1$. The level-$\alpha$ one-sided binomial test of the tell count against null rate $\alpha_1$ controls type-I error for every $m$. If $m^\star$ is the smallest stable budget whose binomial power against $p_\epsilon=(1-\epsilon)\alpha_1+\epsilon q$ is at least $1-\delta$ for every $m\ge m^\star$, then $\mathbb P_{Q_{\epsilon'}}(\mathrm{flag})\ge1-\delta$ for all $\epsilon'\ge\epsilon$ and $m\ge m^\star$.
\end{proposition}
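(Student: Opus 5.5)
The plan is to reduce everything to the law of the tell count $K=\sum_{i=1}^m \mathbf{1}\{s(y_i)>\tau\}$. Under i.i.d.\ queries from $Q_{\epsilon'}$ in Eq.~\eqref{eq:mixture}, each response is a tell independently with probability
\[
\mathbb P_{Q_{\epsilon'}}(s>\tau)=(1-\epsilon')\alpha_1+\epsilon' q = p_{\epsilon'}.
\]
This follows by linearity of the mixture, so $K\sim\mathrm{B}(m,p_{\epsilon'})$. The test flags when $K\ge b_\alpha(m,\alpha_1)$, with $b_\alpha$ and $U_m$ as in Section~\ref{sec:audit-plan-construction}.

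Type-I control is immediate. On an honest endpoint $\epsilon'=0$ gives $p_0=\alpha_1$, so
\[
\mathbb P(\mathrm{flag})=U_m\bigl(b_\alpha(m,\alpha_1);\alpha_1\bigr)\le\alpha
\]
by the definition of $b_\alpha$ as the smallest cutoff with upper tail at most $\alpha$. This holds for every $m$ because $b_\alpha$ is recomputed at each $m$. I would note that $b_\alpha$ always exists: $b=m+1$ gives tail $0$, where that cutoff is read as the never-flag test.

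For power, the key ingredient is stochastic monotonicity of the binomial in its success probability. For fixed $m$ and $b$, $U_m(b;p)$ is nondecreasing in $p$. I would prove this by the standard coupling: set $K=\sum_i\mathbf{1}\{V_i\le p\}$ with $V_i$ i.i.d.\ uniform on $[0,1]$, which is pointwise nondecreasing in $p$. Since $q>\alpha_1$, $p_{\epsilon'}=\alpha_1+\epsilon'(q-\alpha_1)$ is nondecreasing in $\epsilon'$, so $p_{\epsilon'}\ge p_\epsilon$ for all $\epsilon'\ge\epsilon$. The cutoff $b_\alpha(m,\alpha_1)$ depends only on $m$ and the null rate, not on $\epsilon'$. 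Hence
\[
\mathbb P_{Q_{\epsilon'}}(\mathrm{flag})=U_m\bigl(b_\alpha(m,\alpha_1);p_{\epsilon'}\bigr)\ge U_m\bigl(b_\alpha(m,\alpha_1);p_\epsilon\bigr)\ge1-\delta
\]
for every $m\ge m^\star$, where the last inequality is exactly the defining property of the stable budget $m^\star$ (Eq.~\eqref{eq:mdil}).

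The only real subtlety is that binomial power is not monotone in $m$, because the discrete cutoff $b_\alpha$ jumps and causes sawtooth behaviour. This is why the statement, like Eq.~\eqref{eq:mdil}, defines $m^\star$ as the threshold beyond which power stays above $1-\delta$ for all larger $m$, rather than the first $m$ at which it does. I would make explicit that $m^\star$ is finite, so that the hypothesis is not vacuous. For fixed $p_\epsilon>\alpha_1$, the cutoff satisfies $b_\alpha(m,\alpha_1)/m\to\alpha_1$ by Hoeffding's inequality, while $K/m\to p_\epsilon$, so power tends to $1$. The guarantee is conditional on the population rates $(\alpha_1,q)$; estimation error in these rates is outside the proposition and is handled empirically in App.~\ref{app:dilscale}.
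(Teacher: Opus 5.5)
Your proof is correct and follows the paper's argument essentially step for step. The tell count is $\mathrm{B}(m,p_{\epsilon'})$, type-I control at every $m$ comes directly from the definition of the cutoff $b_\alpha(m,\alpha_1)$, and power carries over to every $\epsilon'\ge\epsilon$ because $p_{\epsilon'}$ is monotone in $\epsilon'$ and the binomial upper tail is monotone in its success rate at fixed $m$, while the stable-budget definition handles the sawtooth in $m$. Your coupling proof of tail monotonicity and your Hoeffding argument that $m^\star<\infty$ (for $\epsilon>0$) just make explicit two facts the paper asserts without proof.
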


\noindent Proposition~\ref{prop:e2e} is a population power calculation. It reduces to the any-tell $m^\star=\Theta(\epsilon^{-1}\log(1/\delta))$ regime only when a separating threshold gives $q=\Omega(1)$ and $\alpha_1=o(\epsilon)$; at a fixed null rate $\alpha_1=p_0>0$, the local budget is $\Theta\!\big(p_0(1-p_0)/(\epsilon^2(q-p_0)^2)\big)$. With estimated rates, the threshold must be fixed on a split separate from the one estimating the null, and Clopper--Pearson upper/lower bounds on $(\alpha_1,q)$ preserve the statement up to the chosen confidence. Certifying $\alpha_1\le\alpha/m$ is reference-sample intensive, which is why the empirical check below distinguishes the population guarantee from finite-calibration performance.

\medskip\noindent\textbf{Empirical check.} Proposition~\ref{prop:e2e} is type-I valid \emph{for the population} $\alpha_1$ (by the binomial test's validity, not a union bound); $m^\star$ sized for binomial power $\ge1-\delta$ at $p_\epsilon$ controls type-II. The finite-sample question is whether the \emph{estimated} $(\tau,\alpha_1,q)$ deliver this. We check it on the $17$-endpoint gateway ($c_{10,8}$, content-only), with train/calibration/audit splits disjoint at the \emph{raw-response} level so no episode is reused, at $(\alpha{=}0.05,\delta{=}0.1,\epsilon{=}0.3)$, calibrating against Clopper--Pearson bounds ($\gamma{=}0.1$). Across the $258/272$ feasible pairs the realized type-I is controlled \emph{in aggregate}---mean $0.013$, within $\alpha$ on $90\%$ of pairs---confirming the test-validity argument, at mean budget $m^\star{=}112$; realized power averages $0.73$ but meets the \emph{pre-registered} $0.9$ target ($\delta{=}0.1$) on only $17.8\%$ of feasible pairs ($12.6\%$ at $\epsilon{=}0.2$)---population-exact yet finite-sample under-powered on the median pair at this $m^\star$. The per-pair type-I carries a tail (mean $0.013$ but maximum $0.475$, with $10\%$ of pairs above $\alpha$), induced by the winner's-curse $m^\star$-minimization over the calibration grid $a_0\in\{0.02,0.05,0.1,0.15\}$ at the full confidence $\gamma$; a Bonferroni split (each Clopper--Pearson bound at $\gamma/4$ before minimizing) restores it---maximum $0.032$ at $\epsilon{=}0.2$ ($\sim$$0\%$ exceedance) and $0.168$ at $\epsilon{=}0.3$---at a larger budget (mean $m^\star$ $151\!\to\!223$ at $\epsilon{=}0.2$), whereas fixing a single $a_0$ a priori does \emph{not} control it (maximum $0.76$--$0.93$, $\sim$$18\%$ exceedance), confirming the tail is finite-calibration variance the union bound addresses, not grid optimism alone. The shortfall is thus a \emph{finite-sample} effect: each pair calibrates $(\tau,\alpha_1,q)$ on only $\sim$$33$ responses, so the $1{-}2\gamma$ confidence is loose and clumped scores transfer imperfectly between the calibration and audit halves. The guarantee is exact at the population level and as tight as the calibration sample allows, so a deployment should size its honest window accordingly. The infeasible $14$ pairs are returned \emph{indeterminate} rather than flagged.

\section{Multiple Diluents}
\label{app:multi}
A gateway may dilute one reference $P$ with a \emph{mixture} of $J$ substitutes, $Q=\big(1-\sum_{j=1}^{J}\epsilon_j\big)P+\sum_{j=1}^{J}\epsilon_j R_j$, with $\epsilon=(\epsilon_1,\dots,\epsilon_J)$ and $\epsilon_0=1-\sum_j\epsilon_j\ge0$. Both halves of \IRIS{} extend without new machinery.

\begin{proposition}[Multi-diluent audit]
\label{prop:multi}
Under $Q$ above with i.i.d.\ routing, at any threshold $\tau$ the tell probability is
\[
  p(\tau)=\epsilon_0\alpha_1(\tau)+\sum_j\epsilon_j q_j(\tau)\ge\epsilon_{\mathrm{tot}}\,q_{\min}(\tau),
\]
where $\epsilon_{\mathrm{tot}}=\sum_j\epsilon_j$ and $q_{\min}=\min_j q_j$. If the same threshold controls the honest tail, any-tell detection scales with the total foreign fraction as $\lceil\ln(1/\delta)/(\epsilon_{\mathrm{tot}}q_{\min})\rceil$. For quantification, if $u(y)\in\Delta^K$, $\bar g_M=\mathbb{E}_{M}[u(y)]$, and $G=[\bar g_{R_1}-\bar g_P,\dots,\bar g_{R_J}-\bar g_P]$, then $\mathbb{E}_Q[u]=\bar g_P+G\epsilon$, and the fractions are identifiable iff $G$ has full column rank.
\end{proposition}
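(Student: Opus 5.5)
The proof splits into three claims, each a direct computation under i.i.d.\ routing; we take them in the order stated.

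\textbf{Tell probability.} Under $Q$, each query first draws a latent label $M\in\{P,R_1,\dots,R_J\}$ with probabilities $(\epsilon_0,\epsilon_1,\dots,\epsilon_J)$ and then a response from that model. The law of total probability, conditioning on $M$, gives $p(\tau)=\epsilon_0\alpha_1(\tau)+\sum_j\epsilon_j q_j(\tau)$. For the lower bound, drop the nonnegative term $\epsilon_0\alpha_1(\tau)$ and use $q_j(\tau)\ge q_{\min}(\tau)$ termwise. This yields $p(\tau)\ge\sum_j\epsilon_j q_{\min}(\tau)=\epsilon_{\mathrm{tot}}q_{\min}(\tau)$.

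\textbf{Any-tell budget.} The responses are i.i.d.\ under $Q$, so the tell indicators are i.i.d.\ Bernoulli$(p(\tau))$. The argument of Thm.~\ref{thm:mix}(a) then applies with $\epsilon q(\tau)$ replaced by $p(\tau)$. The miss probability is
\[
(1-p(\tau))^m\le e^{-p(\tau)m}\le e^{-\epsilon_{\mathrm{tot}}q_{\min}(\tau)m},
\]
which is at most $\delta$ once $m\ge\lceil\ln(1/\delta)/(\epsilon_{\mathrm{tot}}q_{\min})\rceil$. Type-I control does not involve the diluents at all: Thm.~\ref{thm:mix}(b) applies verbatim under the hypothesis that the same threshold controls the honest tail, i.e.\ $\alpha_1(\tau)\le\alpha/m$.

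\textbf{Quantification.} Linearity of expectation over the mixture gives
\[
\mathbb E_Q[u]=\epsilon_0\bar g_P+\sum_j\epsilon_j\bar g_{R_j}.
\]
Substituting $\epsilon_0=1-\sum_j\epsilon_j$ and rearranging gives $\bar g_P+\sum_j\epsilon_j(\bar g_{R_j}-\bar g_P)=\bar g_P+G\epsilon$. Identifiability means the map $\epsilon\mapsto\bar g_P+G\epsilon$ is injective on the feasible simplex $\{\epsilon\ge0,\ \mathbf 1^\top\epsilon\le1\}$. For sufficiency, full column rank makes the affine map injective on all of $\mathbb R^J$, hence on the simplex.

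\textbf{Main obstacle: necessity of full column rank.} Suppose $Gv=0$ for some $v\ne0$. Two mixtures $\epsilon$ and $\epsilon+tv$ then give the same mean signature $\mathbb E_Q[u]$. To turn this into non-identifiability, both must be feasible for small $t>0$, which requires $\epsilon$ to lie in the relative interior of the simplex. We will therefore state identifiability as injectivity on a neighbourhood, or equivalently on the whole simplex when it contains interior points. Under that reading, a nontrivial kernel direction produces two distinct feasible fraction vectors with identical mean signatures, so full column rank is necessary.

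Finally, we note the standing caveat: this identifiability concerns the mean-signature moment equation on which the unmixing problem \eqref{eq:signature-unmixing} is built. It is not a claim about the full response distribution.
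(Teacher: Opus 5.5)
Your proposal is correct and follows the paper's proof step for step. The tell bound is total probability plus dropping $\epsilon_0\alpha_1$; the budget reuses the any-tell argument of Thm.~\ref{thm:mix}(a,b) with $\epsilon q\to\epsilon_{\mathrm{tot}}q_{\min}$; $\mathbb E_Q[u]=\bar g_P+G\epsilon$ comes from linearity; and necessity uses a kernel direction. The one soft spot is your hedge ``when it contains interior points'': $\Theta=\{\epsilon\ge0,\ \mathbf 1^\top\epsilon\le1\}\subset\mathbb R^J$ is full-dimensional, so an interior point such as $\epsilon^\star=\tfrac{1}{2(J+1)}\mathbf 1$ always exists, as the paper exhibits. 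Hence rank deficiency always breaks injectivity on $\Theta$, and you need neither the condition nor the neighbourhood reading.
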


\paragraph{Detection.}
At a per-response level $\tau$ the tell probability is $p=\epsilon_0\alpha_1+\sum_j\epsilon_j q_j$ with $q_j=\mathbb P_{R_j}(s>\tau)$, so $p\ge(\sum_j\epsilon_j)\min_j q_j$ once every diluent separates ($q_j>\alpha_1$). The any-tell budget of Theorem~\ref{thm:mix}(a) therefore holds in the \emph{total} foreign fraction $\epsilon_{\mathrm{tot}}=\sum_j\epsilon_j$: the $1/\epsilon$ law detects \emph{any} foreign traffic at a cost set by $\epsilon_{\mathrm{tot}}$ and the weakest separating diluent, and---as in the single-diluent case---needs no enrollment of the $R_j$.

\paragraph{Nonnegative unmixing.}
Let the classifier map a response to its posterior vector $u(y)=\hat{\mathbb P}(\cdot\mid y)\in\Delta^{K}$ over the $K$ enrolled endpoints, and let the \emph{signature} of endpoint $M$ be $\bar g_M=\mathbb{E}_{y\sim M}[u(y)]$ (estimated on enrollment). Under $Q$, $\mathbb{E}_Q[u]=\epsilon_0\,\bar g_P+\sum_j\epsilon_j\,\bar g_{R_j}$, so from the observed mean posterior $\bar v$ over $m$ audit queries the simplex-constrained least squares is
\begin{align*}
  \hat\eta
  &=\arg\min_{w\ge0,\,\mathbf{1}^{\top}w\le1}
  \Big\|\bar v-\Big((1-\mathbf{1}^{\top}w)\bar g_P\\
  &\hspace{3.9cm}+\sum_jw_j\bar g_{R_j}\Big)\Big\|_2^2 .
\end{align*}
The optimizer $\hat\eta$ is estimated from audit responses, not supplied as a ground-truth quantity. Under this multi-diluent model and the full-rank condition below, it estimates the true fraction vector $\epsilon$; in the single-diluent main audit, \IRIS{} uses it only to rank candidate identities and reports the tell-based $\hat\epsilon$. This is the standard quantification-learning / mixture-proportion problem. The fractions are \emph{identifiable} iff the shift vectors $\{\bar g_{R_j}-\bar g_P\}_{j=1}^{J}$ are linearly independent (the matrix $G=[\bar g_{R_1}-\bar g_P,\dots,\bar g_{R_J}-\bar g_P]$ has full column rank $J$)---the multi-diluent analogue of the single-diluent condition $q>\alpha_1$ (at $J{=}1$, $\bar g_{R_1}\neq\bar g_P$). Because posterior signatures live in the $K$-simplex, every column sums to zero and $J\le K-1$ is necessary. Near-twin diluents collapse columns of $G$ and are non-identifiable, exactly mirroring the $\epsilon^{-2}$ wall; a delta-method or bootstrap over the signatures propagates the interval. \emph{Empirical validation}: on the $17$-endpoint gateway ($c_{10,8}$, content-only, $m{=}200$), simultaneously diluting a reference with $J{=}2$ ($\epsilon{=}.2,.2$) or $J{=}3$ ($\epsilon{=}.15$ each) full-rank diluents and solving the simplex-constrained least squares above recovers \emph{every} per-diluent fraction to a median worst-case error of $0.03$--$0.04$ ($90$th percentile $0.07$; $98\%$ of configurations within $0.1$), with all $120$ sampled triples identifiable (none rank-deficient). The single-machinery claim thus holds: detection and per-diluent quantification extend to several simultaneous diluents at the same accuracy as the single-diluent case (App.~\ref{app:estimateq}).

\section{Backend Separability}
\label{app:backend}
Extending Section~\ref{sec:exp}: the same nominal model served by local Ollama vs.\ a commercial gateway is distinguishable from visible strings at $T{=}1$, on \emph{content-only} (length/format-invariant) features, so the signal is a sampling-distribution difference rather than a serving-format artifact. Table~\ref{tab:backend} gives per-probe verification AUROC (local vs.\ gateway) and the raw output diversity (distinct responses over $120$ local / $100$ gateway draws), which exposes the mechanism: local quantized, $\mathrm{top\text{-}}k$-truncated decoding is markedly more mode-collapsed.

\begin{table}[!ht]
\centering\footnotesize
\caption{Backend detectability of one nominal model (local Ollama vs.\ gateway), $T{=}1$. AUROC is per-response
($m{=}1$) on content-only features; all probes reach $\approx1.0$ by $m{=}8$. Probe labels abbreviate
$c_{10,1}$, $c_{100,1}$, $c_{2,16}$, and $c_{10,8}$;
diversity is distinct/total responses.}
\label{tab:backend}
\setlength{\tabcolsep}{2pt}
\begin{tabular}{@{}llcc@{}}
\toprule
model & probe & AUROC$_{m=1}$ & div.\ (loc/gw) \\
\midrule
Qwen3-8B  & digit   & $.76$ & $1/120$ vs $11/100$ \\
          & num100  & $.86$ & $3/120$ vs $24/100$ \\
          & bits16  & $.91$ & $42/120$ vs $88/100$ \\
          & digits8 & $.95$ & $34/120$ vs $97/100$ \\
Qwen3-32B & num100  & $.93$ & $16/120$ vs $17/100$ \\
          & bits16  & $.90$ & $62/120$ vs $40/100$ \\
          & digits8 & $.97$ & $82/120$ vs $29/100$ \\
\bottomrule
\end{tabular}
\end{table}

\begin{figure}[t]
  \centering
  \includegraphics[width=0.92\columnwidth]{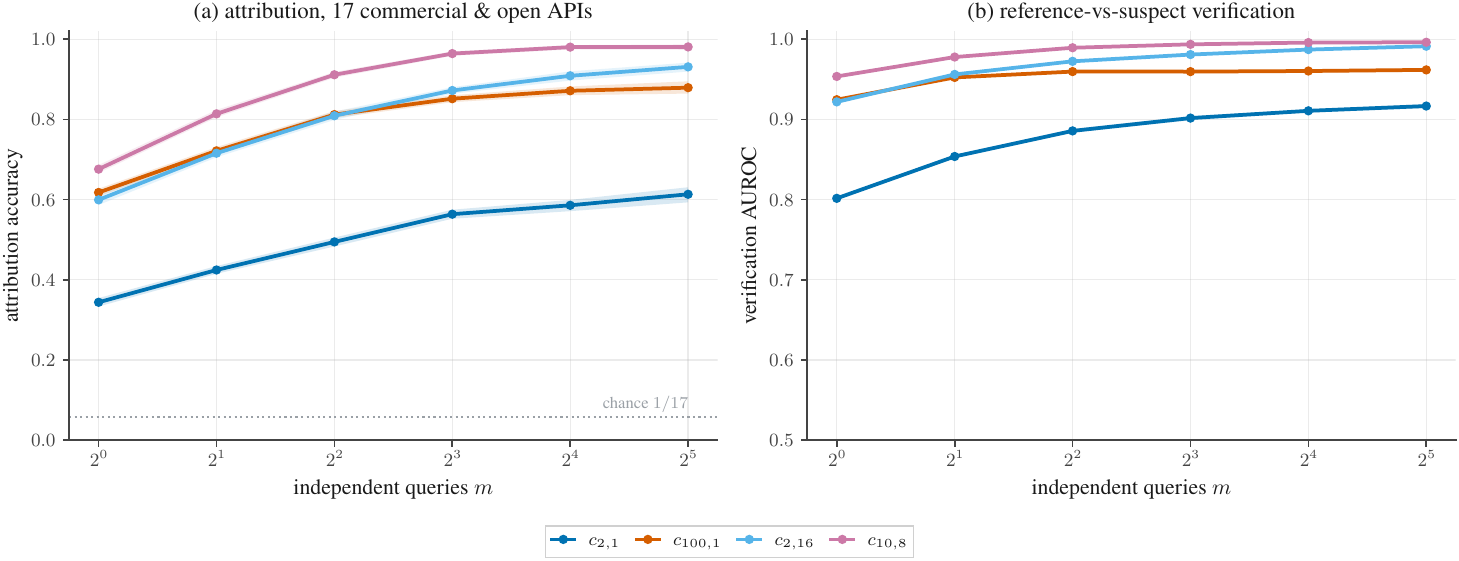}
  \caption{External validation on $17$ commercial and open APIs (chance $1/17$, dotted). (a)~attribution
  accuracy and (b)~reference-vs-suspect verification AUROC vs.\ independent queries $m$; shaded bands are
  $95\%$ bootstrap intervals over $40$ splits. The $e^{-I m}$ law and entropy gating replicate off the
  local testbed onto frontier closed APIs (Section~\ref{sec:exp}).}
  \label{fig:external}
\end{figure}

\section{Queries, Not Length}
\label{app:length}
The body claims that evidence accumulates in the number of \emph{independent} queries $m$, not within-response length $L$ (Prop.~\ref{prop:len}). Table~\ref{tab:length} is the direct empirical support: single-response attribution accuracy ($K{=}3$ Qwen3 sizes, chance $1/3$) as a function of the requested $L$, under two prompt styles (chat-templated and direct). Accuracy is \emph{non-monotone} in $L$---it wanders up and down (e.g.\ chat: $.94\!\to\!1.00\!\to\!.97\!\to\!.83\!\to\!.89$ over $L{=}16$--$256$)---and a linear fit of accuracy on $L$ has \emph{leave-one-condition-out cross-validated} $R^2<0$ under \emph{both} styles ($-1.59$ chat, $-4.73$ direct)---worse than predicting the mean (an in-sample OLS-with-intercept $R^2$ is non-negative by construction; the negative value is the held-out score, our point being predictive uselessness of $L$, not in-sample fit). These conditions vary the \emph{requested} length, so they are different \emph{contexts} rather than truncations of one fixed response (Prop.~\ref{prop:len}(i)); the non-monotonicity is consistent with but not predicted by the same-context monotonicity (full argument in the proof of Prop.~\ref{prop:len}, App.~\ref{app:proofs}). Length still carries evidence---a long response is not worthless---but it does not accumulate \emph{linearly} and is an unreliable budget axis, whereas the error exponent tensorizes \emph{exactly} across independent queries $m$ (the $e^{-I m}$ law, Fig.~\ref{fig:mscaling}a). This is why \IRIS{} spends its budget on more short independent queries rather than longer responses.

The same accounting fixes the token-optimal length: turning a length-$L$ query into a per-token yield question makes many short independent queries beat one long response whenever they give larger tell probability per token.

\begin{corollary}[Token yield]
\label{cor:len}
Fix a token budget $B=mL$ as $m$ independent queries of length $L$, where a length-$L$ query has tell probability $q(L)$. The miss probability is at most $e^{-\epsilon q(L)\,B/L}$, so this bound is minimized by maximizing $q(L)/L$.
\end{corollary}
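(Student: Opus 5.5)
The plan is to reduce the token-budget question to the any-tell bound of Theorem~\ref{thm:mix}(a), with the number of independent calls as the only accumulating quantity (Prop.~\ref{prop:len}). Fix the token budget $B=mL$ and a length $L$ dividing $B$, so that $m=B/L$ independent calls are issued. Under the mixture $Q_\epsilon$ of Eq.~\eqref{eq:mixture}, each call is routed to $R$ independently with probability $\epsilon$. A routed call then produces a tell with probability $q(L)$, so a single call produces a tell with probability at least $\epsilon q(L)$; the honest component can only add tells, and for a miss bound we may drop it. Independence across calls gives a miss probability of the any-tell test of at most $(1-\epsilon q(L))^{m}$.

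Next I apply $1-x\le e^{-x}$ for $x\in[0,1]$, exactly as in Theorem~\ref{thm:mix}(a). This gives
\[
(1-\epsilon q(L))^{B/L}\le \exp\{-\epsilon q(L)\,B/L\}.
\]
For fixed $B$ and $\epsilon$ the exponent is $-\epsilon B\cdot q(L)/L$. Since $\exp$ is monotone and $\epsilon B>0$, the bound is minimized over admissible $L$ precisely by maximizing the per-token yield $q(L)/L$. Equivalently, inverting the bound gives a token cost of $B=\lceil \ln(1/\delta)\,L/(\epsilon q(L))\rceil$ for target miss $\delta$, which is likewise minimized at the maximizer of $q(L)/L$.

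The one point needing care, and the main obstacle, is the integrality of $m=B/L$. There are two ways to handle it. One option is to state the corollary for $L$ dividing $B$. The other is to use $m=\lfloor B/L\rfloor$ and note that the bound then reads $e^{-\epsilon q(L)\lfloor B/L\rfloor}$, which agrees with $e^{-\epsilon q(L)B/L}$ up to a factor of at most $e^{\epsilon q(L)}\le e$. I would adopt the divisibility convention, since the corollary speaks of a token budget $B=mL$.

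A related subtlety is that $q(L)$ depends on $L$ through the probe context, not through truncation of a single response. This is fine because the argument treats each length as its own probe $c_{n,L}$ with its own tell rate. Prop.~\ref{prop:len} guarantees that calls, not tokens, are the independent units, so no within-response dependence enters the product.

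I would close by noting that the statement concerns the bound only. The true miss probability $(1-\epsilon q(L))^{B/L}$ is also decreasing in $-\tfrac{L}{1}\log(1-\epsilon q(L))^{-1}$, but its exact maximizer can differ slightly from that of $q(L)/L$ when $\epsilon q(L)$ is not small. That is why the corollary is phrased for the exponential bound.
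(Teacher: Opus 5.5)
Your proposal is correct and follows the paper's proof essentially step for step. Like the paper, you assume $L\mid B$, drop the honest term so the per-query tell probability is at least $\epsilon q(L)$, and use cross-call independence to get $(1-\epsilon q(L))^{B/L}$. You then relax via $1-x\le e^{-x}$ and use monotonicity of the exponential to reduce the choice of $L$ to maximizing $q(L)/L$.

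One small slip in your closing aside, which lies outside the corollary itself. The exact miss is $(1-p_\epsilon(L))^{B/L}$ with $p_\epsilon(L)=(1-\epsilon)\alpha_1(L)+\epsilon q(L)$, not $(1-\epsilon q(L))^{B/L}$. So matching its optimizer to that of $q(L)/L$, even at first order, also requires $\alpha_1(L)=o(\epsilon q(L))$, as the paper notes.
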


\noindent The exact geometric miss gives the same optimizer to first order when the false-positive probability is controlled. Figure~\ref{fig:yield} contrasts the two yield shapes that fix where $L^\star$ lands, and the shape follows from how the tell is computed rather than from any extra assumption. A tell that can fire at each position gives a concave $q(L)$ whose chord slope $q(L)/L$ through the origin only falls, so the budget is best spent on the shortest admissible queries; a tell that reads a whole-response statistic needs a few symbols before it engages, giving an S-shaped $q(L)$ and an interior token-optimal length. In both regimes the linear-in-$m$ accumulation of the $e^{-Im}$ law (Prop.~\ref{prop:exp}) is the reliable lever, while $L$ is not.

\begin{figure}[t]
  \centering
  \includegraphics[width=\columnwidth]{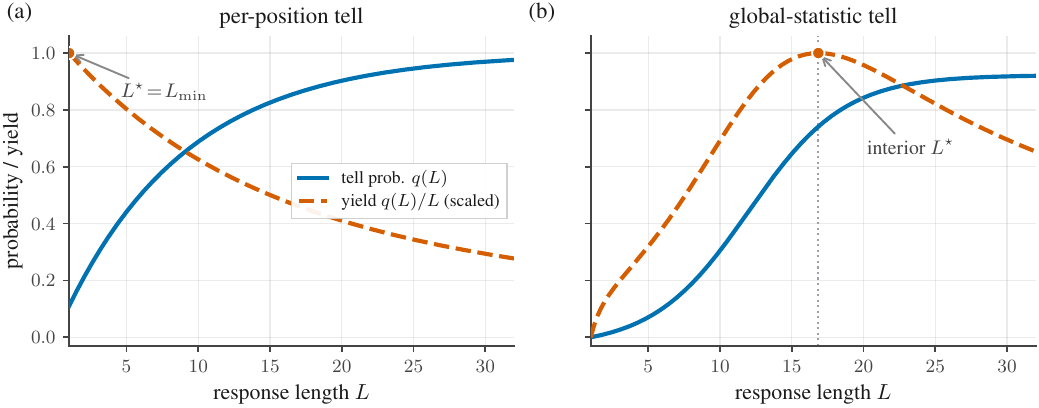}
  \caption{Token-optimal length maximizes the per-token tell yield $q(L)/L$, not $L$ (Cor.~\ref{cor:len}); the yield is rescaled to its own maximum so the panels share an axis. \textbf{(a)} A per-position tell makes $q(L)$ concave with $q(0){=}0$, so $q(L)/L$ is decreasing and the optimum is the shortest admissible length ($L^\star{=}L_{\min}$): many short independent queries dominate. \textbf{(b)} A global-statistic tell (compression, $n$-gram entropy, $\chi^2$) is information-starved at small $L$, so $q(L)$ is S-shaped and the yield peaks at an interior $L^\star$. Which regime holds is empirical (Table~\ref{tab:length}); either way the accumulation axis is the query count $m$, not $L$.}
  \label{fig:yield}
\end{figure}

\begin{table*}[t]
  \centering\footnotesize\setlength{\tabcolsep}{6pt}
  \caption{Single-response attribution accuracy vs.\ response length $L$ ($K{=}3$, chance $1/3$), two prompt
  styles. Accuracy is non-monotone in $L$ and a linear fit has $R^2<0$ (worse than the mean)---length is not
  an accumulation axis, unlike independent queries $m$ (Fig.~\ref{fig:mscaling}). ``---'' = length not in
  that style's grid.}
  \label{tab:length}
  \begin{tabular}{ccc}
    \toprule
    $L$ & chat & direct \\
    \midrule
    $16$  & $.94$ & $.90$ \\
    $24$  & ---   & $.92$ \\
    $32$  & $1.00$& $.83$ \\
    $48$  & ---   & $.93$ \\
    $64$  & $.97$ & $.83$ \\
    $96$  & ---   & $.89$ \\
    $128$ & $.83$ & $.95$ \\
    $192$ & ---   & $.88$ \\
    $256$ & $.89$ & ---   \\
    \midrule
    $R^2$ (acc vs.\ $L$) & $-1.59$ & $-4.73$ \\
    \bottomrule
  \end{tabular}
\end{table*}

\section{45-Model Scaling}
\label{app:scale}
We re-run the full pipeline on a pooled library of $45$ commercial/open models accessed via OpenRouter ($27{,}000$ responses). \emph{Attribution} (chance $1/45{=}2.2\%$) reaches accuracy $0.85$ at $m{=}8$ and $0.90$ at $m{=}16$ on $c_{10,8}$, with reference-vs-suspect verification AUROC $0.98$ (Fig.~\ref{fig:gw45}); per-model recall is broad (median $0.91$, and $0.82$ even excluding the five easiest models), so it is not carried by a few trivial outliers. Because $\sim$13 models emit a verbose preamble whose digits the character parser retains, length-ablation alone is not a complete control here; the \emph{decisive} de-confounding is a clean-format control restricting to the $K{=}32$ models with indistinguishable output length ($7$--$12$ chars), which still gives accuracy $0.83$ at $m{=}8$ (chance $1/32{=}3.1\%$) and AUROC $0.97$---confirming the signal is the per-token sampling distribution, not response length. \emph{Dilution} across six frontier models follows an approximate inverse-power wall; on \emph{content-only} features the log--log slope is ${\approx}{-}1.4$ and the median length-matched pair is detected at $\epsilon{=}0.1$ in ${\approx}64$ queries (the smaller $32$-query figure on the full feature set is inflated by one length-distinguishable model). Detectability is strongly pair-dependent: several frontier-vs-frontier dilutions remain undetectable at $\epsilon{=}5$--$10\%$ even at $256$ queries. \emph{Open-set} rejection (random $20\%$ of models held out as unknown, knownness $=\max_c\hat{\mathbb P}(c\mid y)$) degrades modestly with pool size on a like-for-like protocol (same data, same splits, content-only features): $c_{10,8}$ open-set AUROC@$m{=}8$ falls $0.73\!\to\!0.70$ from $K{=}17$ to $K{=}45$ (reject-at-$95\%$-accept $0.26\!\to\!0.11$), as an unknown grows likelier to overlap \emph{some} model in a denser enrolled set.

\begin{figure}[t]
  \centering
  \includegraphics[width=\columnwidth]{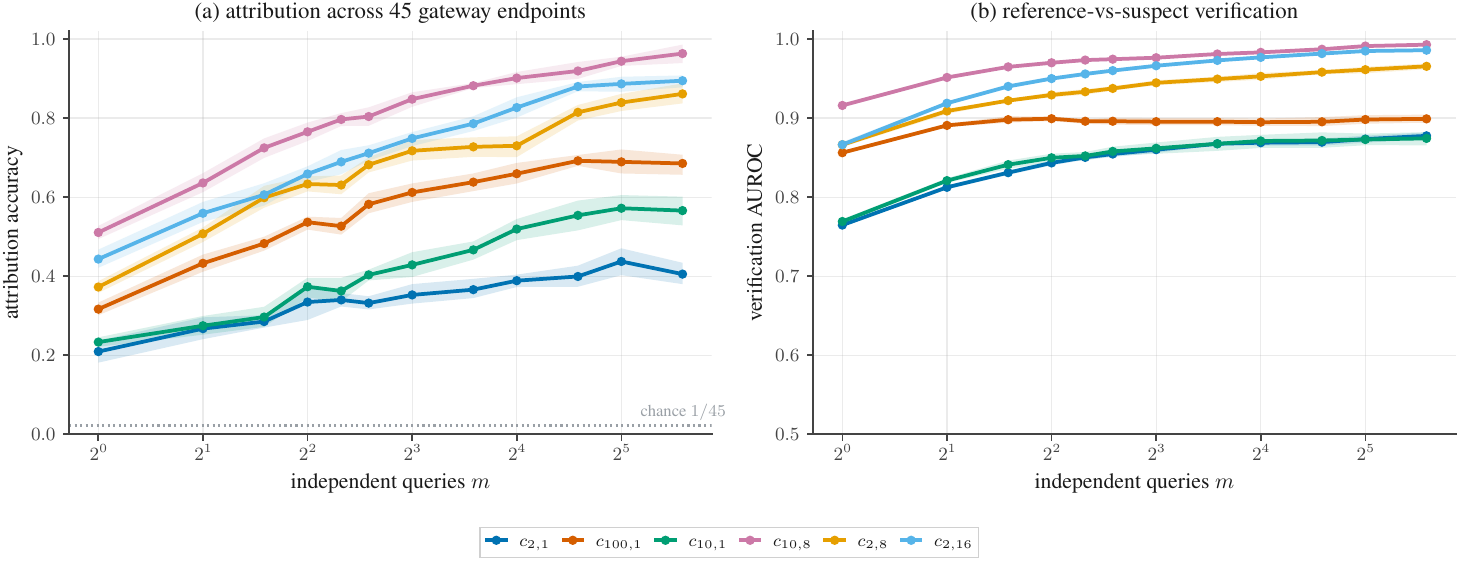}
  \caption{Scaling to $45$ models via OpenRouter (chance $1/45{=}2.2\%$, dotted). (a)~attribution accuracy and
  (b)~reference-vs-suspect verification AUROC vs.\ independent queries $m$, per probe; shaded bands are $95\%$
  bootstrap intervals over $10$ splits. The $e^{-I m}$ law and entropy gating hold at $K{=}45$:
  $c_{10,8}$ reaches accuracy $0.85$ at $m{=}8$ ($0.96$ at $m{=}48$) and verification AUROC
  $0.98$, while $c_{2,1}$ stays weak.}
  \label{fig:gw45}
\end{figure}

\begin{table}[t]
  \centering\footnotesize\setlength{\tabcolsep}{4pt}
  \caption{Length/format ablation at $K{=}45$ ($179\!\to\!144$ features): attribution accuracy and
  verification AUROC at $m{=}8$, full vs.\ length-ablated. The high-entropy sequence probes are
  near-invariant ($\le0.03$ drop), so their signal is the per-token sampling distribution, not response
  length; the lower-entropy $c_{100,1}$ relies more on length.}
  \label{tab:gw45abl}
  \begin{tabular}{lcccc}
    \toprule
    & \multicolumn{2}{c}{accuracy @ $m{=}8$} & \multicolumn{2}{c}{AUROC @ $m{=}8$} \\
    \cmidrule(lr){2-3}\cmidrule(lr){4-5}
    probe & full & abl.\ & full & abl.\ \\
    \midrule
    $c_{10,8}$    & $.855$ & $.844$ & $.976$ & $.973$ \\
    $c_{2,16}$     & $.725$ & $.691$ & $.967$ & $.955$ \\
    $c_{100,1}$ & $.611$ & $.502$ & $.893$ & $.871$ \\
    $c_{10,1}$       & $.446$ & $.356$ & $.860$ & $.831$ \\
    \bottomrule
  \end{tabular}
\end{table}

\section{Base Specificity}
\label{app:knobs}
The economic threat is covert substitution of the \emph{base model} (Section~\ref{sec:prelim}); the audit must therefore be \emph{specific}---flagging a model substitution but not an honest same-model sampler/effort retune. Table~\ref{tab:specificity} runs the dilution audit on a substitution ($\epsilon{=}1$, $m{=}80$, content-only) for each suspect type. An adjacent-temperature retune of the same model yields $\hat\epsilon{=}0.04$ (flagged $8\%$, barely above the false-positive floor), while a base-model substitution yields $\hat\epsilon{=}0.53$---a $13\times$ separation; only the extreme, observable $T{=}0$ greedy collapse registers among samplers. Reasoning effort leaves a residual ($\hat\epsilon{=}0.14$) but is economically transparent and directly observable from returned token counts (and removable by length-matching), so it is not a covert dilution vector.

\begin{table}[t]
  \centering\footnotesize\setlength{\tabcolsep}{4pt}
  \caption{Specificity of the dilution audit: substitution ($\epsilon{=}1$, $m{=}80$, content-only) by
  suspect type. $\hat\epsilon$ is the reported routing fraction (deployed lower bound), ``flag'' the $p{<}0.01$
  detection rate. A specific audit flags a base-model substitution, not an honest same-model retune. The cross-model
  row is the $3$-model temperature pool (moderate margin); the $17$-endpoint matrix (App.~\ref{app:dilscale})
  gives the full base-model picture.}
  \label{tab:specificity}
  \begin{tabular}{@{}lcc@{}}
    \toprule
    suspect (whole stream) & $\hat\epsilon$ & flag rate \\
    \midrule
    same config (false-positive floor)        & $0.01$ & $0.00$ \\
    same model, adjacent-$T$ retune            & $0.04$ & $0.08$ \\
    same model, far $T$ ($\to1.5/2.0$)         & $0.11$ & $0.30$ \\
    same model, $T{=}0$ greedy collapse        & $0.30$ & $0.60$ \\
    same model, reasoning effort (min$\leftrightarrow$high) & $0.14$ & $0.61$ \\
    \textbf{different base model} (substitution) & $\mathbf{0.53}$ & $\mathbf{0.78}$ \\
    \bottomrule
  \end{tabular}
\end{table}

We next confirm which knobs leave \emph{any} visible-string signature, varying one knob on a fixed model on content-only (length-invariant) features (Fig.~\ref{fig:knobs}); in every case content-only AUROC tracks full-feature AUROC, so none of these effects is a length artifact. Quantization---a cost-relevant precision downgrade, the one borderline same-family cheat---is probed on the real gateway by pinning OpenRouter \emph{providers} that serve one model at different declared precision.

\begin{figure}[t]
\centering
\includegraphics[width=\columnwidth]{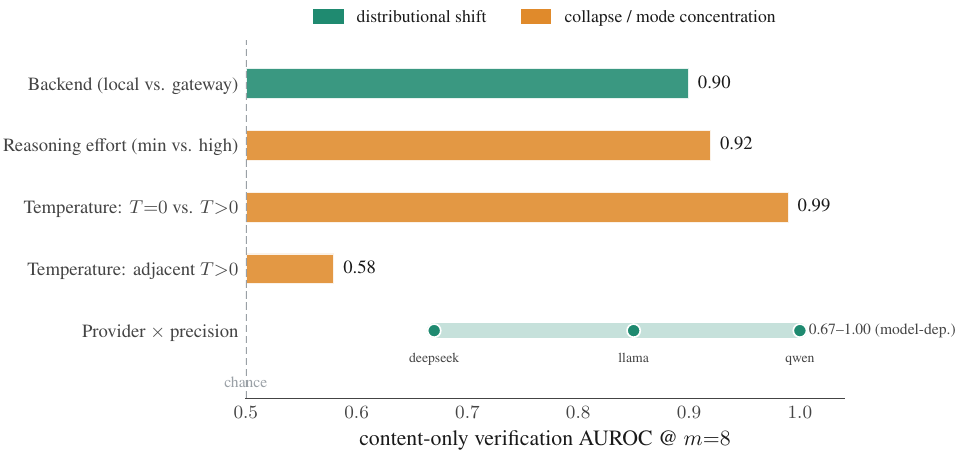}
\caption{Which serving knobs leave a visible-string signature: content-only (length-invariant) verification
AUROC at $m{=}8$ for same-model variants, colored by mechanism. The audit is specific to the base model---adjacent
operating temperatures fall to chance (an honest retune is not flagged), provider/precision is model-dependent,
and the only strong sampler signal is the extreme $T{=}0$ collapse.}
\label{fig:knobs}
\end{figure}

\paragraph{Provider\,$\times$\,precision.}
Two production endpoints serving the nominally identical model at different declared precision on \emph{different providers} are distinguishable from visible strings (content-only AUROC@$m{=}8$: \texttt{qwen3-235b} bf16/fp8 $1.00$, \texttt{llama-3.3-70b} $0.85$, \texttt{deepseek-v3.2} fp4/fp8 $0.67$). Because provider and precision are perfectly confounded in this design (no same-provider precision contrast), this evidences \emph{endpoint-configuration} identifiability, not weight precision in isolation. The strong end is partly an output-diversity split (the \texttt{qwen3-235b} fp8 endpoint collapses to $1$--$2$ distinct strings per $100$, the bf16 endpoint stays diverse), and the \texttt{deepseek} fp4-vs-fp8 pair---both fully diverse---sits near chance, so identifiability is model-dependent and degenerates when both endpoints sample broadly.

\paragraph{Temperature and effort.}
Sampler choice perturbs the audit only at the extremes; a temperature retune over the normal operating range is essentially invisible. The one strong sampler tell is \emph{low-temperature collapse}: $T{=}0$ vs.\ any $T{>}0$ separates at AUROC ${\approx}0.99$, because greedy decoding repeats a single string (\texttt{llama-4} at $T{=}0$ emits one distinct $16$-bit string over $80$ draws). Between \emph{adjacent} operating temperatures the signal nearly vanishes: AUROC falls to ${\approx}0.58$ and \texttt{T1.5}-vs-\texttt{T2.0} is at chance, because the across-sample diversity that carries the signal has already saturated by $T{\approx}1$. These two regimes are exactly the second-order analysis of App.~\ref{app:tempmag}. In the interior, adjacent temperatures separate only at $O((\Delta\beta)^2)$---the interior Fisher quadratic, which gives the observed $\approx0.58$---while $T{=}0$ vs.\ $T{>}0$ is a separate support-collapse tell. So a temperature retune is, correctly, near-invisible to a \emph{model}-dilution audit, while a backend substitution is not. Reasoning effort (minimal vs.\ high) likewise leaves a fingerprint: pooled-$m{=}8$ content-only AUROC is $0.92$, though a single response gives only $0.72$--$0.76$. The mechanism is \emph{mode/prefix concentration}, not a flatter distribution. Minimal effort anchors on preferred openings---on the digit probe, \texttt{gpt-oss-120b}'s prefix \texttt{274} appears $41\%$ vs.\ $8\%$ of the time, and first-digit entropy drops $2.76\!\to\!1.88$. The overall character entropy is unchanged, so we do \emph{not} claim an entropy law: only the mode mass moves. The take-away is that the audit's signal is dominated by the \emph{base model}: the sampler contributes only at the extremes (greedy collapse), and reasoning effort---though it leaves a residual mode-concentration signature---is observable from token counts and billed transparently, so the alarm is correctly reserved for a change of the underlying model (Table~\ref{tab:specificity}).

\subsection{Precision Retunes}
\label{app:quant}
Quantization is the cost-relevant \emph{same-weights} cheat: serving a $4$-bit model where the reference is $16$-bit. We audit it directly under controlled mixtures (reference \texttt{qwen3:0.6b-fp16}, substitute the default \texttt{q4}), measuring the per-query tell rate $q$ and the resulting budget $m^\star=\ln(1/\delta)/(\epsilon q)$ against the false-positive floor $\alpha{\approx}0.05$ (Table~\ref{tab:quant}). Three findings, each guarded by a control. First, the cheat \emph{is} detectable, but only on a high-entropy probe: on $c_{2,16}$, $q{=}0.42$ clears the floor ($m^\star{\approx}72$ at $\epsilon{=}0.1$), whereas on low-entropy $c_{100,1}$ it does not ($q{=}0.06$). Second, the signal is genuine per-token randomness, not response length: under a length/format ablation ($179{\to}144$ features) the $c_{2,16}$ tell barely moves ($q{:}0.42{\to}0.37$, $m^\star{\approx}82$), while the apparent $c_{100,1}$ signal collapses to $q{=}0$---it was a length artifact. Third, a fp16-vs-fp16 negative control (two independent fp16 runs) stays below the floor on both probes ($q\le0.03$), so the audit does not manufacture a tell between identical configurations. We therefore do \emph{not} claim a clean entropy law for quantization: detectability is probe- and model-dependent, but on the right probe a precision downgrade is a real, length-robust, control-validated tell.

\begin{table}[t]
  \centering\footnotesize\setlength{\tabcolsep}{3.5pt}
  \caption{Auditing a precision cheat (\texttt{q4} vs.\ \texttt{fp16}, ref \texttt{qwen3:0.6b}). Per-query
  tell rate $q$ (mean $[95\%$ CI$]$) and budget $m^\star$ at $\epsilon{=}0.1$; ``detect'' marks $q$ clearing the
  $\alpha{\approx}0.05$ false-positive floor. The last column is a fp16-vs-fp16 negative control. Detectable
  only on the high-entropy probe, and the tell survives length ablation; the low-entropy signal is a length
  artifact that ablation removes.}
  \label{tab:quant}
  \begin{tabular}{llccc}
    \toprule
    probe & features & $q$ (q4 vs.\ fp16) & det.\ & $m^\star_{\epsilon=.1}$ \\
    \midrule
    $c_{2,16}$     & full ($52$)      & $.42\,[.27,.52]$ & yes & $72$ \\
    $c_{2,16}$     & len-abl.\ ($28$) & $.37\,[.17,.51]$ & yes & $82$ \\
    $c_{100,1}$ & full ($165$)     & $.06\,[.02,.11]$ & no  & --- \\
    $c_{100,1}$ & len-abl.\ ($132$)& $.00\,[.00,.00]$ & no  & --- \\
    \midrule
    \multicolumn{5}{l}{\emph{fp16-vs-fp16 negative control} (length-ablated):}\\
    $c_{2,16}$     & abl.\ ($28$)     & $.03\,[.01,.07]$ & no  & --- \\
    $c_{100,1}$ & abl.\ ($132$)    & $.00\,[.00,.01]$ & no  & --- \\
    \bottomrule
  \end{tabular}
\end{table}

\paragraph{Across model sizes.}
The $0.6$B audit above leaves open whether the precision cheat is detectable at scale. We collected a fresh q4-vs-fp16 pair for \texttt{qwen3} at $0.6$B, $1.7$B and $4$B ($100$ repeats each) and find detectability \emph{grows} with size, on length-ablated (content-only) features: on the high-entropy $c_{2,16}$ the per-query tell rate is $q{=}0.35$ at $0.6$B, $0.74$ at $1.7$B (AUROC $0.95$), and $1.00$ at $4$B (AUROC $1.00$), so the budget $m^\star{=}\ln(1/\delta)/(\epsilon q)$ at $\epsilon{=}0.1$ falls from ${\approx}86$ to ${\approx}30$. At $4$B the cheat is caught on \emph{every} probe; at smaller sizes it is probe-dependent (the low-entropy $c_{100,1}$ signal is a length artifact that ablation removes, $q{\to}0$). Larger models have sharper, more idiosyncratic sampling laws that quantization perturbs more visibly, so the same-weights precision downgrade---the realistic covert cheat---is auditable and increasingly so with scale, on the high-entropy sequence probe \IRIS{} already leads with.

\section{Feature Sets}
\label{app:featureset}
\IRIS{} uses two feature groups (Section~\ref{sec:prelim}): the full $179$-feature set and the length/format-invariant \emph{content-only} subset. Table~\ref{tab:featureset} states which set produces each headline result. The rule is principled: any claim that a \emph{substitution} occurred---one model/endpoint/serving-config standing in for another---is made on content-only features, so it cannot be an artifact of length or formatting; the information-rate, budgeting, and probe-ranking results use the full set and are separately shown robust to a length/format ablation ($179{\to}144$).

\begin{table}[t]
  \centering\footnotesize\setlength{\tabcolsep}{3pt}
  \caption{Feature set per headline result. ``Content-only'' = length/format-invariant subset; ``full'' = all $179$
  features. Every \emph{substitution}/endpoint/config claim is content-only.}
  \label{tab:featureset}
  \begin{tabular}{p{0.50\columnwidth}p{0.42\columnwidth}}
    \toprule
    Result & Feature set \\
    \midrule
    $e^{-I m}$ law, $m$-not-$L$ (\S\ref{sec:exp}) & full; replicated content-only ($144$); $R^2{<}0$ in $L$ \\
    $I$ entropy/temperature gating & full \\
    Budget-from-$\widehat{I}_{\mathrm{auc}}$ (Tab.~\ref{tab:safety},~\ref{tab:lcb}) & full (calibration AUROC) \\
    FPR calibration (Tab.~\ref{tab:fpr}) & full \\
    Probe choice by $\widehat{I}_{\mathrm{auc}}$ & full \\
    Dilution $1/\epsilon$ (Fig.~\ref{fig:mixrouting}) & full; $45$-pool slope content-only \\
    Baselines (Tab.~\ref{tab:capability}) & full (clean attribution) \\
    External $17$ APIs (Fig.~\ref{fig:external}) & full; length-ablated check \\
    Endpoint-vs-model / backend (Tab.~\ref{tab:backend}) & \textbf{content-only} \\
    Dilution audit at scale + live (\S\ref{sec:extra}, App.~\ref{app:dilscale}) & \textbf{content-only} (detection, tell-rate, $\hat\epsilon$) \\
    $45$-pool open-set (App.~\ref{app:scale}) & \textbf{content-only}; attribution de-confounded by length-matched $K{=}32$ ($.83@8$) \\
    Serving-knob identifiability (Fig.~\ref{fig:knobs}) & \textbf{content-only} \\
    \bottomrule
  \end{tabular}
\end{table}

\section{Paraphrase Robustness}
\label{app:paraphrase}
The probe family is conspicuous, so an adaptive router could keyword-detect ``generate random bits'' and serve it honestly while diluting other traffic. We test how far paraphrasing defeats this by collecting responses to four \emph{stealth paraphrases} of $c_{2,16}$ that preserve the $16$-i.i.d.-bit semantics but reword the surface (A/B-bucket assignment, train/validation tagging, a protocol nonce, fair coin tosses; $80$ repeats $\times$ the $K{=}6$ ladder). Three findings. (i)~\emph{The probe is detectable}: a TF--IDF\,$+$\,logistic classifier trained on obvious random-generation prompts flags $13/13$ of them and still flags $3/4$ stealth paraphrases (only the nonce framing slips through), so paraphrasing only \emph{partially} evades a keyword router. (ii)~\emph{The fingerprint is in the sampling distribution, not the wording}: models stay near-perfectly identifiable under every paraphrase (attribution accuracy $0.99$--$1.00$ at $m{=}8$, chance $1/6$). (iii)~\emph{It partially transfers across phrasings}: a classifier trained on the \emph{plain} probe attributes the reworded responses at $0.50$--$0.75$ accuracy at $m{=}8$ (chance $0.17$), so each framing shifts the served distribution somewhat and is best enrolled separately. Together, an auditor can rotate among many semantically-equivalent framings---each accumulating evidence---to raise a static router's cost, but cannot fully hide the request from a determined classifier; a fully adaptive adversary that recognizes the randomness \emph{semantics} (not the wording) remains the key open threat.

\paragraph{Output-side special-casing.} The keyword router above acts on the \emph{prompt}; a gateway could instead special-case the probe by its \emph{output} shape (``the response is a bare $0/1$ string''). We test both evasion surfaces on $5$ \emph{distinct commercial} endpoints (\texttt{gpt-5.4-mini}, \texttt{claude-haiku-4.5}, \texttt{gemini-3.5-flash}, \texttt{llama-4-maverick}, \texttt{qwen3-max}; $50$ repeats each), comparing the bare probe, three keyword-free disguises (unit-test, A/B-bucket, protocol-nonce framings), and an \emph{embedded-output} disguise that asks for the $16$ bits inside a templated device-log line. Three findings. (i)~\emph{The fingerprint survives every disguise on commercial models}: within-disguise $K{=}5$ attribution is $0.87$--$0.94$ at $m{=}1$ and \emph{$1.00$ at $m{=}8$} (chance $0.20$) for \emph{all} variants, including the embedded one. (ii)~\emph{No single static filter catches everything}: an output-shape classifier trained to flag bare bit-strings (vs.\ genuine traffic) flags the bare probe and the clean disguises ($0.78$--$1.00$) but is \emph{completely defeated} by the embedded-output disguise (flag rate $0.00$, identical to honest traffic), which nonetheless still carries the fingerprint ($m{=}8$ attribution $1.00$); symmetrically, the prompt keyword router misses the nonce framing. So the two cheap filters are each defeated by a \emph{different} disguise while detection is unaffected---to special-case the probe a gateway must run \emph{both} a semantic prompt parser and an output parser, and even then an auditor composing an unseen framing with an embedded output slips past. (iii)~\emph{Honest limits}: cross-phrasing transfer (train on the bare probe, attribute a disguise) is only \emph{partial} ($0.40$--$0.74$ at $m{=}8$), so the auditor should enroll under the deployed phrasing rather than rely on transfer; and some framings depress a model's output diversity (the unit-test framing collapses distinct-string rate to $0.16$ for some endpoints, versus $0.37$ bare), so a disguise's entropy must be validated before use. This corroborates on real gateways that static probe-detection---prompt- or output-side---is an arms race the auditor can keep winning by rotation, while a fully \emph{adaptive} semantic router stays the key open threat.

\paragraph{Task-embedded probe prefix.} The experiments above show that \IRIS{} remains effective under several disguises, but after publication we cannot exclude a gateway that recognizes standalone audit prompts and deliberately routes only those requests to the advertised model. A practical hardening is to embed the probe into the real task request: prepend or append an instruction requiring the response to \emph{begin} with a short random string (for example, several digits or bits) before answering the substantive prompt. The first few visible tokens are then concentrated on a finite known alphabet, so the same visible-string features and thresholds can be calibrated on that prefix while the downstream content remains the user's task. The auditor sends the identical task-embedded template to a trusted endpoint and to the suspect endpoint, using the trusted responses as the matched control for the prefix distribution. This does not rule out a fully semantic adaptive router, but it turns the failure mode ``route obvious probes honestly'' into an auditable deployment variant: a gateway must now both recognize the embedded probe inside ordinary traffic and make the task-conditioned prefix match the trusted model.

\section{Sticky Routing}
\label{app:noniid}
The dilution theory and the deployed binomial test assume each request is routed independently; real gateways may route stickily (sessions, bursts, time-of-day, batches). We simulate sticky routing on a gateway pair passing the margin check (\texttt{gpt-4o-mini} vs.\ \texttt{llama-3.1-8b}, $m{=}80$, $\epsilon{=}0.3$) with autocorrelation $\rho$ at a fixed marginal fraction, giving effective block length $L{\approx}1/(1{-}\rho)$. Three findings, at $L{=}1/4/16$: (i)~\emph{type-I is robust by construction}---under an honest stream the tells are i.i.d.\ $\mathrm{Bernoulli}(\alpha_1)$ \emph{regardless} of $\rho$ (no substituted block exists), so sticky routing cannot inflate the FPR; this is definitional, not a simulation discovery (the simulated $0.033/0.050/0.050$ are the i.i.d.\ null). (ii)~\emph{power degrades} with burstiness ($1.00\!\to\!0.97\!\to\!0.77$) as the effective independent count $m/L$ falls. (iii)~\emph{the i.i.d.\ interval breaks}: the binomial $\hat\epsilon$ CI coverage collapses ($0.92\!\to\!0.64\!\to\!0.35$) because it ignores within-block correlation, while a \emph{block bootstrap} restores it substantially ($0.93\!\to\!0.77\!\to\!0.57$, limited at $L{=}16$ by only ${\approx}5$ blocks in $m{=}80$). The point estimate $\hat\epsilon$ stays unbiased ($0.30/0.29/0.31$). So sticky routing does not break type-I but invalidates the i.i.d.\ confidence interval, which a block-aware interval repairs given enough blocks; a per-session or longer audit horizon supplies them. Content-adaptive routing (serving the probe honestly) is the separate, harder threat of App.~\ref{app:paraphrase}.

\section{Black-Box Auditors}
\label{app:auditors}
Table~\ref{tab:capability} positions \IRIS{} against the auditors of Section~\ref{sec:related}. \IRIS{}, FLIPS, KBF, B3IT and GateScope are all black-box API auditors, so the distinguishing axis is \emph{not} white-box model access. It is twofold: \IRIS{} (i)~\emph{estimates} its query budget from an on-data per-query rate rather than fixing it, and (ii)~quantifies dilution on a \emph{universal} probe---returning a calibrated $\hat\epsilon$ and naming the diluent by attribution---whereas the closest black-box auditors fix their query count and first build a target-specific probe: KBF estimates a routing fraction but only for a known (or candidate-pool) substitute under fixed routing and on a per-model knowledge-boundary probe, B3IT flags a temporal \emph{change} from per-target border inputs, and GateScope \emph{measures} gateways at the fleet level. \IRIS{} and FLIPS share the core idea of using random generation as a model-discrimination signal~\citep{flips2026,hopkins2023random,baddice2026}; our contribution is what we build on top of it. At the probe level, \IRIS{} does not fix one pseudo-random probe scored by randomness-test features but \emph{selects} a probe from a parameterized random-generation family $\{c_{n,L}\}$ by its on-data information rate $\widehat{I}_{\mathrm{auc}}$. At the algorithm level, \IRIS{} replaces fixed-budget instance classification with an estimate-then-budget loop, a dilution tell-count test, and a calibrated routing-fraction estimate $\hat\epsilon$---turning a deployment-config identifier into a budgeted cross-model dilution auditor. (We similarly build on, rather than reinvent, B3IT's exponential error law and temperature phase transition, Section~\ref{sec:theory}.) The verification-AUROC accumulation curves underlying the comparison are in Fig.~\ref{fig:baselines}.
\paragraph{Matched-budget comparison.}
The quantitative head-to-head in the body (Table~\ref{tab:baseline-main}) includes only auditors that run on the shared random-generation probe (FLIPS, MET, RUT). KBF and B3IT instead use target-specific probe families (knowledge-boundary factual recall, engineered border inputs) whose construction we cannot faithfully reconstruct from their papers, so we treat them qualitatively here and reproduce what \emph{is} reproducible of each: B3IT's fixed-budget \emph{strategy} and KBF's detection \emph{regime}. A faithful head-to-head isolates the \emph{budgeting strategy} by holding the probe and classifier fixed and varying only how the query count is set. Reimplementing a B3IT-style \emph{fixed}-budget decision rule on our $c_{100,1}$ probe at target AUROC $0.99$, a fixed $m{=}3$ meets the target on only $70\%$ of $30$ model pairs, whereas \IRIS{}'s estimate-then-budget loop meets it on $87\%$. This particular contrast is \emph{not} cost-matched---\IRIS{} spends a larger mean budget here ($11$ queries, $\approx3$ on easy pairs)---so it mixes adaptivity with budget; App.~\ref{app:baselines} isolates the two at matched cost (the canonical $87\%$ vs.\ $73\%$). Two head-to-heads we deliberately do \emph{not} run: (i) a faithful reimplementation of KBF's knowledge-boundary factual-recall probe is a \emph{different probe family} whose construction we cannot reproduce from the paper, so a probe-matched KBF comparison is out of scope; and (ii) on \emph{quantification}, B3IT and GateScope report no routing fraction while KBF estimates one only for a known/candidate substitute under fixed routing and on its own knowledge-boundary probe, so a probe-matched $\hat\epsilon$ comparison again reduces to reproducing KBF's probe family (above); the budgeting axis, not access, is the point. On the shared \emph{detection} task \IRIS{} matches KBF's regime (KBF detects $5$--$10\%$ routing for well-separated pairs; \IRIS{} detects $5\%$ in ${\approx}64$ queries, Section~\ref{sec:exp}).

\begin{table*}[t]
  \centering\footnotesize
  \begin{tabular*}{0.92\textwidth}{@{\extracolsep{\fill}}lllll@{}}
    \toprule
    Method & Access & Probe & Decision & Budget \\
    \midrule
    \IRIS{} & text & rand-gen & \textbf{attr+verify+}$\hat\epsilon$ & \textbf{est.\ }$m^\star$ \\
    FLIPS & text & rand-gen & instance ID & fixed \\
    MET & ref.\ samples & natural & equality & fixed \\
    RUT & log-ranks & natural & equality+mix (det.) & fixed \\
    KBF & text & knowledge & routing$+\hat\pi^\dagger$ & fixed \\
    B3IT & text & border & change-det. & fixed \\
    GateScope & text & content/bill & fleet measure & --- \\
    \bottomrule
  \end{tabular*}
  \caption{Where \IRIS{} sits among black-box auditors. The output-only auditors (\IRIS{}, FLIPS, KBF, B3IT,
  GateScope) share access; \IRIS{} alone \emph{estimates} the query budget (all others fix it) and quantifies dilution ($\hat\epsilon$ plus diluent ID) on a \emph{universal} probe. ``(det.)'' marks detect-only mixture
  handling; $\dagger$ marks a routing-fraction estimate valid only for a known or candidate-pool substitute under fixed routing.}
  \label{tab:capability}
\end{table*}

\begin{figure}[t]
  \centering
  \includegraphics[width=\columnwidth]{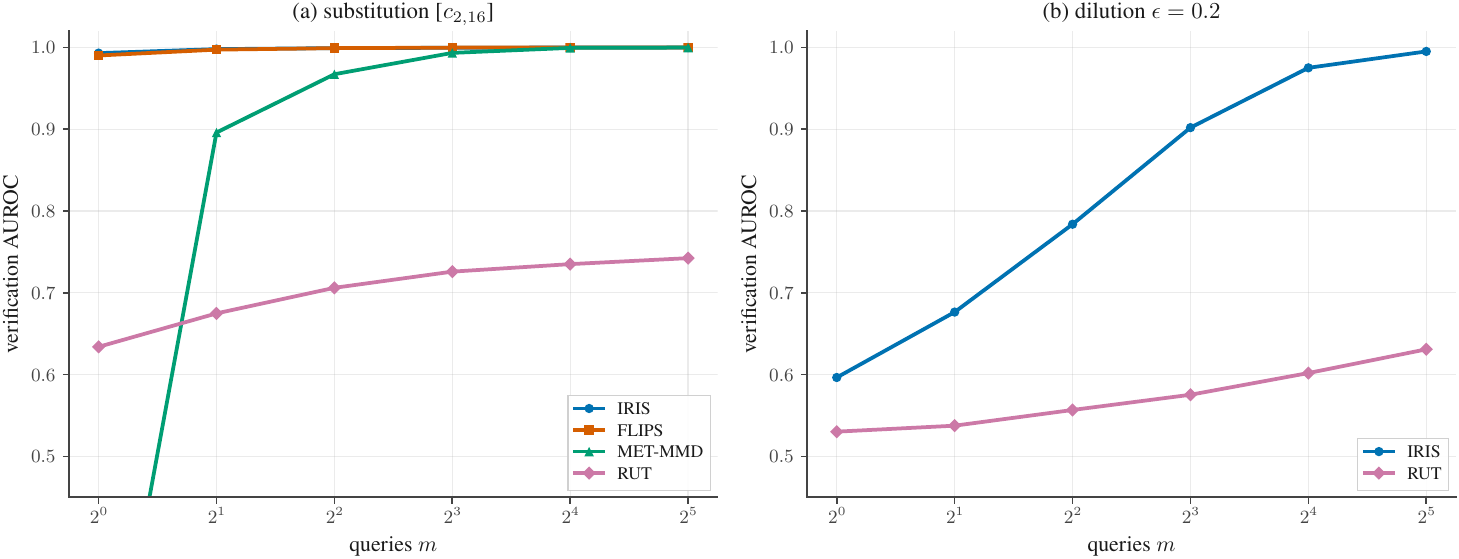}
  \caption{Method comparison on the shared random-generation probe ($c_{2,16}$), verification
  AUROC vs.\ queries $m$. (a)~Substitution: \IRIS{} ties a FLIPS-style classifier ($\mathrm{AUROC}\,{\approx}\,0.99$
  at $m{=}1$, both saturating to $1.0$ by $m{=}8$): the shared random probe ties, and \IRIS{}'s contribution is the budgeting and dilution capability it adds, while both dominate
  MET (maximum-mean-discrepancy two-sample test, degenerate at $m{=}1$, $0.11$) and an adapted RUT (rank-uniformity, plateauing near $0.74$).
  (b)~Dilution ($\epsilon{=}0.2$): \IRIS{} accumulates toward $1.0$ while RUT stays near chance---the capability
  gap is the budgeting axis, not access (Table~\ref{tab:capability}).}
  \label{fig:baselines}
\end{figure}

\paragraph{Cost-matched comparison.}
\label{app:baselines}
The headline ``$87\%$ vs.\ $70\%$ for a B3IT-style fixed $m{=}3$'' is \emph{not} cost-matched: \IRIS{} spent a mean budget ${\approx}11$ there, so part of the gap is simply $3.7\times$ more queries. We decompose it by comparing, on the per-pair verification-AUROC$(m)$ curves, three allocation rules at the \emph{same} average budget $B$: \emph{fixed} ($m{=}B$ for every pair), \emph{oracle} (per-pair minimal $m$ to the target), and \emph{pilot-driven \IRIS{}} (rank pairs by a cheap pilot AUROC and waterfill the same pool onto the predicted-harder pairs, no target peeking). On the hardest probe ($c_{10,1}$, target AUROC $0.99$, $30$ pairs), at the matched budget $m{\approx}9$ the fixed-$m$ rule meets the target on $0.73$ of pairs versus the pilot-driven $\IRIS{}$ allocation's $0.87$ (oracle $0.90$); the fixed rule at $m{=}3$ meets only $0.60$. So of \IRIS{}'s advantage over fixed-$m{=}3$, roughly half is the larger budget its estimator prescribes ($0.60{\to}0.73$) and roughly half is the adaptive \emph{allocation} ($0.73{\to}0.87$ at matched cost). The same holds on $c_{100,1}$ ($0.77$ fixed vs.\ $0.80$ pilot at $m{\approx}10$, oracle $0.87$; the per-pair estimate-then-budget loop of App.~\ref{app:auditors} reaches $0.87$ on this probe by sizing each pair rather than waterfilling a fixed pool); on saturated easy probes ($c_{10,8}$) all rules tie near $1.0$. We caution that this compares \IRIS{}'s classifier and allocation against \emph{adapted} fixed-budget rules on our probe, not against the original KBF/B3IT/RUT/MET implementations (KBF's knowledge-boundary probe we cannot reproduce), so it is a \emph{capability}, not a head-to-head method, comparison.

\paragraph{Significance of the head-to-head.}
\label{app:sig}
We judge every performance gap in Table~\ref{tab:baseline-main} with a two-sided Wilcoxon signed-rank test on the paired per-split measurements behind the same cells ($40$ stratified splits, identical probe/seed/protocol to the table). \IRIS{}'s margin over MET and RUT is significant on every metric they support: substitution AUROC at $m{=}1$ ($p<4\times10^{-41}$ versus each), at $m{=}8$ ($p{=}2\times10^{-31}$ versus MET, $p<4\times10^{-41}$ versus RUT), and dilution AUROC at $m{=}8$ and $m{=}32$ ($p<4\times10^{-41}$ versus RUT). Against FLIPS the result is deliberately \emph{not} a significant gap: on clean substitution the two stay within $0.003$ AUROC and the sign of the difference flips with the budget (\IRIS{} higher at $m{=}1$, $p{=}2\times10^{-9}$; FLIPS higher at $m{=}8$, $p{=}0.05$), and closed-set attribution is a flat tie ($p{=}0.14$ at $m{=}1$, $p{=}1.0$ at $m{=}8$). \IRIS{}'s separation from FLIPS is thus structural rather than metric---only \IRIS{} scores dilution and estimates $\hat\epsilon$ (the ``--'' cells)---which no signed-rank test on a shared metric can express. The cost-matched allocation gap above ($0.73{\to}0.87$) is reported as a \emph{decomposition} of where the advantage originates.

\section{Extended Method Comparison}
\label{app:extended}
Section~\ref{sec:related} and App.~\ref{app:auditors} place \IRIS{} among the black-box \emph{substitution/dilution} auditors that share its random-generation probe (FLIPS, MET, RUT) or its access model (KBF, B3IT, GateScope). We complete the picture along three axes a reader may raise, model \emph{attribution}, \emph{concurrent} single-token fingerprinting, and \emph{sequential} change tests, and ground the dilution rate in classical mixture-testing theory. We first correct one framing from App.~\ref{app:auditors}: KBF and B3IT both ship public code, so the barrier to a numeric head-to-head is not reproducibility but \emph{probe family}. A faithful run puts each on its native probe (knowledge-boundary recall, engineered border inputs) at matched total budget, not on the shared $\mathcal{C}$. Table~\ref{tab:fullcompare} lays out where every comparable method sits across the audit's axes, making concrete that no prior method jointly covers substitution, fractional dilution, open-set attribution, routing-fraction estimation, and a pre-committed budget from a single text-only universal probe.

\begin{table*}[t]
  \centering\footnotesize\setlength{\tabcolsep}{3pt}
  \caption{Complete method comparison across the audit's axes. \textbf{Access}: what the method reads (text${=}$visible strings only; ref${=}$reference samples from trusted weights; ranks$^{w}${=}log-ranks needing a white-box local reference; lp${=}$log-probabilities; TEE${=}$trusted hardware). \textbf{Signal}: probe family. \textbf{Sub}/\textbf{Dil}/\textbf{Attr}: detects whole-stream substitution / fractional dilution / attributes the served model (chg${=}$change-only vs.\ the endpoint's own past; det${=}$detect-only mixture handling; p${=}$partial). $\hat\epsilon$: estimates the routing fraction ($\dagger$ only for a known/candidate substitute under fixed routing). \textbf{Budget}: est${=}$estimated before the audit, fix${=}$fixed by design, any${=}$anytime-valid. \textbf{API}/\textbf{Data}: evaluated on real commercial endpoints / releases a public dataset or code (p${=}$code or partial). Venues are in Section~\ref{sec:related}.}
  \label{tab:fullcompare}
  \begin{tabular}{@{}lllccccccc@{}}
    \toprule
    Method & Access & Signal & Sub & Dil & Attr & $\hat\epsilon$ & Budget & API & Data \\
    \midrule
    \textbf{\IRIS{} (ours)} & text & rand-gen & \textbf{Y} & \textbf{Y} & \textbf{Y} & \textbf{Y} & \textbf{est} & Y & Y \\
    \midrule
    FLIPS          & text        & rand-gen  & Y   & --  & Y & --          & fix & --  & Y \\
    MET            & text${+}$ref & natural   & Y   & --  & --& --          & fix & Y   & Y \\
    RUT            & ranks$^{w}$ & natural    & Y   & det & --& --          & fix & p   & -- \\
    KBF            & text        & knowledge  & Y   & Y   & p & Y$^\dagger$ & fix & Y   & Y \\
    B3IT           & text        & border     & chg & --  & --& --          & fix & Y   & Y \\
    Cai et al.     & text${+}$lp & natural    & Y   & Y   & --& --          & fix & Y   & p \\
    Log-Prob Track & lp          & natural    & chg & --  & --& --          & fix & Y   & p \\
    LLMmap         & text        & prompts    & p   & --  & Y & --          & fix & Y   & Y \\
    Provenance     & text        & shared     & Y   & --  & Y & --          & fix & --  & Y \\
    Idiosyncrasies & text        & natural    & --  & --  & Y & --          & fix & Y   & Y \\
    IKP            & text        & knowledge  & --  & --  & Y & --          & fix & Y   & Y \\
    One Token      & text        & rand-gen   & Y   & --  & Y & --          & fix & Y   & p \\
    BSA            & text        & natural    & chg & --  & --& --          & any & --  & p \\
    \bottomrule
  \end{tabular}
\end{table*}

\paragraph{Attribution baselines and a generic text classifier.}
Output-based identification is a mature line: LLMmap~\citep{llmmap2025} learns a discriminative prompt bank, Model Provenance Testing~\citep{provenance2025} pairs a candidate library with a multiple-testing decision (the framing closest to \IRIS{}'s attribute-then-budget loop), and the idiosyncrasy school~\citep{idiosyncrasies2025} names a model from surface text alone. These read a natural-language response, not a random-generation challenge, so they are complementary rather than probe-matched. The sharpest question they pose is whether \IRIS{}'s engineered visible-string features are needed at all, or whether a generic text classifier over the \emph{same} random-generation responses already names the backend. We test this directly with a \emph{Text-TFIDF} attributor, character $n$-gram TF-IDF over the raw response string with a logistic-regression head (the idiosyncrasy recipe of~\citealp{idiosyncrasies2025}), evaluated on the identical splits, seed, and $m$-aggregation as Table~\ref{tab:baseline-main}. Table~\ref{tab:textattr} shows the engineered features earn their place exactly where a budgeted audit operates: at a single query \IRIS{} leads the generic classifier by $0.07$ on the ladder and $0.06$ on the $17$-endpoint gateway. By $m{=}8$ the three converge; the text classifier even edges ahead at $m{=}32$ on the gateway ($0.999$), but only by reading raw formatting and whitespace tells that \IRIS{}'s content-only (length/format-invariant) subset deliberately discards for robustness to formatting drift and paraphrase (App.~\ref{app:featureset}, App.~\ref{app:paraphrase}). The operating regime of a pre-budgeted audit is small $m$, where the engineered signal wins.

\begin{table}[t]
  \centering\footnotesize\setlength{\tabcolsep}{5pt}
  \caption{Adding a generic output-text attributor (Text-TFIDF: character $n$-gram TF-IDF $+$ logistic regression over the raw response, the idiosyncrasy recipe~\citealp{idiosyncrasies2025}) to the closed-set attribution head-to-head, on the same splits/seed/$m$-aggregation as Table~\ref{tab:baseline-main}. Cells are attribution accuracy at budget $m$; \IRIS{} and FLIPS reproduce that table's Attr.\ row up to resampling. \textbf{Bold} marks the best method in the low-budget operating regime ($m{\le}8$).}
  \label{tab:textattr}
  \begin{tabular}{@{}lccc@{}}
    \toprule
    Method & $m{=}1$ & $m{=}8$ & $m{=}32$ \\
    \midrule
    \multicolumn{4}{@{}l}{\emph{Local ladder} $c_{2,16}$ ($K{=}6$ Qwen3)}\\
    \IRIS{}     & $\mathbf{.928}$ & $\mathbf{.999}$ & --- \\
    FLIPS       & $.923$ & $.999$ & --- \\
    Text-TFIDF  & $.857$ & $.997$ & --- \\
    \midrule
    \multicolumn{4}{@{}l}{\emph{Commercial gateway} $c_{10,8}$ ($K{=}17$)}\\
    \IRIS{}     & $\mathbf{.688}$ & $\mathbf{.965}$ & $.985$ \\
    FLIPS       & $.670$ & $.958$ & $.976$ \\
    Text-TFIDF  & $.627$ & $.943$ & $\mathbf{.999}$ \\
    \bottomrule
  \end{tabular}
\end{table}

\paragraph{A real cross-provider deviation audit (not injected).}
Every dilution in the body is author-injected. To show \IRIS{} flags a genuine same-advertised-model deviation in the wild, we audit open-weight models that OpenRouter serves through several upstream providers. OpenRouter lets a client pin the provider (\texttt{provider.order} with \texttt{allow\_fallbacks:false}), and providers frequently serve identical weights at different quantization or with different kernels, so provider A vs.\ provider B for one slug is a real deviation test with a quasi-ground-truth precision label. We collect $80$ probe repeats per (model, provider) on $c_{2,16}$ and $c_{10,8}$ for three models across their real providers ($1{,}906$ live responses) and test each provider pair for distributional equality with \IRIS{}'s content-only detector (a $5$-fold random-forest ROC-AUC with a Mann--Whitney $U$ test on the out-of-fold scores; no reference weights, only returned strings). Table~\ref{tab:xprov} reports the more discriminative probe per pair. \IRIS{} flags $14$ of the $15$ provider pairs as distinguishable: DeepInfra's \texttt{fp4} serving of \texttt{deepseek-chat-v3} separates from its \texttt{fp8} peers at AUROC $0.97$, two providers advertising \texttt{mistral-nemo} at the \emph{same} nominal \texttt{fp8} separate at AUROC up to $1.00$ (a kernel- rather than precision-level difference), and the first-party \texttt{Mistral} endpoint is distinguishable from every reseller. The one indistinguishable pair is the two highest-precision \texttt{llama-3.3-70b} servings (\texttt{bf16} vs.\ \texttt{fp16}), i.e.\ \IRIS{} does \emph{not} false-alarm when two providers really do serve near-identical distributions. This is a two-sample cross-provider consistency test (are A and B the same?), a slightly different question from claimed-model verification (no trusted reference is assumed), and it corroborates on real endpoints what MET~\citep{met2025} found for production APIs and \citet{onetoken2026} for a flagship endpoint: the same advertised model is routinely served as measurably different distributions across a gateway.

\begin{table}[t]
  \centering\footnotesize\setlength{\tabcolsep}{3pt}
  \caption{Real cross-provider audit: the same advertised open-weight model served by different real OpenRouter providers (routing pinned, \texttt{allow\_fallbacks:false}), tested for distributional equality by \IRIS{}'s content-only detector. AUROC${\approx}0.5$ = indistinguishable; AUROC${\gg}0.5$ (Mann--Whitney $p{<}0.05$) = a real same-model deviation. Each row uses the more discriminative of $c_{2,16}/c_{10,8}$; ``(q)'' is OpenRouter's advertised quantization. Deviations are provider-side, not author-injected.}
  \label{tab:xprov}
  \begin{tabular}{@{}llcc@{}}
    \toprule
    Provider A (quant) & Provider B (quant) & AUROC & verdict \\
    \midrule
    \multicolumn{4}{@{}l}{\emph{llama-3.3-70b-instruct}}\\
    DeepInfra (fp8) & Novita (bf16)   & $.68$  & different \\
    DeepInfra (fp8) & Together (fp8)  & $.67$  & different \\
    DeepInfra (fp8) & WandB (fp16)    & $.71$  & different \\
    Novita (bf16)   & Together (fp8)  & $.74$  & different \\
    Novita (bf16)   & WandB (fp16)    & $.53$  & \emph{indist.} \\
    Together (fp8)  & WandB (fp16)    & $.72$  & different \\
    \multicolumn{4}{@{}l}{\emph{deepseek-chat-v3}}\\
    DeepInfra (fp4) & Novita (fp8)        & $.97$ & different \\
    DeepInfra (fp4) & StreamLake (unk.)   & $.97$ & different \\
    Novita (fp8)    & StreamLake (unk.)   & $.67$ & different \\
    \multicolumn{4}{@{}l}{\emph{mistral-nemo}}\\
    DeepInfra (fp8) & DekaLLM (fp8)     & $1.00$ & different \\
    DeepInfra (fp8) & Mistral (unk.)    & $.77$  & different \\
    DeepInfra (fp8) & Novita (fp8)      & $.84$  & different \\
    DekaLLM (fp8)   & Mistral (unk.)    & $1.00$ & different \\
    DekaLLM (fp8)   & Novita (fp8)      & $1.00$ & different \\
    Mistral (unk.)  & Novita (fp8)      & $.97$  & different \\
    \bottomrule
  \end{tabular}
\end{table}

\paragraph{External corroboration on the released MET corpus.}
As a third-party real-data check we run a cheap char $n$-gram detector on the released MET dataset~\citep{met2025}: genuine completions from $9$ commercial providers plus controlled local distortions, prompt-aligned so topic is balanced across the compared classes (Table~\ref{tab:met}). Three findings. First, a clean negative control: the near-honest \texttt{fp16} and \texttt{fp32} servings of Llama-3-70B are \emph{not} separable (AUROC $0.37$), so the detector reads serving behaviour, not a formatting artifact. Second, quantization leaves a visible-string signature on the larger model: \texttt{nf4} and \texttt{int8} separate from \texttt{fp16} at AUROC $1.00$ and $0.99$ on Llama-3-70B (weaker on 8B, ${\le}0.53$), and the watermark separates on both. Third, and most telling for our probe design, a generic classifier \emph{cannot} attribute the $9$ real providers from these natural-language completions ($0.13$/$0.09$ accuracy vs.\ $0.11$ chance). This is precisely why \IRIS{} probes with high-entropy random generation rather than natural language: on the same class of real commercial endpoints, the random-generation cross-provider audit above separates $14$ of $15$ provider pairs (Table~\ref{tab:xprov}), and MET's own many-sample MMD test likewise flags provider deviations, whereas a cheap per-response classifier on natural text sits at chance.

\begin{table}[t]
  \centering\footnotesize\setlength{\tabcolsep}{5pt}
  \caption{External corroboration on the released MET corpus~\citep{met2025} (genuine provider completions, prompt-aligned). Left block: controlled-distortion detection versus the \texttt{fp16} reference (AUROC; \texttt{fp16}-vs-\texttt{fp32} is a negative control, \texttt{wm}${=}$watermark). Right: $9$-way real-provider attribution accuracy from natural-language completions. A generic char-$n$-gram classifier catches quantization on the larger model but cannot fingerprint providers from natural text, motivating \IRIS{}'s random-generation probe.}
  \label{tab:met}
  \begin{tabular}{@{}lccccc@{}}
    \toprule
    Llama-3 & \texttt{fp32}$^\dagger$ & \texttt{nf4} & \texttt{int8} & \texttt{wm} & prov.\ attr. \\
    \midrule
    70B & $.37$ & $\mathbf{1.00}$ & $\mathbf{.99}$ & $.72$ & $.13$ \\
    8B  & $.53$ & $.53$ & $.45$ & $.64$ & $.09$ \\
    \bottomrule
  \end{tabular}\\[2pt]
  {\footnotesize $^\dagger$negative control (near-honest pair, should not separate); attribution chance $=0.11$.}
\end{table}

\paragraph{Concurrent single-token fingerprinting.}
Concurrent with this work, \citet{onetoken2026} fingerprint and verify commercial endpoints from the \emph{single-token} output distribution of trivial random-generation prompts, a census of $165$ models over OpenRouter at roughly one token per query, reporting a $7.3\%$ verification equal-error rate and $59.5\%$ leave-one-out lineage recovery. The shared premise, visible text, no log-probabilities, and random generation as the discriminator, makes it the closest point of comparison to \IRIS{}'s probe, and it independently corroborates that sampling bias identifies backends at scale. \IRIS{} is distinguished not by the probe but by what it builds on top: fractional \emph{dilution} detection with a calibrated routing fraction $\hat\epsilon$, open-set attribution across a candidate library, an estimate-then-budget plan that fixes $m$ before any suspect query, and sample-complexity guarantees ($e^{-Im}$ verification, the $1/\epsilon$ dilution wall). Single-token verification is essentially the $m{=}1$, $\epsilon{=}1$ special case of \IRIS{}'s substitution head.

\paragraph{Sequential vs.\ fixed-budget tests.}
A parallel design choice is whether to test at a \emph{fixed} budget or \emph{sequentially}. Behavioral-shift auditing~\citep{bsa2025} and nonparametric two-sample testing by betting~\citep{shekhar2023betting} accumulate an anytime-valid e-process and stop when evidence crosses $1/\alpha$, controlling type-I error at every stopping time, while ``You've Changed''~\citep{youvechanged2025} runs a feature-distribution two-sample test against an endpoint's own past. \IRIS{} deliberately \emph{pre-commits} $m$: the audit budget is itself the object we estimate (Section~\ref{sec:method}), so a client can price and contest an audit before issuing any suspect query, which an anytime rule does not promise. The two are composable, a sequential back-end could spend the estimated $m^\star$ as its horizon, and we treat the fixed-budget guarantee (App.~\ref{app:e2e}) as the deployable contract.

\paragraph{Theory lineage of the dilution rate.}
\IRIS{}'s dilution model $Q_\epsilon=(1-\epsilon)P+\epsilon R$ is Huber's $\epsilon$-contamination~\citep{huber1964}, and the $1/\epsilon$-versus-$\epsilon^{-2}$ dichotomy we measure (Prop.~\ref{prop:phase}, App.~\ref{app:tail}) is the sparse-mixture detection boundary studied by higher criticism~\citep{donoho2004hc}: a separating tail admits an adaptive $\Theta(1/\epsilon)$ detector, while its absence forces the $\epsilon^{-2}$ two-proportion rate. We therefore present \IRIS{}'s rate as a measured instance of this classical boundary rather than a new phenomenon, which is why we report the tail exponent $\kappa$ (App.~\ref{app:betastab}) instead of asserting $1/\epsilon$ unconditionally.

\paragraph{Verifiable-inference contrast.}
A separate, cooperative line certifies inference cryptographically: TOPLOC~\citep{toploc2025} commits to locality-sensitive hashes of activations so a validator can later detect a model, prompt, or precision swap. Such schemes give strong guarantees but require provider participation (commitments, shared seeds, or trusted hardware, as in the TEE fallback of~\citealp{substitution2025}); \IRIS{} is a \emph{unilateral} audit a client runs from returned text alone, with no provider cooperation.

\section{One-Class Detector}
\label{app:oneclass}
\IRIS{}'s deployed score $s=-\log\hat{\mathbb P}(P\mid y)$ is a \emph{multiclass} posterior whose negatives are the enrolled candidates, so the leave-one-out unknown-diluent test (App.~\ref{app:dilscale}) still trains on the other $K{-}1$ endpoints. To isolate how much detection owes to those negatives---and to test the principle that an unknown substitute is flagged from the \emph{reference alone}---we replace the score with a genuine one-class detector: an isolation forest fit on the claimed model's content-only responses \emph{only}, scored as the out-of-distribution anomaly. On the $17$-endpoint gateway ($c_{10,8}$), across the pairs passing the multiclass margin check, the reference-only detector still separates---single-query AUROC median $0.75$ (vs.\ $0.99$ multiclass) and per-query tell rate $q\!\approx\!0.13$ (vs.\ $0.98$)---clearing the $\alpha{=}0.05$ false-positive floor (realized FPR $0.07$) on about half of them at a single query and accumulating over $m$ like any tell. The price of dropping the candidate negatives is thus a real but bounded $\sim q^{-1}$ budget inflation ($\approx7\times$ here). The no-enrollment guarantee is therefore correct for \emph{detection}---the alarm \emph{is} raised from the claimed fingerprint alone, as the one-class detector confirms---while the candidate roster earns its place by sharpening the score and naming the diluent, not by enabling detection.

\begin{table}[h]
  \centering\footnotesize\setlength{\tabcolsep}{2pt}
  \caption{Same-endpoint FPR calibration (nominal $\alpha{=}0.05$): realized false-positive and true-positive rates vs.\ budget $m$
  (Section~\ref{sec:exp}). At operational budgets ($m{=}4$--$8$) realized FPR matches nominal; larger budgets
  require proportionally stronger calibration.}
  \label{tab:fpr}
  \begin{tabular}{lcccc}
    \toprule
    Probe & $m{=}4$ & $m{=}8$ & $m{=}16$ & $m{=}32$ \\
    \midrule
    $c_{2,16}$     & .05/1.00 & .06/1.00 & .09/1.00 & .12/1.00 \\
    $c_{100,1}$ & .08/0.83 & .08/0.90 & .11/0.96 & .14/0.99 \\
    $c_{10,1}$       & .06/0.82 & .09/0.86 & .13/0.89 & .18/0.90 \\
    \bottomrule
  \end{tabular}
\end{table}

\section{LCB Budgeting}
\label{app:lcb}
\IRIS{} budgets against a $(1{-}\gamma)$ lower-confidence bound $\underline{I}$ on the fitted exponent rather than inflating the point budget by a flat safety factor $c_{\mathrm{safe}}$ (Section~\ref{sec:method}); we detail the rule here and compare it against the flat-factor baseline it replaces. The flat rule scales the predicted $m^\star$ by $c_{\mathrm{safe}}$, trading queries for a higher budget-rule hit rate (Table~\ref{tab:safety}). The budgeting rests on a predictive base: the error a cheap pilot ($m\le4$) extrapolates tracks the realized held-out error closely (Fig.~\ref{fig:budget}), which is exactly why a conservative bound on the same fit yields a calibrated budget.

\noindent\textbf{Rank error as a miss-probability surrogate.} The deployed substitution head thresholds the episode mean $S_m$ at the reference $(1{-}\alpha)$-quantile $\tau_{\mathrm{mean}}$ (Eq.~\eqref{eq:mean-threshold}), so its miss probability is $\delta_{\mathrm{mean}}:=\mathbb P_{R}(S_m^{R}\le\tau_{\mathrm{mean}})$. Since the pilot draws the reference and substitute episodes independently,
\begin{align}
1-\mathrm{AUROC}&=\mathbb P(S_m^{P}\ge S_m^{R})\notag\\
&\ge\mathbb P(S_m^{P}\ge\tau_{\mathrm{mean}})\,\mathbb P(S_m^{R}\le\tau_{\mathrm{mean}})\notag\\
&\ge\alpha\,\delta_{\mathrm{mean}},
\label{eq:auc-miss}
\end{align}
using the inclusion $\{S_m^{P}\ge\tau_{\mathrm{mean}},\,S_m^{R}\le\tau_{\mathrm{mean}}\}\subseteq\{S_m^{P}\ge S_m^{R}\}$ with independence, and $\mathbb P(S_m^{P}\ge\tau_{\mathrm{mean}})\ge\alpha$ from the quantile definition. Thus $\delta_{\mathrm{mean}}\le(1-\mathrm{AUROC})/\alpha$: driving the rank error to $\delta_{\mathrm{auc}}$ forces the calibrated miss to $\delta_{\mathrm{mean}}\le\delta_{\mathrm{auc}}/\alpha$, so the rank-error exponent lower-bounds the calibrated-miss exponent and budgeting on $\widehat I_{\mathrm{auc}}$ is conservative in rate. The bound is exact in $m$ but loose by the constant $1/\alpha$, so setting $\delta_{\mathrm{auc}}{=}\delta$ pins the miss \emph{exponent} and leaves only the constant to the hit-rate calibration below. The optimism is in the \emph{fit}, not this inequality: $\widehat I_{\mathrm{auc}}$ extrapolated from $m\le4$ is a two-sample rate not provably below $I^{\mathrm{sc}}$ (Prop.~\ref{prop:exp}), which is why \IRIS{} budgets against its lower-confidence bound $\underline I_{\mathrm{auc}}$.

\begin{figure}[t]
  \centering
  \includegraphics[width=0.86\columnwidth]{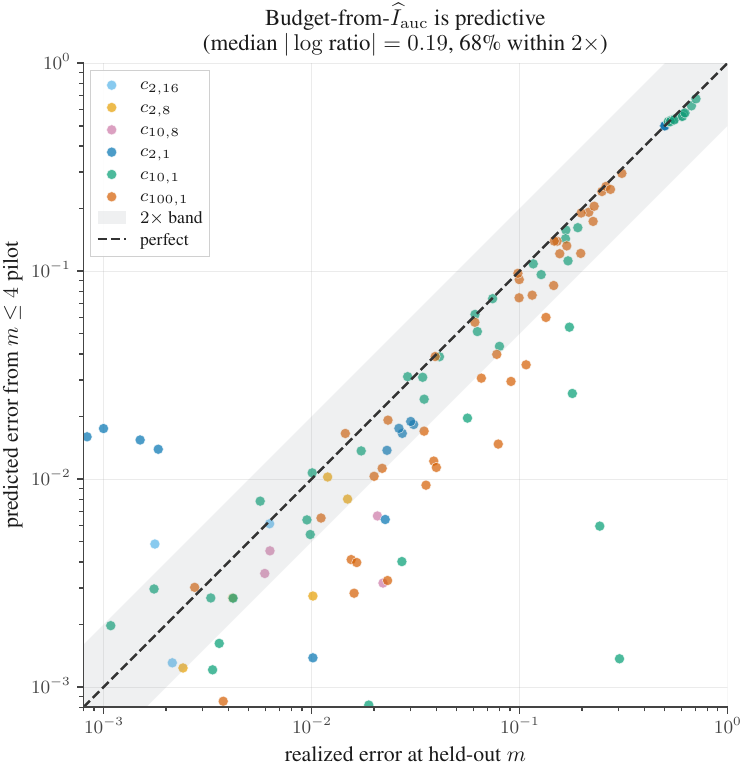}
  \caption{Budget-from-$\widehat{I}_{\mathrm{auc}}$ is predictive ($K{=}6$). Predicted verification error from an
  $m{\le}4$ pilot vs.\ realized error at held-out larger $m$, per (pair, probe); the dashed line is
  perfect prediction and the band is a $2\times$ tolerance. Across $224$ validation points the median
  $|\log$-ratio$|$ is $0.19$ with $68\%$ inside the band---the small-$m$ extrapolation is slightly
  optimistic, motivating the lower-confidence-bound budget below.}
  \label{fig:budget}
\end{figure}

\begin{table}[t]
  \centering\small
  \caption{Flat-factor baseline ($K{=}6$): scaling the predicted $m^\star$ by a safety factor $c_{\mathrm{safe}}$
  trades a few queries for a higher budget-rule hit rate. \IRIS{} replaces this with the LCB rule
  (Table~\ref{tab:lcb}).}
  \label{tab:safety}
  \begin{tabular}{lcccc}
    \toprule
    & \multicolumn{2}{c}{target AUROC $0.95$} & \multicolumn{2}{c}{target AUROC $0.99$}\\
    \cmidrule(lr){2-3}\cmidrule(lr){4-5}
    safety $c_{\mathrm{safe}}$ & mean $m$ & hit rate & mean $m$ & hit rate \\
    \midrule
    $1.0$ & 6.7 & 0.80 & 10.9 & 0.69 \\
    $1.5$ & 9.5 & 0.89 & 14.6 & 0.74 \\
    $2.0$ & 11.3 & 0.91 & 16.8 & 0.83 \\
    \bottomrule
  \end{tabular}
\end{table}

The LCB rule replaces $c_{\mathrm{safe}}$ by a $(1{-}\gamma)$ lower-confidence bound $\underline{I}$ on the fitted exponent (bootstrap or $t$-based over calibration splits), budgeting $m^\star=\lceil(\hat a-\log\delta_{\mathrm{auc}})/\underline{I}\rceil$ (the rank-error target of Eq.~\eqref{eq:sub-budget}, set to the miss probability $\delta$). On the $K{=}6$ ladder ($\gamma{=}0.1$) this reaches a matched hit rate at fewer queries than the flat factor at the $0.95$ target (Table~\ref{tab:lcb}); it is \emph{not} a coverage certificate---at the $0.99$ target no rule attains nominal coverage, as part of the gap is intrinsic AUROC saturation on low-information probes ($c_{10,1}$, $c_{100,1}$) rather than estimator variance. A calibrated coverage guarantee, and a resampling unit that does not reuse overlapping splits, remain future work.

Concretely, from split-level pilot fits $\{(\hat a_{cj}^{(r)},\widehat I_{\mathrm{auc},cj}^{(r)})\}_{r=1}^{R}$, the one-sided $t$-based bounds that enter the substitution budget $m_{\mathrm{sub}}$ (Eq.~\eqref{eq:sub-budget}) are
\begin{align}
  \underline I_{\mathrm{auc},cj}&=\widehat I_{\mathrm{auc},cj}-t_{1-\gamma,R-1}\,\widehat{\mathrm{se}}_{R}(I_{cj}),\label{eq:auc-lcb}\\
  \overline a_{cj}&=\hat a_{cj}+t_{1-\gamma,R-1}\,\widehat{\mathrm{se}}_{R}(a_{cj}),\label{eq:intercept-ucb}
\end{align}
with $t_{1-\gamma,R-1}$ the one-sided Student-$t$ quantile and $\widehat{\mathrm{se}}_{R}(I_{cj})=\mathrm{sd}(\{\widehat I_{\mathrm{auc},cj}^{(r)}\}_{r=1}^{R})/\sqrt R$ the across-split standard error (likewise for $a_{cj}$).

\begin{table}[t]
  \centering\small
  \caption{Per-pair adaptive budgeting ($K{=}6$, $\gamma{=}0.1$): an LCB on $\widehat{I}_{\mathrm{auc}}$ vs.\ the flat
  safety factor $c_{\mathrm{safe}}$. Mean budget $m$ and budget-rule hit rate (realized AUROC $\ge$ target).}
  \label{tab:lcb}
  \begin{tabular}{lcccc}
    \toprule
    & \multicolumn{2}{c}{target AUROC $0.95$} & \multicolumn{2}{c}{target AUROC $0.99$}\\
    \cmidrule(lr){2-3}\cmidrule(lr){4-5}
    rule & mean $m$ & hit & mean $m$ & hit \\
    \midrule
    flat $s{=}1$        & 5.5 & 0.81 & 8.4  & 0.68 \\
    flat $s{=}2$        & 8.4 & 0.91 & 12.4 & 0.77 \\
    LCB (bootstrap)     & 5.8 & 0.86 & 8.7  & 0.67 \\
    LCB ($t$-based)     & 6.6 & 0.88 & 10.0 & 0.72 \\
    \bottomrule
  \end{tabular}
\end{table}

\section{Sequential Auditing}
\label{app:anytime}
\IRIS{} fixes a query budget $m^\star$ in advance; a natural alternative is a \emph{sequential} test that stops at a data-dependent stopping time $\nu$, as soon as the accumulated evidence is decisive. We implement two---a betting e-process (anytime-valid, sound under optional stopping at \emph{any} time) and a sequential probability ratio test (SPRT)---both at level $\alpha{=}0.05$ against the three-rung separation ladder (reference \texttt{qwen3:8b}@$T{=}1$, probe $c_{2,16}$, $\epsilon{=}0.3$ dilution, horizon $200$, $2000$ streams; Fig.~\ref{fig:anytime}). The betting e-process stops early on detectable suspects---expected stopping time $E[\nu]{\approx}28$ on a same-family size substitution and ${\approx}68$ on a distinct model, comparable to the fixed budget---while detecting at rate ${\ge}0.99$. On the low-separation temperature stress test it does not stop (detection $0.0095$, $E[\nu]$ pinned at the horizon), which is the sequential analogue of returning indeterminate. Honest streams almost never trip (FPR $0.0045$ for the e-process, $0.0235$ for the SPRT). Under substitution ($\epsilon{=}1.0$) stopping is faster still ($E[\nu]{\approx}8$ same-family, ${\approx}17$ distinct). The e-process thus delivers an always-valid decision without committing to $m$ in advance---complementary to the estimate-then-budget loop, at the price of a modest $E[\nu]$ overhead over the oracle fixed $m^\star$ on the easiest pairs.

\begin{figure}[t]
  \centering
  \includegraphics[width=\columnwidth]{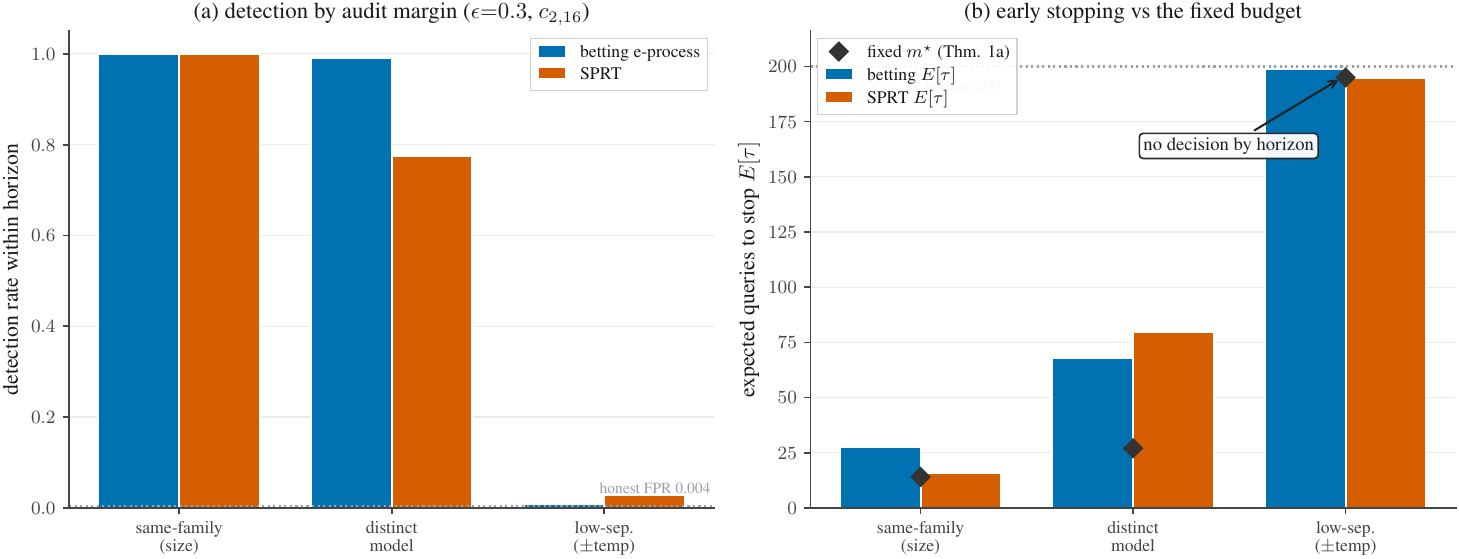}
  \caption{Anytime-valid sequential auditing on the separation ladder ($\epsilon{=}0.3$, $c_{2,16}$).
  (a)~Detection rate within the horizon for the betting e-process vs.\ the SPRT; honest streams sit at the
  false-positive floor (dotted). (b)~Expected stopping time $E[\nu]$ vs.\ the fixed budget $m^\star$ of
  Thm.~\ref{thm:mix}(a). The e-process stops early on detectable suspects and does not stop within the horizon on the
  low-separation stress test, which is the sequential analogue of an indeterminate audit.}
  \label{fig:anytime}
\end{figure}


\section{Fingerprint Library}
\label{app:gallery}
Figure~\ref{fig:iris} renders $53$ enrolled endpoints (8 local Ollama models and $45$ commercial/open gateway endpoints) spanning $18$ model families as a gallery of fingerprints. Each ``iris'' is one endpoint: radial fibers encode its robust-normalized visible-randomness signature (aggregated over $c_{2,8}/c_{2,16}$ and $c_{10,8}$), hue marks the model family, and the pupil names the endpoint. The view is illustrative---same-family endpoints share a hue and a broadly similar corona, while cross-family endpoints are plainly distinct. The gallery is a \emph{superset} enrolled only for visualization: several shown frontier endpoints enter no quantitative metric; the $8$ local serves are the $K{=}6$ Qwen3 ladder plus two extras, and the $45$ gateway endpoints are the App.~\ref{app:scale} pool.

\begin{figure*}[p]
  \centering
  \includegraphics[width=0.94\linewidth]{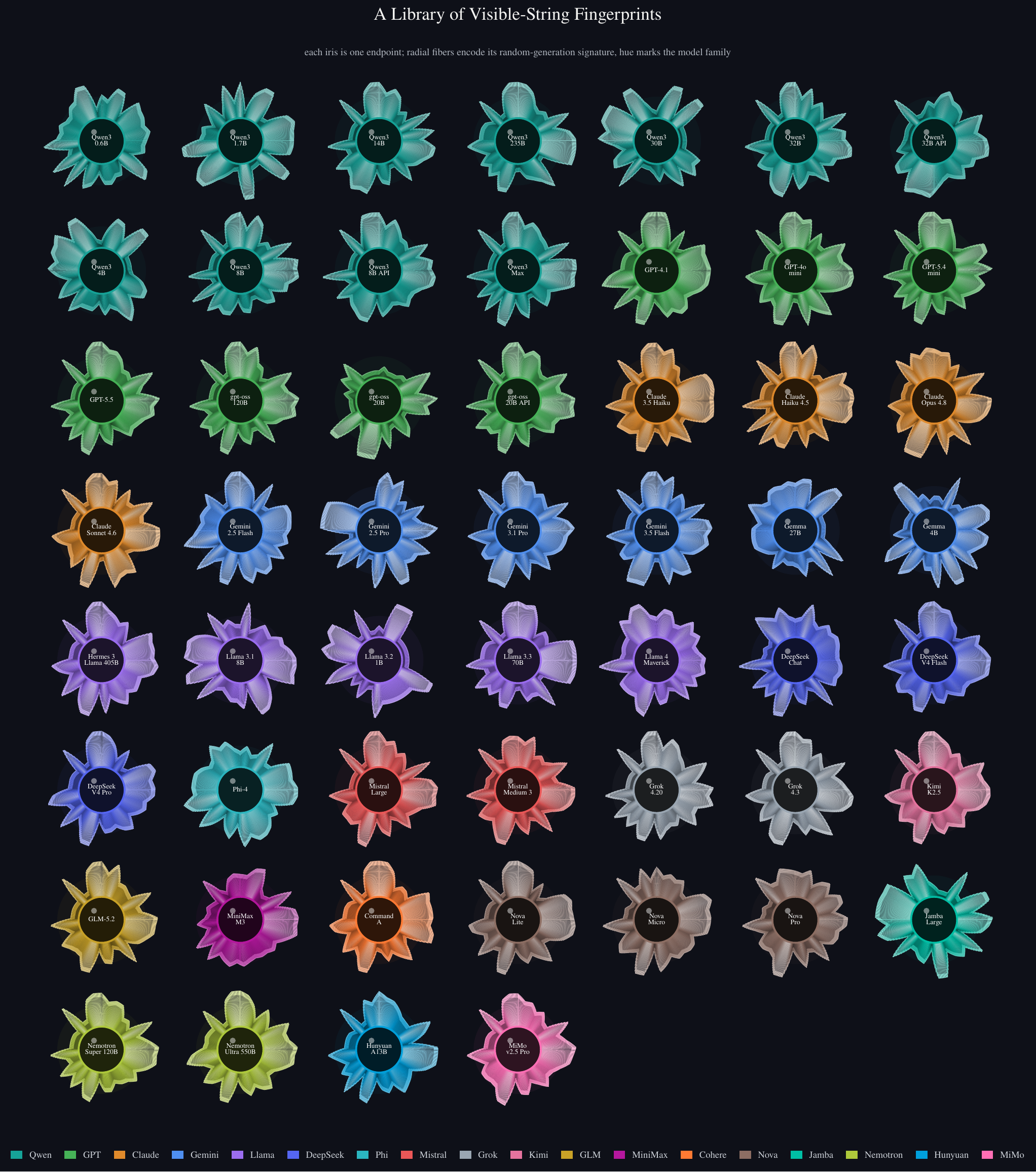}
  \caption{A library of visible-string fingerprints for $53$ enrolled endpoints (8 local, $45$ gateway) across
  $18$ model families. Each iris is one endpoint; radial fibers encode its random-generation signature
  (robust-normalized across endpoints), hue marks the model family, and the pupil names the endpoint.}
  \label{fig:iris}
\end{figure*}

\subsection{Probe Ranking}
\label{app:probeopt}
The probe is \emph{selected}, not asserted: ranking probes by the calibration-estimated $\widehat{I}_{\mathrm{auc}}$ from a cheap pilot predicts their realized query efficiency (Fig.~\ref{fig:probeopt}). The two quantities are monotonically related---high-$\widehat{I}_{\mathrm{auc}}$ sequence probes reach verification AUROC ${\ge}0.99$ in a single query while the zero-information $c_{2,1}$ ($\widehat{I}_{\mathrm{auc}}\!\approx\!0$) never does---so the auditor reads $\widehat{I}_{\mathrm{auc}}$ off calibration and picks the probe before spending its audit budget. On the expanded 13-condition grid this ordering is significant (Spearman$(\widehat{I}_{\mathrm{auc}},\,\mathrm{accuracy}@m{=}8){=}0.82$, $p{<}10^{-3}$; Section~\ref{sec:exp}).

\begin{figure}[t]
  \centering
  \includegraphics[width=0.86\columnwidth]{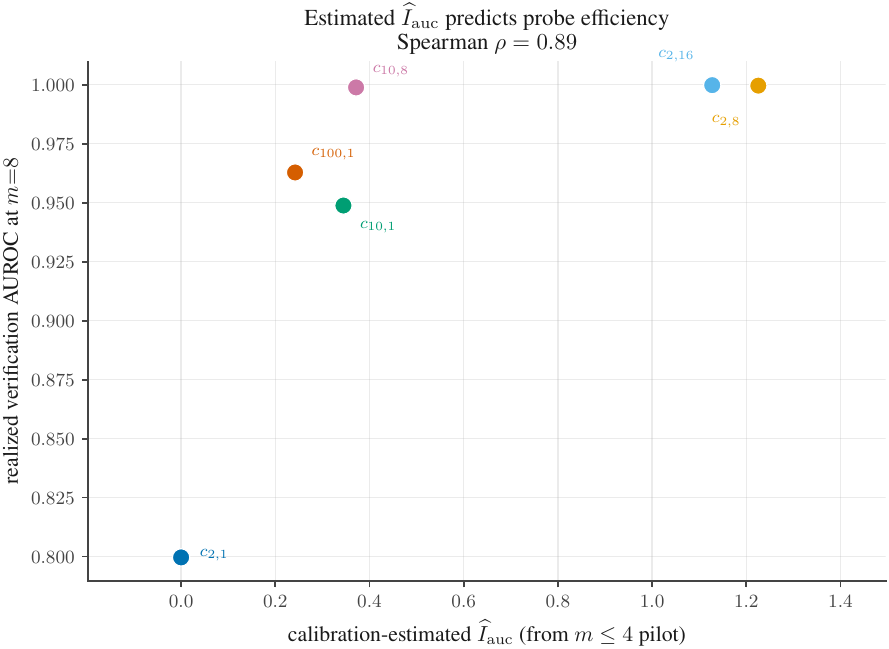}
  \caption{Calibration-estimated $\widehat{I}_{\mathrm{auc}}$ (from an $m{\le}4$ pilot) predicts realized verification
  AUROC at $m{=}8$, one point per probe family. The estimate orders probes by efficiency before any audit
  budget is spent: high-$\widehat{I}_{\mathrm{auc}}$ sequence probes saturate at $m{=}8$ while $c_{2,1}$
  ($\widehat{I}_{\mathrm{auc}}\!\approx\!0$) lags.}
  \label{fig:probeopt}
\end{figure}


\section{The Two \IRIS{} Stages}
\label{app:phases}
Audit-Plan Construction may query the trusted reference and candidate endpoints under known labels, whereas Audit Execution queries the unlabeled suspect and applies only the frozen plan.

\begin{table}[h]
  \centering\small
  \caption{Data access in the two \IRIS{} stages. Audit-Plan Construction fixes $\widehat{I}_{\mathrm{auc}}$, $m^\star$, and the mode-specific decision thresholds from \emph{labeled} data; Audit Execution uses unlabeled live suspect responses.
  Some deployments additionally re-measure the null tell rate on a \emph{fresh honest window of the
  claimed model} (App.~\ref{app:dilscale}); this window is the one execution-stage use of reference-side data, and its
  trust assumption is discussed in App.~\ref{app:knobs}.}
  \label{tab:phases}
  \setlength{\tabcolsep}{3pt}
  \begin{tabular}{p{0.19\columnwidth}p{0.30\columnwidth}p{0.30\columnwidth}}
    \toprule
    & \shortstack{\textbf{Audit-Plan}\\\textbf{Construction}} & \textbf{Audit Execution} \\
    \midrule
    Endpoints & trusted $P$, candidates $\{R_j\}$ & live suspect (+ optional honest window of $P$) \\
    Labels & known & none for suspect \\
    Sample $P$? & yes & optional pre-audit honest window \\
    Observes & visible strings & visible strings \\
    Produces & score library, calibrated thresholds, frozen budget plan & suspect responses, decisions, $\hat\epsilon$, $\hat R$ \\
    \bottomrule
  \end{tabular}
\end{table}

\section{Dilution Calibration}
\label{app:mixcalib}
This appendix writes the dilution budget of Section~\ref{sec:audit-plan-construction} out in full: it gives the explicit threshold grid, level-$\alpha$ cutoff, diluted-binomial power, and cheapest-threshold selection behind the compact definition of $m_{\mathrm{dil}}(c,j;\tau)$ in Eq.~\eqref{eq:mdil}, with all $(c,j)$ indices retained.

\paragraph{Dilution budget.}
Over a fixed honest-tail grid $\mathcal A_{\mathrm{resp}}$, the threshold grid is $\mathcal T_c=\{\widehat Q_{1-\alpha_0}(\{s_c(y_r^P)\}_{r=1}^{n_P}):\alpha_0\in\mathcal A_{\mathrm{resp}}\}$, with honest/substitute tell rates $\widehat\alpha_{1,cj}(\tau),\widehat q_{cj}(\tau)$ from Eq.~\eqref{eq:honest-tell-rate}. For target dilution $\epsilon_{\min}$ the diluted tell probability is $\widehat p_{\epsilon,cj}(\tau)=(1-\epsilon_{\min})\widehat\alpha_{1,cj}(\tau)+\epsilon_{\min}\widehat q_{cj}(\tau)$. With $U_m(b;p)=\mathbb P_{K\sim\mathrm{B}(m,p)}(K\ge b)$ the binomial upper tail, the level-$\alpha$ cutoff and the diluted-binomial power are
\begin{align}
  b_\alpha(m,p_0)&=\min\{b:U_m(b;p_0)\le\alpha\},\notag\\
  \pi_{\epsilon,cj}(m,\tau)&=U_m\!\left(b_\alpha(m,\widehat\alpha_{1,cj}(\tau));\,\widehat p_{\epsilon,cj}(\tau)\right).
\label{eq:binomial-cutoff}
\end{align}
\IRIS{} keeps only the positive-separation thresholds
\begin{align}
  \mathcal T_{cj}^{+}=\{\tau\in\mathcal T_c:\widehat q_{cj}(\tau)>\widehat\alpha_{1,cj}(\tau)\},
\label{eq:positive-margin-thresholds}
\end{align}
marks the pair low-margin if $\mathcal T_{cj}^{+}=\emptyset$, and otherwise sizes the budget and selects the cheapest surviving threshold,
\begin{align}
  m_{\mathrm{dil}}(c,j;\tau)
  &=\min\{m\in\mathbb{N}^+:\pi_{\epsilon,cj}(m',\tau)\ge1-\delta\notag\\
  &\hspace{3.35cm}\text{for every }m'\ge m\},\notag\\
  \tau_{cj}&\in\arg\min_{\tau\in\mathcal T_{cj}^{+}}m_{\mathrm{dil}}(c,j;\tau),
\label{eq:resp-threshold-select}
\end{align}
with $m_{\mathrm{dil}}(c,j)=m_{\mathrm{dil}}(c,j;\tau_{cj})$.

\section{Probes and Sampling}
\label{app:probes}
For full reproducibility we give the verbatim probe prompts and the exact decoding parameters behind every response in the paper. Each query is a \emph{single user turn} whose content is the probe string below; \emph{no system prompt} is used, and the prompt is fixed (no per-query randomization), so the only stochasticity is the endpoint's own sampling. The target distribution is uniform over each probe's alphabet. We state the prompts in the clear by design: \IRIS{}'s budget is set by the realized per-query information rate $\widehat{I}_{\mathrm{auc}}$, not by probe secrecy, and the matched-budget and routing analyses of Section~\ref{sec:exp} hold when the probe is known to the provider; a secrecy-dependent deployment can instead draw from a held-out probe bank using the same templates.

\paragraph{Primary probe set.}
Table~\ref{tab:probes} lists the six probes (three single-draw, three short-sequence) behind all headline results using the $c_{n,L}$ notation of Section~\ref{sec:audit-plan-construction}. Here $n$ is the alphabet/range size, $L$ the requested sequence length, and $n_{\text{pred}}$ the decode-token cap (\texttt{num\_predict}/\texttt{max\_tokens}); single-draw probes use $L{=}1$.

\begin{table*}[t]
\centering\small
\begin{tabular}{@{}lcccr p{0.52\textwidth}@{}}
\toprule
Probe code & $n$ & $L$ & parser & $n_{\text{pred}}$ & Verbatim prompt (single user turn; no system prompt) \\
\midrule
$c_{10,1}$  & 10  & 1  & char   & 48 & Output one uniformly random digit from 0 to 9. Output only the single digit, nothing else. \\
$c_{100,1}$ & 100 & 1  & number & 48 & Choose one uniformly random integer between 1 and 100 inclusive. Output only the integer. \\
$c_{2,1}$   & 2   & 1  & char   & 48 & Flip one fair coin. Output only a single character: H for heads or T for tails. \\
$c_{2,8}$   & 2   & 8  & char   & 72 & Generate exactly 8 symbols by independently choosing uniformly between 0 and 1. Output the sequence only, with no spaces and no explanation. \\
$c_{2,16}$  & 2   & 16 & char   & 88 & Generate exactly 16 symbols by independently choosing uniformly between 0 and 1. Output the sequence only, with no spaces and no explanation. \\
$c_{10,8}$  & 10  & 8  & char   & 72 & Generate exactly 8 independent uniformly random digits from 0 to 9. Output only the compact digit string, no separators. \\
\bottomrule
\end{tabular}
\caption{The six primary probes. Prompts are reproduced verbatim. Decode cap rule:
$n_{\text{pred}}{=}48$ for single-draw probes and $\max(72,\,3L{+}40)$ for short-sequence probes.}
\label{tab:probes}
\end{table*}

\paragraph{Expansion probes.}
The probe-ranking / information-law experiment (Sec.~\ref{sec:exp}, $13$ probe conditions) adds seven probes spanning a wider entropy\,$\times$\,length grid; Table~\ref{tab:probes_exp} gives them verbatim.

\begin{table*}[t]
\centering\small
\begin{tabular}{@{}lcccr p{0.52\textwidth}@{}}
\toprule
Probe code & $n$ & $L$ & parser & $n_{\text{pred}}$ & Verbatim prompt \\
\midrule
$c_{26,1}$  & 26 & 1  & char & 48  & Output one uniformly random lowercase letter from a to z. Output only the single letter. \\
$c_{16,1}$  & 16 & 1  & char & 48  & Output one uniformly random hexadecimal digit (0-9 or a-f). Output only the single character. \\
$c_{6,1}$   & 6  & 1  & char & 48  & Roll one fair six-sided die. Output only a single digit from 1 to 6. \\
$c_{2,4}$   & 2  & 4  & char & 72  & Generate exactly 4 symbols by independently choosing uniformly between 0 and 1. Output the sequence only, with no spaces and no explanation. \\
$c_{2,32}$  & 2  & 32 & char & 136 & Generate exactly 32 symbols by independently choosing uniformly between 0 and 1. Output the sequence only, with no spaces and no explanation. \\
$c_{10,16}$ & 10 & 16 & char & 88  & Generate exactly 16 independent uniformly random digits from 0 to 9. Output only the compact digit string, no separators. \\
$c_{6,8}$   & 6  & 8  & char & 72  & Generate exactly 8 outcomes of a fair six-sided die (digits 1 to 6). Output only the compact string, no separators. \\
\bottomrule
\end{tabular}
\caption{The seven expansion probes used only for the $\widehat{I}_{\mathrm{auc}}$ probe-ranking law.}
\label{tab:probes_exp}
\end{table*}

\paragraph{Decoding parameters.}
We use two collection paths, both with temperature $1.0$ on the main run and $\mathrm{top\_p}{=}1.0$ throughout. \emph{Local} endpoints are served by Ollama via \texttt{/api/generate} with \texttt{think}=\texttt{false} (reasoning disabled), \texttt{num\_predict} per Table~\ref{tab:probes}, and a greedy single-token prewarm call before each model; the temperature sweep covers $T\in\{0,0.5,1,1.5,2\}$, and the main run draws $120$ repeats per model--probe. \emph{Gateway} endpoints are queried through an OpenAI-compatible \texttt{/chat/completions} interface with \texttt{temperature}=$1.0$, \texttt{top\_p}=$1.0$, and \texttt{max\_tokens} equal to the same $n_{\text{pred}}$ for models that allow reasoning to be turned off and $\max(n_{\text{pred}},700)$ for mandatory-reasoning models; reasoning is disabled with \verb|{"enabled": false}| where supported, otherwise forced to the lowest setting (\verb|{"effort": "low"}|) for models that cannot disable it (e.g.\ Gemini~3.x, \texttt{gpt-oss}). Gateway pools draw $100$ repeats per model--probe. These settings yield the $\sim$24k local and $10{,}200+27{,}000$ gateway responses in the release bundle. We log and analyze only the visible response string (and runtime metadata), never weights, logits, or token ranks.

\clearpage
\onecolumn

\section{Feature Dictionary}
\label{app:features}
\newcommand{\feat}[1]{\texttt{\detokenize{#1}}}
\newcommand{\symcount}[1]{Count of parsed symbol \texttt{#1}.}
\newcommand{\symfreq}[1]{Frequency of parsed symbol \texttt{#1}.}
\newcommand{\transfreq}[2]{Adjacent transition frequency \texttt{#1}$\to$\texttt{#2}.}
Table~\ref{tab:feature-dict} lists the fixed $179$-dimensional vector $\phi(y)$ used by the main \IRIS{} classifiers. The list contains the numeric training features only: identity fields, temperature/top-$p$, repeat index, requested length, and API runtime counters are excluded. Coordinates that do not apply to a particular probe are left missing and are zero-imputed inside the training pipeline.

\begingroup
\footnotesize
\setlength{\tabcolsep}{1.5pt}
\renewcommand{\arraystretch}{1.05}
\begin{longtable}{@{}r p{0.24\textwidth} p{0.205\textwidth} r p{0.24\textwidth} p{0.205\textwidth}@{}}
\caption{Complete visible-string feature dictionary. Each row gives two coordinates of the $179$-dimensional vector $\phi(y)$.}
\label{tab:feature-dict}\\
\toprule
\# & Feature & Definition & \# & Feature & Definition \\
\midrule
\endfirsthead
\toprule
\# & Feature & Definition & \# & Feature & Definition \\
\midrule
\endhead
\midrule
\multicolumn{6}{r}{Continued on next page}\\
\endfoot
\bottomrule
\endlastfoot
1 & \feat{alphabet_size} & Probe alphabet size $n$. & 2 & \feat{response_chars} & Raw response character length. \\
3 & \feat{response_non_ws_chars} & Character length after whitespace removal. & 4 & \feat{contains_newline} & Indicator that the raw response contains a newline. \\
5 & \feat{contains_space} & Indicator that the raw response contains whitespace. & 6 & \feat{contains_alpha_text} & Indicator for a long alphabetic/CJK text run. \\
7 & \feat{invalid_chars} & Non-separator characters outside the target alphabet. & 8 & \feat{checked_chars} & Non-separator characters checked against the alphabet. \\
9 & \feat{invalid_rate} & \feat{invalid_chars}\slash\feat{checked_chars}, zero if none checked. & 10 & \feat{parsed_len} & Number of target-alphabet symbols recovered. \\
11 & \feat{exact_len} & Indicator that \feat{parsed_len} equals requested length. & 12 & \feat{parse_success} & Indicator for nonempty parse and \feat{invalid_rate}$\le0.25$. \\
13 & \feat{length_error} & \feat{parsed_len} minus requested length. & 14 & \feat{abs_length_error} & Absolute value of \feat{length_error}. \\
15 & \feat{rel_length_error} & \feat{abs_length_error} divided by requested length. & 16 & \feat{chi2_target} & Pearson $\chi^2$ distance from target symbol counts. \\
17 & \feat{entropy} & Shannon entropy of recovered symbol counts. & 18 & \feat{entropy_norm} & \feat{entropy} divided by $\log_2 n$. \\
19 & \feat{l1_target} & $\ell_1$ distance from target symbol probabilities. & 20 & \feat{max_freq} & Largest observed symbol frequency. \\
21 & \feat{min_freq} & Smallest observed symbol frequency. & 22 & \feat{freq_range} & \feat{max_freq} minus \feat{min_freq}. \\
23 & \feat{mode_count_ratio} & Most frequent symbol count divided by \feat{parsed_len}. & 24 & \feat{run_count} & Number of constant-symbol runs. \\
25 & \feat{mean_run} & Mean length of constant-symbol runs. & 26 & \feat{std_run} & Standard deviation of run lengths. \\
27 & \feat{longest_run} & Maximum constant-symbol run length. & 28 & \feat{longest_run_ratio} & \feat{longest_run} divided by \feat{parsed_len}. \\
29 & \feat{alternation_rate} & Fraction of adjacent parsed symbols that differ. & 30 & \feat{expected_alternation} & Target value $1-\sum_a p_a^2$. \\
31 & \feat{bigram_entropy} & Entropy of adjacent two-symbol blocks. & 32 & \feat{trigram_entropy} & Entropy of adjacent three-symbol blocks. \\
33 & \feat{compression_ratio} & zlib compressed parsed-string length divided by \feat{parsed_len}. & 34 & \feat{autocorr_lag1} & Autocorrelation of alphabet-indexed symbols at lag 1. \\
35 & \feat{autocorr_lag2} & Autocorrelation of alphabet-indexed symbols at lag 2. & 36 & \feat{autocorr_lag3} & Autocorrelation of alphabet-indexed symbols at lag 3. \\
37 & \feat{first_symbol_code} & Alphabet index of the first parsed symbol. & 38 & \feat{last_symbol_code} & Alphabet index of the last parsed symbol. \\
39 & \feat{count_0} & \symcount{0} & 40 & \feat{freq_0} & \symfreq{0} \\
41 & \feat{count_1} & \symcount{1} & 42 & \feat{freq_1} & \symfreq{1} \\
43 & \feat{count_2} & \symcount{2} & 44 & \feat{freq_2} & \symfreq{2} \\
45 & \feat{count_3} & \symcount{3} & 46 & \feat{freq_3} & \symfreq{3} \\
47 & \feat{count_4} & \symcount{4} & 48 & \feat{freq_4} & \symfreq{4} \\
49 & \feat{count_5} & \symcount{5} & 50 & \feat{freq_5} & \symfreq{5} \\
51 & \feat{count_6} & \symcount{6} & 52 & \feat{freq_6} & \symfreq{6} \\
53 & \feat{count_7} & \symcount{7} & 54 & \feat{freq_7} & \symfreq{7} \\
55 & \feat{count_8} & \symcount{8} & 56 & \feat{freq_8} & \symfreq{8} \\
57 & \feat{count_9} & \symcount{9} & 58 & \feat{freq_9} & \symfreq{9} \\
59 & \feat{trans_0_0} & \transfreq{0}{0} & 60 & \feat{trans_0_1} & \transfreq{0}{1} \\
61 & \feat{trans_1_0} & \transfreq{1}{0} & 62 & \feat{trans_1_1} & \transfreq{1}{1} \\
63 & \feat{alt0_mismatch} & Mean mismatch to the alternating pattern starting with symbol 0. & 64 & \feat{alt1_mismatch} & Mean mismatch to the alternating pattern starting with symbol 1. \\
65 & \path{first_half_symbol0_freq} & Symbol-0 frequency in the first half. & 66 & \path{second_half_symbol0_freq} & Symbol-0 frequency in the second half. \\
67 & \feat{half_freq_gap} & Absolute first-half vs second-half symbol-0 frequency gap. & 68 & \feat{quarter_freq_std} & Standard deviation of symbol-0 frequencies across quarters. \\
69 & \feat{trans_0_2} & \transfreq{0}{2} & 70 & \feat{trans_0_3} & \transfreq{0}{3} \\
71 & \feat{trans_0_4} & \transfreq{0}{4} & 72 & \feat{trans_0_5} & \transfreq{0}{5} \\
73 & \feat{trans_0_6} & \transfreq{0}{6} & 74 & \feat{trans_0_7} & \transfreq{0}{7} \\
75 & \feat{trans_0_8} & \transfreq{0}{8} & 76 & \feat{trans_0_9} & \transfreq{0}{9} \\
77 & \feat{trans_1_2} & \transfreq{1}{2} & 78 & \feat{trans_1_3} & \transfreq{1}{3} \\
79 & \feat{trans_1_4} & \transfreq{1}{4} & 80 & \feat{trans_1_5} & \transfreq{1}{5} \\
81 & \feat{trans_1_6} & \transfreq{1}{6} & 82 & \feat{trans_1_7} & \transfreq{1}{7} \\
83 & \feat{trans_1_8} & \transfreq{1}{8} & 84 & \feat{trans_1_9} & \transfreq{1}{9} \\
85 & \feat{trans_2_0} & \transfreq{2}{0} & 86 & \feat{trans_2_1} & \transfreq{2}{1} \\
87 & \feat{trans_2_2} & \transfreq{2}{2} & 88 & \feat{trans_2_3} & \transfreq{2}{3} \\
89 & \feat{trans_2_4} & \transfreq{2}{4} & 90 & \feat{trans_2_5} & \transfreq{2}{5} \\
91 & \feat{trans_2_6} & \transfreq{2}{6} & 92 & \feat{trans_2_7} & \transfreq{2}{7} \\
93 & \feat{trans_2_8} & \transfreq{2}{8} & 94 & \feat{trans_2_9} & \transfreq{2}{9} \\
95 & \feat{trans_3_0} & \transfreq{3}{0} & 96 & \feat{trans_3_1} & \transfreq{3}{1} \\
97 & \feat{trans_3_2} & \transfreq{3}{2} & 98 & \feat{trans_3_3} & \transfreq{3}{3} \\
99 & \feat{trans_3_4} & \transfreq{3}{4} & 100 & \feat{trans_3_5} & \transfreq{3}{5} \\
101 & \feat{trans_3_6} & \transfreq{3}{6} & 102 & \feat{trans_3_7} & \transfreq{3}{7} \\
103 & \feat{trans_3_8} & \transfreq{3}{8} & 104 & \feat{trans_3_9} & \transfreq{3}{9} \\
105 & \feat{trans_4_0} & \transfreq{4}{0} & 106 & \feat{trans_4_1} & \transfreq{4}{1} \\
107 & \feat{trans_4_2} & \transfreq{4}{2} & 108 & \feat{trans_4_3} & \transfreq{4}{3} \\
109 & \feat{trans_4_4} & \transfreq{4}{4} & 110 & \feat{trans_4_5} & \transfreq{4}{5} \\
111 & \feat{trans_4_6} & \transfreq{4}{6} & 112 & \feat{trans_4_7} & \transfreq{4}{7} \\
113 & \feat{trans_4_8} & \transfreq{4}{8} & 114 & \feat{trans_4_9} & \transfreq{4}{9} \\
115 & \feat{trans_5_0} & \transfreq{5}{0} & 116 & \feat{trans_5_1} & \transfreq{5}{1} \\
117 & \feat{trans_5_2} & \transfreq{5}{2} & 118 & \feat{trans_5_3} & \transfreq{5}{3} \\
119 & \feat{trans_5_4} & \transfreq{5}{4} & 120 & \feat{trans_5_5} & \transfreq{5}{5} \\
121 & \feat{trans_5_6} & \transfreq{5}{6} & 122 & \feat{trans_5_7} & \transfreq{5}{7} \\
123 & \feat{trans_5_8} & \transfreq{5}{8} & 124 & \feat{trans_5_9} & \transfreq{5}{9} \\
125 & \feat{trans_6_0} & \transfreq{6}{0} & 126 & \feat{trans_6_1} & \transfreq{6}{1} \\
127 & \feat{trans_6_2} & \transfreq{6}{2} & 128 & \feat{trans_6_3} & \transfreq{6}{3} \\
129 & \feat{trans_6_4} & \transfreq{6}{4} & 130 & \feat{trans_6_5} & \transfreq{6}{5} \\
131 & \feat{trans_6_6} & \transfreq{6}{6} & 132 & \feat{trans_6_7} & \transfreq{6}{7} \\
133 & \feat{trans_6_8} & \transfreq{6}{8} & 134 & \feat{trans_6_9} & \transfreq{6}{9} \\
135 & \feat{trans_7_0} & \transfreq{7}{0} & 136 & \feat{trans_7_1} & \transfreq{7}{1} \\
137 & \feat{trans_7_2} & \transfreq{7}{2} & 138 & \feat{trans_7_3} & \transfreq{7}{3} \\
139 & \feat{trans_7_4} & \transfreq{7}{4} & 140 & \feat{trans_7_5} & \transfreq{7}{5} \\
141 & \feat{trans_7_6} & \transfreq{7}{6} & 142 & \feat{trans_7_7} & \transfreq{7}{7} \\
143 & \feat{trans_7_8} & \transfreq{7}{8} & 144 & \feat{trans_7_9} & \transfreq{7}{9} \\
145 & \feat{trans_8_0} & \transfreq{8}{0} & 146 & \feat{trans_8_1} & \transfreq{8}{1} \\
147 & \feat{trans_8_2} & \transfreq{8}{2} & 148 & \feat{trans_8_3} & \transfreq{8}{3} \\
149 & \feat{trans_8_4} & \transfreq{8}{4} & 150 & \feat{trans_8_5} & \transfreq{8}{5} \\
151 & \feat{trans_8_6} & \transfreq{8}{6} & 152 & \feat{trans_8_7} & \transfreq{8}{7} \\
153 & \feat{trans_8_8} & \transfreq{8}{8} & 154 & \feat{trans_8_9} & \transfreq{8}{9} \\
155 & \feat{trans_9_0} & \transfreq{9}{0} & 156 & \feat{trans_9_1} & \transfreq{9}{1} \\
157 & \feat{trans_9_2} & \transfreq{9}{2} & 158 & \feat{trans_9_3} & \transfreq{9}{3} \\
159 & \feat{trans_9_4} & \transfreq{9}{4} & 160 & \feat{trans_9_5} & \transfreq{9}{5} \\
161 & \feat{trans_9_6} & \transfreq{9}{6} & 162 & \feat{trans_9_7} & \transfreq{9}{7} \\
163 & \feat{trans_9_8} & \transfreq{9}{8} & 164 & \feat{trans_9_9} & \transfreq{9}{9} \\
165 & \feat{number_found} & Indicator that the number parser found an integer. & 166 & \feat{number_value} & First integer recovered by the number parser. \\
167 & \feat{number_log10} & $\log_{10}$ of \feat{number_value}, when positive. & 168 & \feat{number_num_digits} & Decimal digit count of \feat{number_value}. \\
169 & \feat{number_parity} & \feat{number_value} modulo 2. & 170 & \feat{number_mod10} & \feat{number_value} modulo 10. \\
171 & \feat{number_mod100} & \feat{number_value} modulo 100. & 172 & \feat{count_H} & \symcount{H} \\
173 & \feat{freq_H} & \symfreq{H} & 174 & \feat{count_T} & \symcount{T} \\
175 & \feat{freq_T} & \symfreq{T} & 176 & \feat{trans_H_H} & \transfreq{H}{H} \\
177 & \feat{trans_H_T} & \transfreq{H}{T} & 178 & \feat{trans_T_H} & \transfreq{T}{H} \\
179 & \feat{trans_T_T} & \transfreq{T}{T} & & & \\
\end{longtable}
\endgroup
\clearpage
\twocolumn


\section{Score Information}
\label{app:score}
Definition~\ref{def:I} distinguishes the oracle exponent $I^\star$, the score exponent $I^{\mathrm{sc}}$, and the plug-in estimate $\widehat{I}_{\mathrm{auc}}$. We make $I^{\mathrm{sc}}$ precise and bound it.

\begin{definition}[Score exponent]
\label{def:Isc}
Let $\mu_P,\mu_R$ be the push-forward laws of the score $s(y)=-\log\hat{\mathbb P}(P\mid y)$ under $y\sim P$ and $y\sim R$. The \emph{score exponent} is the Chernoff information of these scalar laws, $I^{\mathrm{sc}}(c;P,R)=-\min_{0\le\lambda\le1}\log\int \mu_P(s)^{1-\lambda}\mu_R(s)^{\lambda}\,ds$.
\end{definition}

\begin{proposition}[Score compression]
\label{prop:Isc}
Let $s=s(y)$ be a deterministic measurable score with push-forward laws $\mu_P=s_\#P$ and $\mu_R=s_\#R$. Then $I^{\mathrm{sc}}\le I^\star$. If $P\neq R$, equality holds iff $s$ is sufficient for $\{P,R\}$, equivalently iff the likelihood ratio $P/R$ is constant on each level set of $s$.
\end{proposition}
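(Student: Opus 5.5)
The plan is to compare, for each fixed $\lambda\in[0,1]$, the Chernoff coefficient of the full response laws with that of the push-forward score laws, and then pass to the minimum over $\lambda$. Write
\[
  \rho_\lambda(P,R)=\sum_y P(y)^{1-\lambda}R(y)^\lambda,\qquad \rho_\lambda(\mu_P,\mu_R)=\sum_t \mu_P(t)^{1-\lambda}\mu_R(t)^\lambda .
\]
Here the responses $y$ are parsed strings on a finite alphabet, so all sums are countable and the integral in Definition~\ref{def:Isc} reduces to a sum over score values. The key step is a level-set decomposition. Fix a score value $t$ and apply H\"older's inequality with exponents $1/(1-\lambda)$ and $1/\lambda$ to the sum over $\{y:s(y)=t\}$:
\[
  \sum_{y:s(y)=t}P(y)^{1-\lambda}R(y)^\lambda\;\le\;\Big(\sum_{y:s(y)=t}P(y)\Big)^{1-\lambda}\Big(\sum_{y:s(y)=t}R(y)\Big)^{\lambda}=\mu_P(t)^{1-\lambda}\mu_R(t)^\lambda .
\]
At $\lambda\in\{0,1\}$ the same bound holds trivially with the convention $0^0=1$. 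Summing over $t$ gives $\rho_\lambda(P,R)\le\rho_\lambda(\mu_P,\mu_R)$. Set $f(\lambda)=-\log\rho_\lambda(P,R)$ and $g(\lambda)=-\log\rho_\lambda(\mu_P,\mu_R)$. Then $f\ge g$ pointwise, and so
\[
  I^\star=\max_\lambda f(\lambda)\;\ge\;\max_\lambda g(\lambda)=I^{\mathrm{sc}}.
\]
This is the data-processing direction.

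For the equality characterization I would argue the two directions separately.

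\emph{Sufficiency implies equality.} If $P/R$ is constant on each level set, then on $\{s=t\}$ we can write $P(y)=c_t R(y)$. A direct computation then shows that each level-set sum equals $\mu_P(t)^{1-\lambda}\mu_R(t)^\lambda$ exactly. Hence $f\equiv g$, and the two maxima agree. Points where $R(y)=0$ need the usual convention: the ratio is $+\infty$, and it must be constant on the level set in that extended sense.

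\emph{Equality implies sufficiency.} Suppose $\max f=\max g$, and let $\lambda^\ast$ maximize $g$. Then
\[
  g(\lambda^\ast)=\max g=\max f\ge f(\lambda^\ast)\ge g(\lambda^\ast),
\]
so $f(\lambda^\ast)=g(\lambda^\ast)$. Since the per-level inequalities all point the same way, each H\"older inequality must be tight at $\lambda^\ast$. If $\lambda^\ast\in(0,1)$, the H\"older equality case forces the vectors $(P(y))_{s(y)=t}$ and $(R(y))_{s(y)=t}$ to be proportional on every level set that carries positive mass. That is exactly constancy of the likelihood ratio on level sets, which is the Neyman--Fisher condition for $s$ to be sufficient for $\{P,R\}$.

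The main obstacle is the case where the maximizer $\lambda^\ast$ sits at an endpoint, where H\"older gives no information. I would handle it using $P\neq R$ together with convexity of $\lambda\mapsto\log\rho_\lambda$. Under mutual absolute continuity, $g(0)=g(1)=0$, while $P\neq R$ would make $I^\star>0$. So if $\max g=\max f>0$, the maximizer cannot be at an endpoint, and it is interior.

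When the supports differ, $f(0)=-\log R(\mathrm{supp}\,P)$ may be positive, and an endpoint optimum is possible. In that case I would argue directly on supports. Equality $f(0)=g(0)$ means $\mu_R(\{t:\mu_P(t)>0\})=R(\mathrm{supp}\,P)$, so no level set mixes $P$-null and $P$-charged points in a way that loses $R$-mass. I would then combine this with the interior comparison at nearby $\lambda$, using concavity of $f-g$ on the relevant region. If the endpoint case resists a clean treatment, I would state the converse under mutual absolute continuity and note the support caveat explicitly.
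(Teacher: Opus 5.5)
Your proof follows the paper's route. You apply H\"older on each level set of $s$ to get $\rho_\lambda(P,R)\le\rho_\lambda(\mu_P,\mu_R)$ for every $\lambda$, which gives $I^{\mathrm{sc}}\le I^\star$. Proportionality on level sets makes each H\"older step an equality. For the converse, you take an optimizer of the score-level coefficient, sandwich to force equality there, and use the H\"older equality case at an interior exponent. Under mutual absolute continuity ($P\ll R\ll P$) your argument is complete. It is also more careful than the paper's, which sets $\Phi_\star(0)=\Phi_\star(1)=1$ (the $0^0=1$ convention) and concludes that the minimizer must be interior.

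The endpoint case you flag is more than a technicality: without a support condition the converse is false. Your plan to ``argue directly on supports'' therefore cannot succeed. Here is a counterexample.
\begin{itemize}
\item Take three responses with $P=(\tfrac14,\tfrac14,\tfrac12)$ and $R=(\tfrac23,\tfrac13,0)$, and let $s$ merge the first two.
\item For $\lambda\in(0,1]$, $\rho_\lambda(P,R)=(\tfrac14)^{1-\lambda}\big((\tfrac23)^\lambda+(\tfrac13)^\lambda\big)$. Its logarithm is convex with derivative $\log2-\tfrac12\log\tfrac98>0$ at $0^+$, so it is increasing. Its infimum is the boundary value $\tfrac12$, giving $I^\star=\log2$.
\item The score laws are $\mu_P=(\tfrac12,\tfrac12)$ and $\mu_R=(1,0)$, with $\rho_\lambda(\mu_P,\mu_R)=(\tfrac12)^{1-\lambda}$. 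So $I^{\mathrm{sc}}=\log2$ as well.
\item Yet $P/R$ takes the values $\tfrac38$ and $\tfrac34$ on the merged level set, so $s$ is not sufficient.
\end{itemize}
The counterexample holds under either endpoint reading. By continuity, the optimum sits at $\lambda=0$, where H\"older says nothing. With $0^0=1$, the coefficient jumps up to $1$ at $\lambda=0$ and its infimum over $[0,1]$ is not attained. That is exactly where the paper's ``let $\lambda_1$ attain'' step silently breaks.

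Your fallback is the correct resolution, and it should be stated as the hypothesis rather than as a caveat. Assume $P\ll R\ll P$, or more weakly that the score-level optimum is attained in $(0,1)$. Under either assumption your sandwich-plus-H\"older argument goes through verbatim, with $P/R$ constant in $[0,\infty]$ on each level set.

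There are three smaller slips in your last paragraph.
\begin{itemize}
\item You fix $0^0=1$ for the endpoint H\"older step, which forces $f(0)=f(1)=0$. You then use the continuity value $f(0)>0$. Pick one convention.
\item Under the continuity convention, $f(0)=-\log P(\mathrm{supp}\,R)$; the quantity $-\log R(\mathrm{supp}\,P)$ you wrote is $f(1)$.
\item $f-g$ is a difference of concave functions, so the appeal to its concavity is unfounded.
\end{itemize}
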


\noindent The equality condition says exactly when the scalar score loses no response-level evidence. Calibrated log-odds that are affine and strictly monotone in $\log(P/R)$ attain the oracle rate; miscalibration or a missing feature that breaks this measurability gives $I^{\mathrm{sc}}<I^\star$.

\noindent The operational $\widehat{I}_{\mathrm{auc}}$ (slope of $\log(1-\mathrm{AUROC})$) is the decay rate of a \emph{two-sample rank} statistic on $2m$ observations; it is \emph{not} in general bounded by $I^{\mathrm{sc}}$ and can exceed it (e.g.\ $I^{\mathrm{auc}}=2\,I^{\mathrm{sc}}$ for $\mathrm{Ber}(.25)/\mathrm{Ber}(.75)$ scores). What Prop.~\ref{prop:Isc} bounds is the one-sample \emph{mean-threshold} exponent, $I^{\mathrm{mean}}\le I^{\mathrm{sc}}\le I^\star$; we do not claim $\widehat{I}_{\mathrm{auc}}=I^{\mathrm{sc}}$, and its exponential law is the empirical regularity of Section~\ref{sec:exp}.

\section{Temperature Retunes}
\label{app:tempmag}
This appendix derives the \emph{magnitude} of a temperature change's effect on a visible-string audit and shows why \IRIS{}'s insensitivity to it is the intended behavior. All quantities here are \emph{per single draw} (per position); the per-response budget follows through Prop.~\ref{prop:len} and is non-additive in length. Throughout we hold the served weights fixed, so the per-position logit vector $z$ is the \emph{same} at both temperatures---this is exactly what separates a temperature retune from a model change.

\begin{proposition}[Second-order retunes]
\label{prop:tempmag}
Hold the served logits $z\in\mathbb{R}^n$ ($n<\infty$) fixed and temper at inverse temperature $\beta=1/T$, $p^\beta_i\propto e^{\beta z_i}$. With $A(\beta)=\log\sum_i e^{\beta z_i}$ and $V=A''(\beta_0)>0$, a small gap $\Delta\beta=\beta_1-\beta_0$ gives
\[
\begin{aligned}
  I^\star&=\tfrac18(\Delta\beta)^2V+O(\Delta\beta^3),\\
  \mathrm{KL}(p^{\beta_0}\|p^{\beta_1})&=\tfrac12(\Delta\beta)^2V+O(\Delta\beta^3),\\
  \chi^2(p^{\beta_1}\|p^{\beta_0})&=(\Delta\beta)^2V+O(\Delta\beta^3).
\end{aligned}
\]
Thus a pure temperature retune has only second-order separation in $\Delta\beta$; the companion identity is $\mathrm{Var}_{p^T}(z)=T^3\,dH/dT\ge0$.
\end{proposition}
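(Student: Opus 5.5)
Since $p^\beta_i=\exp(\beta z_i-A(\beta))$ is a one-parameter exponential family with natural parameter $\beta$, sufficient statistic $z$ and log-partition $A$, I would reduce all three divergences to expressions in $A$ and then Taylor-expand around $\beta_0$. The standard facts are $A'(\beta)=\mathbb E_{p^\beta}[z]$ and $A''(\beta)=\mathrm{Var}_{p^\beta}(z)$. Hence $V>0$ exactly when $z$ is not constant on the support, which is the nondegenerate case assumed. Finiteness of $n$ makes $A$ real-analytic, so every remainder is a genuine $O(\Delta\beta^3)$ with constants depending only on $z$ and a neighbourhood of $\beta_0$.

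\emph{KL.} I would write
\[
\mathrm{KL}(p^{\beta_0}\|p^{\beta_1})=\mathbb E_{\beta_0}[(\beta_0-\beta_1)z]-A(\beta_0)+A(\beta_1)=A(\beta_1)-A(\beta_0)-\Delta\beta\,A'(\beta_0),
\]
which is the Bregman divergence of $A$. Expanding $A(\beta_1)$ to third order gives $\tfrac12(\Delta\beta)^2V+O(\Delta\beta^3)$.

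\emph{$\chi^2$.} I would use
\[
1+\chi^2(p^{\beta_1}\|p^{\beta_0})=\sum_i \frac{(p^{\beta_1}_i)^2}{p^{\beta_0}_i}=\exp\{A(2\beta_1-\beta_0)-2A(\beta_1)+A(\beta_0)\}.
\]
The exponent is a second difference of $A$ at the points $\beta_0,\beta_1,\beta_0+2\Delta\beta$. Expanding each term about $\beta_0$, the constant and linear terms cancel. The quadratic terms give $\tfrac12 V(\Delta\beta)^2(4-2)=V(\Delta\beta)^2$. Since $e^x-1=x+O(x^2)$, the claim follows.

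\emph{Chernoff.} For fixed $\lambda$,
\[
-\log\sum_i (p^{\beta_0}_i)^{1-\lambda}(p^{\beta_1}_i)^{\lambda}=(1-\lambda)A(\beta_0)+\lambda A(\beta_1)-A(\beta_0+\lambda\Delta\beta),
\]
which is the Jensen gap of $A$. A Taylor expansion gives $\tfrac12\lambda(1-\lambda)(\Delta\beta)^2V+R(\lambda)$ with $|R(\lambda)|\le C|\Delta\beta|^3$ uniformly in $\lambda\in[0,1]$, since the third derivative of $A$ is bounded on a compact neighbourhood. The main leading term is maximized at $\lambda=\tfrac12$ with value $\tfrac18(\Delta\beta)^2V$.

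I expect this step to be the main obstacle. The issue is exchanging the maximum over $\lambda$ with the expansion. I would handle it by the elementary bound $|\sup_\lambda(f+R)-\sup_\lambda f|\le\sup_\lambda|R|$, which turns uniform control of $R$ into control of the supremum. This uses the convention in Definition~\ref{def:I} that $I^\star=\max_\lambda$ of the Jensen gap.

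\emph{Companion identity.} I would compute the entropy directly: $H=A-\beta A'$, so $dH/d\beta=-\beta A''=-\beta\,\mathrm{Var}(z)$. With $\beta=1/T$ we have $d\beta/dT=-T^{-2}$, so $dH/dT=T^{-3}\mathrm{Var}_{p^T}(z)$, equivalently $\mathrm{Var}_{p^T}(z)=T^3\,dH/dT\ge0$.

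The interpretation as second-order separation then follows because all three quantities vanish to second order in $\Delta\beta$, with the Fisher information $V$ as their common curvature. This contrasts with a logit perturbation $z\mapsto z+\eta$ that is off the family, which can produce first-order discrepancies of size $\|\eta\|$ in total variation, and hence second-order but non-degenerate divergences.
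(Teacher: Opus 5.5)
Your proposal is correct and follows essentially the same route as the paper: both write the tempered laws as the exponential family $p^\beta_i=e^{\beta z_i-A(\beta)}$ and Taylor-expand $A$ about $\beta_0$, using the Bregman form for KL, the identity $1+\chi^2=e^{A(2\beta_1-\beta_0)-2A(\beta_1)+A(\beta_0)}$, the Jensen-gap form of the Chernoff exponent, and $H=A-\beta A'$ for the companion identity. The only real difference is in the Chernoff maximization: the paper locates $\lambda^\star=\tfrac12+O(\Delta\beta)$ and uses an envelope argument, whereas your uniform-remainder bound $|\sup_\lambda(f+R)-\sup_\lambda f|\le\sup_\lambda|R|$ gives the same $\tfrac18(\Delta\beta)^2V+O(\Delta\beta^3)$ without locating the optimizer, which is slightly cleaner.
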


\noindent Resolving a temperature retune therefore needs $m=\Theta((\Delta\beta)^{-2})$ draws, while a base-model change perturbs $z$ itself and can supply first-order evidence. The expansion is an interior statement; at the greedy boundary $V\to0$, and support collapse is the relevant observable (Cor.~\ref{cor:greedy}). The derivation, summarized next, is the basis for Eqs.~\eqref{eq:tempmag}--\eqref{eq:heatcap}.

\paragraph{Tempering family.}
Decoding at temperature $T$ draws token $i$ with $p^T_i=\mathrm{softmax}(z/T)_i\propto e^{z_i/T}$. In the inverse temperature $\beta=1/T$ this is the exponential family $p^\beta_i\propto e^{\beta z_i}$ with natural parameter $\beta$, sufficient statistic the logit (``energy'') $z$, and log-partition $A(\beta)=\log\sum_i e^{\beta z_i}$. Its cumulants are $A'(\beta)=\mathbb{E}_{p^\beta}[z]$ and $A''(\beta)=\mathrm{Var}_{p^\beta}(z)$, the Fisher information of the family. A temperature change moves $\beta$ \emph{along} this curve; a model change moves the vector $z$ itself, generically \emph{off} it (Fig.~\ref{fig:tempgeom}).

\begin{figure}[t]
  \centering
  \includegraphics[width=\columnwidth]{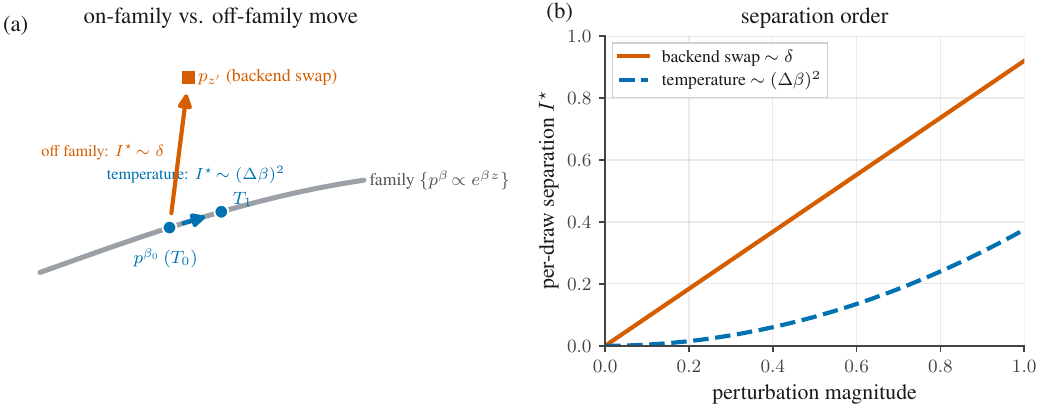}
  \caption{Why a temperature retune is second order while a backend substitution is first order. \textbf{(a)} With the logits $z$ fixed, the tempered laws $p^\beta\propto e^{\beta z}$ trace a one-parameter curve; a temperature change moves $\beta$ \emph{along} it (rank-one, on-family), whereas a model or backend change perturbs $z$ and leaves the curve. \textbf{(b)} An on-family step separates only at second order, $I^\star\approx\tfrac18(\Delta\beta)^2V$, flat at the origin, so resolving it costs $m=\Theta((\Delta\beta)^{-2})$ draws; an off-family perturbation can separate at first order in its magnitude. \IRIS{} therefore waives a temperature retune and fires on a separating backend substitution.}
  \label{fig:tempgeom}
\end{figure}

\paragraph{Second-order gap.}
For a small gap $\Delta\beta=\beta_1-\beta_0$ every standard divergence collapses to the same Fisher quadratic (directional asymmetry enters only at $O(\Delta\beta^3)$):
\[
  \mathrm{KL}\approx\tfrac12(\Delta\beta)^2 V,\qquad
  \chi^2\approx(\Delta\beta)^2 V,\qquad
  I^\star\approx\tfrac18(\Delta\beta)^2 V,
\]
with $V:=\mathrm{Var}_{p^{\beta_0}}(z)$ and the Chernoff information $I^\star$ the Bhattacharyya value at the optimal tilt $\lambda^\star=\tfrac12$. (The $\chi^2$ follows from the exp-family identity $\chi^2(p_{\beta_1}\Vert p_{\beta_0})=e^{A(2\beta_1-\beta_0)-2A(\beta_1)+A(\beta_0)}-1$, whose exponent has zero value, zero first derivative, and second derivative $2A''(\beta_0)$ at $\beta_1{=}\beta_0$.) Converting to the temperature gap, $\Delta\beta=1/T_1-1/T_0\approx-\Delta T/T_0^2$, gives
\begin{equation}
  I^\star(T_0,T_1)=\tfrac18(\Delta\beta)^2 V+O(\Delta\beta^3)
  =\frac{(\Delta T)^2}{8\,T_0}\,\frac{dH}{dT}\Big|_{T_0}+\cdots,
  \label{eq:tempmag}
\end{equation}
using the identity below. Temperature thus separates only at \emph{second} order in the gap, $I^\star=\Theta((\Delta\beta)^2)$, so an audit needs $m=\Theta((\Delta\beta)^{-2})$ independent draws to resolve a retune.

\paragraph{Heat capacity.}
The tempered logit fluctuation equals an entropy slope. With $H(T)$ the Shannon entropy of $p^T$, the relation $H=A(\beta)-\beta A'(\beta)$ gives $dH/d\beta=-\beta A''(\beta)$, and $\beta=1/T$ ($d\beta/dT=-1/T^2$) yields
\begin{equation}
  \mathrm{Var}_{p^T}(z)=T^3\,\frac{dH}{dT}\ \ge 0,
  \label{eq:heatcap}
\end{equation}
nonnegative because output entropy rises with temperature; it is the ``heat capacity'' of the next-token distribution. Crucially $V$ is the variance of $z$ \emph{under the tempered law} $p^T$, not the spread of the raw logit vector: the softmax de-weights the large-magnitude tail, so $V$ is bounded by the entropy slope and is $O(1)$ nat$^2$ near $T{\approx}1$ (for plausible next-token profiles $V\!\approx\!1$--$5$), \emph{not} the $O(10$--$100)$ a raw logit range might suggest.

\paragraph{Numeric.}
Between $T_0{=}1$ and $T_1{=}1.5$, $\Delta\beta=-\tfrac13$ and $I^\star\approx V/72\approx0.01$--$0.07$ nat per draw. A Chernoff-to-detectability heuristic ($d'\!\approx\!2\sqrt{I^\star}$, $\mathrm{AUROC}\!\approx\!\Phi(d'/2)$) maps this to single-response AUROC $\approx0.57$--$0.59$, matching the measured $\approx0.58$ between adjacent operating temperatures (App.~\ref{app:knobs}); the per-position rate does \emph{not} multiply by response length $L$, since AR positions are correlated and share the backbone $z$ (Prop.~\ref{prop:len}).

\paragraph{Boundary case.}
Eq.~\eqref{eq:tempmag} governs the \emph{interior} ($T\gtrsim0.5$, non-degenerate $p^T$). At the greedy boundary $T\!\to\!0$, $\beta\!\to\!\infty$ and $V\!\to\!0$ (point mass), so the per-draw rate vanishes (Cor.~\ref{cor:greedy})---yet $T{=}0$ vs.\ $T{>}0$ is easily distinguishable (AUROC $\approx0.99$) by a \emph{different}, single-draw observable: the support/entropy collapse (greedy repeats one string). The two mechanisms act at different sample sizes and do not conflict.

\paragraph{Sampler retunes.}
A dilution audit asks whether the \emph{backend model} changed, not whether a scalar sampler knob moved. Tempering is a rank-one, on-family move and is forced to be second order; a model change is instead an off-family perturbation of $z$. Thus \IRIS{} can waive a temperature retune while catching a separating substitute. The operational caveat is that pairs with too little visible-string separation require substantially larger budgets and fall outside the deployed margin-check regime (Thm.~\ref{thm:mix}(c), App.~\ref{app:tail}).

\section{Tail Phase Boundary}
\label{app:tail}
Theorem~\ref{thm:mix}(c) shows the regime is set by whether a separating tail exists. We make the interpolation between $1/\epsilon$ and $1/\epsilon^2$ explicit through a \emph{tail-separation function}.

\begin{definition}[Tail separation]
For reference distribution $P$ with score CDF $F_P$, the tail-separation function of a substitute distribution $R$ is $q_\alpha(R)=\mathbb P_{y\sim R}\!\big(s(y)>F_P^{-1}(1-\alpha)\big)$, i.e.\ the tell rate at the per-query false-positive level $\alpha$.
\end{definition}

\begin{proposition}[Tail budget]
\label{prop:phase}
Suppose $F_P$ is atomless and $q_\alpha\asymp c\,\alpha^{\kappa}$ as $\alpha\to0$ for a tail exponent $\kappa\in[0,1)$ and $c>0$, and the auditor controls type-I at level $\alpha$ via the union bound (per-query level $\alpha/m$). Then, with $\alpha,\delta,c,\kappa$ fixed and $\epsilon\to0$, the detection budget is
\[\begin{aligned}
  m^\star&=\Theta\!\Big(\epsilon^{-1/(1-\kappa)}\Big),\\
  m^\star&\asymp\Big(\tfrac{\ln(1/\delta)}{c\,\alpha^{\kappa}}\Big)^{1/(1-\kappa)}\epsilon^{-1/(1-\kappa)}.
\end{aligned}\]
Thus $\kappa=0$ gives the $1/\epsilon$ law, $\kappa=\tfrac12$ gives $\epsilon^{-2}$ scaling, and $\kappa\uparrow1$ is the no-tail wall.
\end{proposition}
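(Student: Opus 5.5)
We reduce the statement to the any-tell budget of Theorem~\ref{thm:mix}(a)--(b) and then solve a fixed-point equation in $m$. Under union-bound type-I control the auditor sets the per-query level to $\alpha/m$. Because $F_P$ is atomless, the threshold $\tau_m=F_P^{-1}(1-\alpha/m)$ has honest tell rate exactly $\alpha_1(\tau_m)=\alpha/m$. Part~(b) then gives false-positive probability at most $m\cdot\alpha/m=\alpha$ for every $m$. The substitute tell rate at this threshold is $q(\tau_m)=q_{\alpha/m}(R)$. Part~(a) shows that $m$ queries suffice for power $1-\delta$ whenever $m\,\epsilon\,q_{\alpha/m}\ge\ln(1/\delta)$. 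So $m^\star$ is the smallest $m$ satisfying
\[
  g(m):=m\,q_{\alpha/m}\;\ge\;\ln(1/\delta)/\epsilon .
\]

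Next we insert the tail law. As $\epsilon\to0$ the required $m$ must diverge. This follows because $q\le1$ forces $m\ge\ln(1/\delta)/\epsilon$, so $\alpha/m\to0$ and the asymptotic $q_{\alpha/m}\asymp c(\alpha/m)^{\kappa}$ applies. Then
\[
  g(m)\asymp c\,\alpha^{\kappa}m^{1-\kappa}.
\]
Since $\kappa<1$, $g$ eventually grows like the power $m^{1-\kappa}$. Solving $c\,\alpha^{\kappa}m^{1-\kappa}=\ln(1/\delta)/\epsilon$ gives the displayed constant $\big(\ln(1/\delta)/(c\alpha^{\kappa})\big)^{1/(1-\kappa)}$ times $\epsilon^{-1/(1-\kappa)}$. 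The special cases are immediate: $\kappa=0$ gives exponent $1$, $\kappa=\tfrac12$ gives exponent $2$, and the exponent blows up as $\kappa\uparrow1$.

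For the $\Theta$ statement we also need a matching lower bound, i.e.\ that this budget is necessary for the any-tell rule. By the exact expression in part~(a), the miss probability is $(1-\epsilon q)^m$ up to the honest $(1-\epsilon)\alpha_1$ contribution, which is $O(\alpha/m)$ and therefore negligible. This miss probability is at least $e^{-2\epsilon q m}$ once $\epsilon q\le\tfrac12$. Hence power $1-\delta$ forces $m\epsilon q_{\alpha/m}\ge\tfrac12\ln(1/\delta)$, and plugging in the tail law gives the same order.

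The main obstacle is the inversion step. The asymptotic $q_\alpha\asymp c\alpha^\kappa$ holds only up to constants, so $g$ need not be monotone and the first crossing could in principle oscillate. We handle this by sandwiching: choose $c_1\alpha^\kappa\le q_\alpha\le c_2\alpha^\kappa$ for all $\alpha$ below some $\alpha_0$. The two monotone power functions $c_i\alpha^{\kappa}m^{1-\kappa}$ then bracket $g$ for all $m\ge\alpha/\alpha_0$. The smallest crossing is therefore trapped between their explicit crossing points, both of order $\epsilon^{-1/(1-\kappa)}$. If $\asymp$ is read as a true ratio limit ($q_\alpha/(c\alpha^\kappa)\to1$), the same sandwich with $c_{1,2}=c(1\mp\eta)$ and $\eta\to0$ gives the sharp $\asymp$ constant.
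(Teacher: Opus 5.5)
\paragraph{Verdict: same route as the paper, but the necessity step is wrong as written.} The structure matches the paper: you force $\tau_m=F_P^{-1}(1-\alpha/m)$ so that $\alpha_1=\alpha/m$ and $q(\tau_m)=q_{\alpha/m}$, get sufficiency $m\epsilon q_{\alpha/m}\ge\ln(1/\delta)$ from the any-tell bound of Theorem~\ref{thm:mix}(a), get necessity from a reverse inequality on $-\ln(1-x)$, and invert $m^{1-\kappa}$. Your sandwich $c_1\alpha^\kappa\le q_\alpha\le c_2\alpha^\kappa$ is more careful than the paper's Step~2, which inverts the asymptotic as if $m\mapsto m\,q_{\alpha/m}$ were monotone.

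\paragraph{The gap: the honest term cannot be dropped in the lower bound.} The exact miss probability is $(1-r)^m$ with $r=(1-\epsilon)\alpha/m+\epsilon q_{\alpha/m}$. Since $(1-r)^m\le(1-\epsilon q_{\alpha/m})^m$, a lower bound on $(1-\epsilon q_{\alpha/m})^m$ says nothing about the actual miss. Dropping the honest term is conservative only in the sufficiency direction. Nor is the term negligible: it is $O(\alpha/m)$ per query, but over $m$ queries it contributes $(1-\epsilon)\alpha$ to the exponent, a factor of about $e^{-\alpha}$ in the miss probability. That is the same order as $\delta$. If you keep it, your own inequality gives
\[
(1-r)^m\ge e^{-2rm}=\exp\{-2(1-\epsilon)\alpha-2\epsilon m q_{\alpha/m}\},
\]
so power $1-\delta$ forces only $m\epsilon q_{\alpha/m}\ge\tfrac12\ln(1/\delta)-(1-\epsilon)\alpha$. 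This yields the $\Theta$ lower bound only under the extra hypothesis $\ln(1/\delta)>2\alpha$.

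\paragraph{Repair.} Use the paper's inequality $-\ln(1-x)\le x/(1-x)$ together with $r\to0$. This gives $m\epsilon q_{\alpha/m}\ge\ln(1/\delta)-\alpha-o(1)$, which is positive whenever $\delta<e^{-\alpha}$, for instance whenever $\alpha+\delta<1$. Your divergence argument has the same direction problem. That $q\le1$ forces $m\ge\ln(1/\delta)/\epsilon$ is true for the sufficient condition, not for necessity. For necessity, argue instead that at fixed $m$ the miss tends to $(1-\alpha/m)^m\ge1-\alpha>\delta$ as $\epsilon\to0$.

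\paragraph{Consequence for the sharp constant.} Your sandwich proves the constant $\big(\ln(1/\delta)/(c\alpha^\kappa)\big)^{1/(1-\kappa)}$ for the \emph{prescribed} budget. That is the first crossing of the sufficient condition, which is how you defined $m^\star$. Your necessity paragraph, and the statement itself, concern the true minimal budget attaining power $1-\delta$. For that budget the defining relation is $\epsilon m q_{\alpha/m}=\ln(1/\delta)-\alpha+o(1)$, so the constant is $\big((\ln(1/\delta)-\alpha)/(c\alpha^\kappa)\big)^{1/(1-\kappa)}$. The two notions of $m^\star$ agree in $\Theta$ and in the exponent $1/(1-\kappa)$, but not in the $\asymp$ constant. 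Your last sentence therefore does not give the sharp constant for the minimal budget. The paper's Step~3 makes the same ``honest term is dominated'' move, and its closing remark concedes exactly this $\ln(1/\delta)-\alpha$ correction. You should state explicitly which $m^\star$ your sharp constant refers to.
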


\noindent The $\kappa=\tfrac12$ exponent matches the aggregate mean-shift budget of Thm.~\ref{thm:mix}(c), although the tests and constants are different.

We measure $q_\alpha$ directly on the reference endpoint \texttt{qwen3:8b}@$T{=}1$ over a separation ladder (probe $c_{2,16}$), and re-measure on \emph{scale-invariant} (length/format-ablated, $179{\to}144$) features. The tail exponent $\kappa$ (the log--log slope of $q_\alpha$ in $\alpha$) orders the ladder monotonically and the ordering survives ablation (Table~\ref{tab:tailsep}): distinct/same-family substitutes have $\kappa<\tfrac12$ (near the $1/\epsilon$ law, exact only at $\kappa{=}0$), temperature twins sit near the $\kappa{=}\tfrac12$ crossover that Prop.~\ref{prop:phase} maps to $1/\epsilon^2$, and near-greedy twins have $\kappa\to1$ (no tail; $m^\star{=}\epsilon^{-1/(1-\kappa)}$ diverges, an effective wall). We report the $q$-controlled continuum and the exponent $\kappa$, not a fitted $\epsilon^{-2}$ slope.

\paragraph{When $\kappa{=}0$.}
Exact $1/\epsilon$ ($\kappa{=}0$, $q_\alpha\to q_0>0$ as $\alpha\to0$) requires the substitute to retain positive mass where the reference's score is in its extreme upper tail---in the population, a region $P$ makes vanishingly unlikely yet $R$ does not. For samplers over a shared vocabulary at finite temperature, where every visible string has positive probability under both, the population $q_\alpha$ generically vanishes as $\alpha\to0$ ($\kappa>0$): a strict $\kappa{=}0$ would need a genuine support difference or a singular component, not merely two distinct models. An empirical $q_0>0$ estimated from a finite reference can instead reflect empirical zero-probability regions a high-capacity classifier mistakes for true support separation. The discrete score (atoms from short strings and from the forest's \texttt{predict\_proba}) further makes the atomless hypothesis of Prop.~\ref{prop:phase} idealizing: above the top atom $q$ can drop to $0$, and with atom-randomization a bounded likelihood ratio gives $q_\alpha=O(\alpha)$, i.e.\ $\kappa{=}1$. We therefore read the measured $\kappa\!<\!\tfrac12$ as a \emph{finite-$\epsilon$, finite-sample} statement and test its stability: re-estimating $\kappa$ as the reference calibration set grows from $n{=}50$ to $n{=}3000$ (App.~\ref{app:fixedfpr}) shows whether $\kappa$ settles at a positive value or moves toward $1$. The $1/\epsilon$ regime is thus reported as the measured behavior over the audited $\epsilon$ range, never an asymptotic law.

\begin{table}[!ht]
\centering\footnotesize
\caption{Tail-separation ladder on $c_{2,16}$ (ref \texttt{qwen3:8b}@$T1$): plateau
$q_0{=}q_{\alpha=.01}$ and tail exponent $\kappa$, full $\to$ scale-invariant (length-ablated) features. The
$\kappa$-ordering survives ablation; exact $1/\epsilon$ holds only at $\kappa{=}0$ and $\kappa{=}\tfrac12$ is the
$1/\epsilon^2$ crossover (Prop.~\ref{prop:phase}).}
\label{tab:tailsep}
\setlength{\tabcolsep}{2pt}
\begin{tabular}{@{}lccc@{}}
\toprule
substitute & $q_0$ full$\to$abl & $\kappa$ full$\to$abl & regime \\
\midrule
\texttt{qwen3:0.6b} & $.58\!\to\!.33$ & $.14\!\to\!.31$ & near-$1/\epsilon$ ($\kappa{<}\tfrac12$) \\
\texttt{qwen3:32b}  & $.60\!\to\!.45$ & $.13\!\to\!.22$ & near-$1/\epsilon$ ($\kappa{<}\tfrac12$) \\
\texttt{qwen3:14b}  & $.87\!\to\!.56$ & $.03\!\to\!.18$ & near-$1/\epsilon$ ($\kappa{<}\tfrac12$) \\
\texttt{8b}@$T2$    & $.08\!\to\!.07$ & $.48\!\to\!.60$ & near $1/\epsilon^2$ ($\kappa{\approx}\tfrac12$) \\
\texttt{8b}@$T1.5$  & $.08\!\to\!.04$ & $.22\!\to\!.57$ & interpolating \\
\texttt{8b}@$T0.5$  & $.01\!\to\!.01$ & $.81\!\to\!.97$ & wall ($\kappa{\to}1$) \\
\texttt{8b}@$T0$    & $.00\!\to\!.00$ & $-$ & wall ($\kappa{\to}1$) \\
\bottomrule
\end{tabular}
\end{table}

The same continuum read off a second probe (Fig.~\ref{fig:qcontinuum}): the plateau tell rate $q$ slides smoothly from same-model temperature twins (at the floor $\alpha$, undetectable) through same-family sizes to a distinct cross-family model ($q{=}1$). When the null rate is pushed to $\alpha_1\!\to\!0$ (a separating tail), the budget is the any-tell $1/\epsilon$ law $m^\star=\ln(1/\delta)/(\epsilon q)$ with constant set by $q$. At a \emph{fixed} non-zero null $p_0$, however, this is a \emph{different} test: the any-tell formula ignores the honest tells arriving at rate $p_0$ and would not control type-I, so the deployed two-proportion test instead has local signal $\epsilon(q-p_0)$ and budget $m^\star=\Theta\!\big(p_0(1-p_0)/(\epsilon^2(q-p_0)^2)\big)$ (Prop.~\ref{prop:e2e} proof). The $\epsilon^{-2}$ wall is thus not the $q\!\to\!\alpha$ limit of the any-tell law (which would give the finite $\ln(1/\delta)/(\epsilon\alpha)$); it is the distinct fixed-null regime that happens to share the $\epsilon^{-2}$ \emph{exponent}, the two budgets being endpoints of the measured-$\kappa$ continuum $m^\star{=}\Theta(\epsilon^{-1/(1-\kappa)})$ (App.~\ref{app:tail}).

\begin{figure}[t]
  \centering
  \includegraphics[width=0.95\columnwidth]{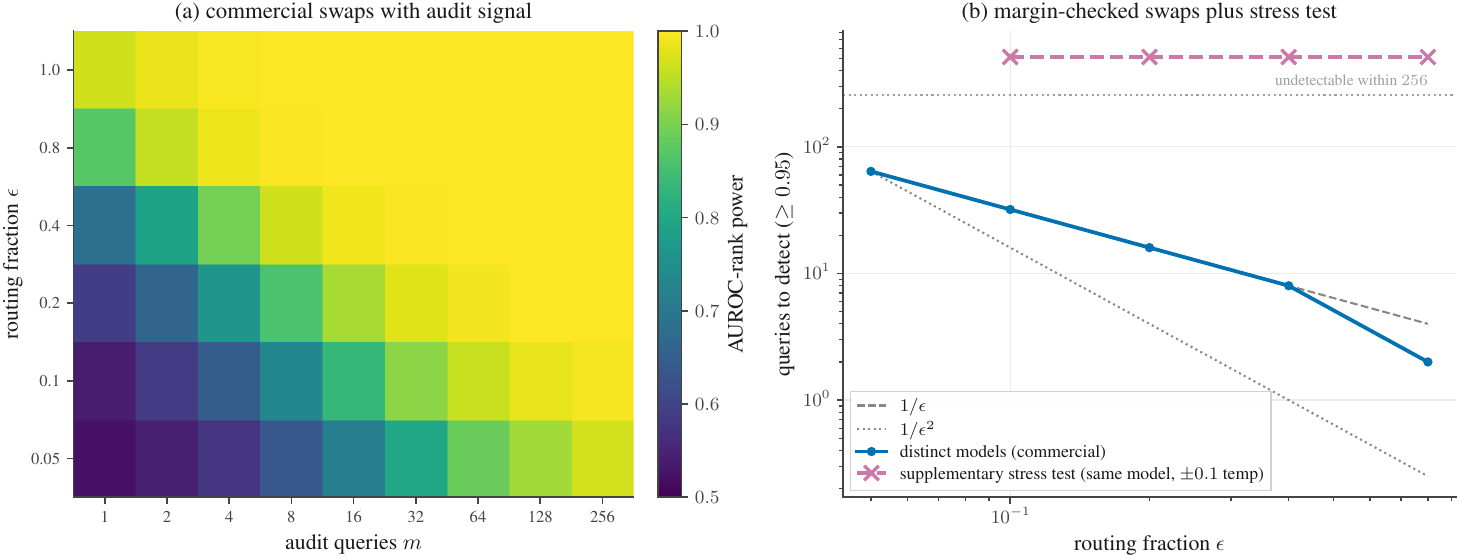}
  \caption{Dilution (referenced from Section~\ref{sec:exp}). (a)~AUROC-rank diagnostic power over $(\epsilon,m)$
  for commercial base-model dilutions with enough enrollment signal. (b)~Queries to detect vs.\ $\epsilon$: dilutions passing the margin check track the near-$1/\epsilon$ wall,
  while the \texttt{qwen3:8b} $T{=}1.0$-vs-$T{=}0.9$ same-model supplementary stress test ($c_{2,16}$) is undetectable within
  $256$ queries (per pair, not median). The pre-fixed-FPR validation is in App.~\ref{app:fixedfpr}.}
  \label{fig:mixrouting}
\end{figure}

\begin{figure}[t]
  \centering
  \includegraphics[width=\columnwidth]{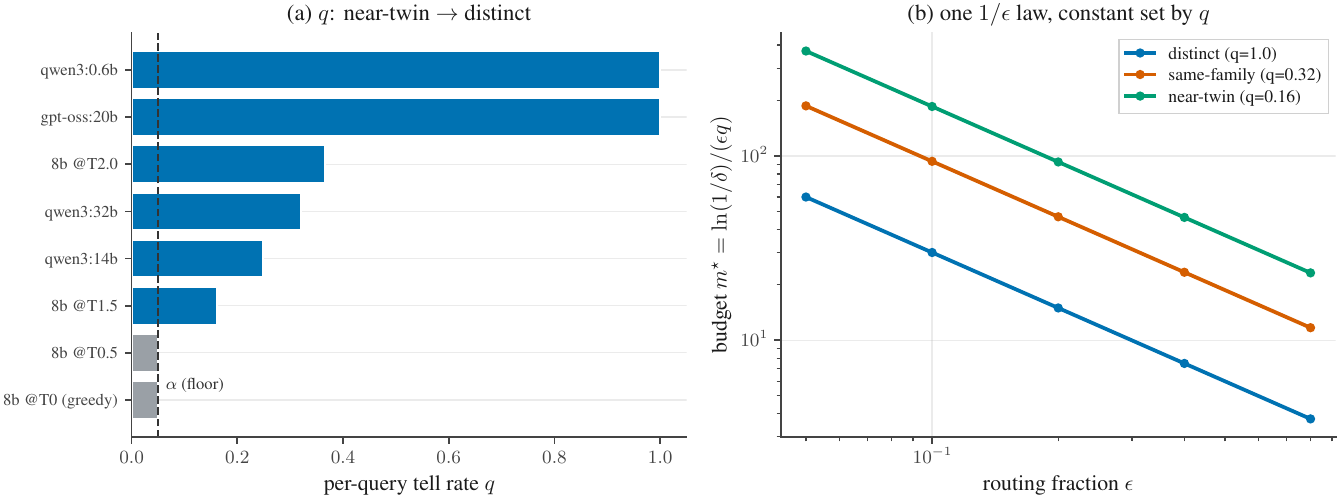}
  \caption{The tell-rate continuum on $c_{100,1}$ (ref \texttt{qwen3:8b}@$T{=}1$).
  Here $q$ is the per-query tell rate $\mathbb P_{y\sim R}[s(y){>}\tau]$ and $\alpha$ the per-query false-positive floor.
  (a)~Per-query tell rate $q$ per substitute, ascending; bars below the false-positive floor $\alpha$ (greedy
  and near-greedy twins) are undetectable. (b)~The unified budget $m^\star=\ln(1/\delta)/(\epsilon q)$ for three
  representatives---one $1/\epsilon$ law on log--log axes, vertically offset by $q$. Distinct models sit lowest
  (cheapest), low-separation variants highest.}
  \label{fig:qcontinuum}
\end{figure}

\subsection{Tail Stability}
\label{app:betastab}
Table~\ref{tab:tailsep} measures $\kappa$ on one reference (\texttt{qwen3:8b}@$T1$) and one probe. To test whether the $1/\epsilon$ regime is a property of that single calibration point or holds across the library, we recompute the tail exponent $\kappa$ (the log--log slope of $q_\alpha$ in $\alpha$, as in Table~\ref{tab:tailsep}) for \emph{every} ordered (reference, substitute) pair in the $17$- and $45$-endpoint gateways, on content-only features, averaging over four out-of-fold classifier/split seeds. Table~\ref{tab:betastab} reports the distribution over pairs passing the enrollment margin check (plateau $q_0\!\ge\!0.15$). The picture is consistent and quantifies the headline's scope: the $1/\epsilon$ regime ($\kappa\!\approx\!0$) is the \emph{typical} behavior once the margin check passes---the implied budget exponent $1/(1-\kappa)$ has median $1.02$--$1.10$, and $62$--$85\%$ of such pairs are statistically consistent with $\kappa{=}0$. A right tail remains ($2$--$9\%$ with $\kappa\!\ge\!0.4$), so \IRIS{} budgets against the \emph{per-pair} $\kappa$ read off calibration rather than assuming a universal $\kappa{=}0$.

\begin{table*}[t]
  \centering\footnotesize\setlength{\tabcolsep}{3pt}
  \caption{Tail-exponent $\kappa$ distribution over ordered (reference, substitute) pairs passing the margin check
  of the $17$- and $45$-endpoint gateways (content-only features, four seeds). ``cons.\ $\kappa{=}0$'' is the
  fraction of margin-qualified pairs statistically consistent with $\kappa{=}0$ (the exact $1/\epsilon$ law);
  ``$1/(1{-}\kappa)$'' is the implied budget exponent; ``twin'' the fraction with $\kappa\!\ge\!0.4$.}
  \label{tab:betastab}
  \begin{tabular}{@{}llccccc@{}}
    \toprule
    pool & probe & sep. & $\kappa$ med.\ [25,75] & cons.\ $\kappa{=}0$ & $1/(1{-}\kappa)$ med. & twin \\
    \midrule
    $17$ & $c_{10,8}$    & $248/272$  & $.02\,[.0,.17]$  & $.77$ & $1.02$ & $.06$ \\
    $17$ & $c_{2,16}$     & $164/272$  & $.08\,[.01,.18]$ & $.74$ & $1.09$ & $.02$ \\
    $17$ & $c_{100,1}$ & $176/272$  & $.02\,[.0,.08]$  & $.85$ & $1.02$ & $.02$ \\
    $45$ & $c_{10,8}$    & $1544/1980$& $.09\,[.01,.26]$ & $.62$ & $1.10$ & $.09$ \\
    \bottomrule
  \end{tabular}
\end{table*}

\section{Artifact Manifest and Infrastructure}
\label{app:artifact}

We release, as a single reproducibility bundle: (i) all $147{,}070$ frozen response records as JSON Lines, covering the controlled Ollama pools, the $17$- and $45$-endpoint gateway pools, and the reported robustness and negative-control analyses; (ii) the exact probe prompts and deterministic parser; (iii) all $179$ visible-string feature definitions with the content-only and length-ablated subsets; (iv) the deterministic stratified-split procedures and fixed seeds used by the analysis code; and (v) the code for response collection, feature extraction, the \IRIS{} estimate-then-budget loop, dilution simulation, at-scale and live dilution auditing, and figure generation.\ifdefined\IRISArxiv\ The bundle is available at \url{https://github.com/Photen/IRIS-audit}.\fi

Local models ran under Ollama~0.12.5 on Ubuntu~22.04.4 LTS with an Intel Xeon Platinum~8358 CPU, $2\,$TiB system memory, and six NVIDIA~A40 GPUs (46--49\,GiB reported per GPU; driver~550.54.15); the $17$-API set was queried through a commercial OpenAI-compatible gateway. Analysis used scikit-learn random forests ($300$ trees, \texttt{max\_features} $=\sqrt{\cdot}$, balanced classes) under a fixed integer seed ($20260617$); reported curves average $40$ stratified splits with $95\%$ bootstrap intervals. Classifier posteriors $\hat P(\cdot\mid y)$ are random-forest \texttt{predict\_proba} outputs clipped to $[10^{-12},1]$ and used as a score, not as calibrated probabilities; every operating threshold and budget is set empirically on held-out calibration data, so posterior calibration is not assumed. Software: Python~3.12, scikit-learn~1.5.1, NumPy~1.26.4, and SciPy~1.13.1. No weights, logits, or token ranks were accessed.

\section{Proofs}
\label{app:proofs}
This section collects a rigorous, self-contained proof of every formal claim in the paper---the propositions, the theorem, and the corollaries of Section~\ref{sec:theory}, together with the two analytic results stated in the appendix (the temperature second-order law, App.~\ref{app:tempmag}, and the multi-diluent identifiability claim, App.~\ref{app:multi}). Each subsection restates what is proved, lists the standing assumptions, and gives the argument.

\subsection{Proof of Prop.~\ref{prop:exp} (mean score)}
\paragraph{Restatement.} Fix a probe $c$ and let $P,R$ be the reference and substitute response laws. The deployed claim is: if the bounded score $s(y)=-\log\hat{\mathbb P}(P\mid y)$ has $\mathbb{E}_{P}[s]\neq\mathbb{E}_{R}[s]$, then $S_m=\frac1m\sum_i s(y_i)$ has a fixed-threshold error $e^{-m\,I^{\mathrm{mean}}(1+o(1))}$ with $I^{\mathrm{mean}}>0$ and $I^{\mathrm{mean}}\le I^{\mathrm{sc}}\le I^{\star}$. The comparison facts used around the proposition are: \emph{(1)} the full-string Bayes test has error $e^{-I^{\star}m(1+o(1))}$ and $I^\star>0\iff P\neq R$; \emph{(2)} $K$-endpoint separation is guaranteed by $m\gtrsim\log(K/\delta)/I^{\star}_{\min}$.

\medskip
Throughout, queries are independent, so under each fixed hypothesis $y_1,\dots,y_m$ are i.i.d. Write $\psi(\lambda)=\log\sum_y P(y)^{1-\lambda}R(y)^{\lambda}$ and $\lambda^\star=\arg\min_{[0,1]}\psi$, with $I^{\star}=-\psi(\lambda^\star)<\infty$.

\paragraph{Part (1).}
\emph{Achievability.} For any $\lambda\in[0,1]$ the likelihood-ratio (Bayes) test between $P^{\otimes m}$ and $R^{\otimes m}$ with prior $(\pi_P,\pi_R)$ obeys
\[\begin{aligned}
  P_e(m)&\le\sum_{y_{1:m}}\bigl(\pi_P P^{\otimes m}\bigr)^{1-\lambda}\bigl(\pi_R R^{\otimes m}\bigr)^{\lambda}\\
  &\le\Bigl(\sum_y P(y)^{1-\lambda}R(y)^{\lambda}\Bigr)^{m}=e^{m\psi(\lambda)},
\end{aligned}\]
using $\pi_P^{1-\lambda}\pi_R^{\lambda}\le1$ and tensorisation over the $m$ independent coordinates. At $\lambda=\lambda^\star$,
\begin{equation}\label{eq:exp-achiev}
  P_e(m)\le e^{-mI^{\star}}\qquad(\forall m),
\end{equation}
a genuine finite-$m$ bound (the Bhattacharyya statement) and the $\le$ half of the asymptotic law.

\emph{Converse.} Let $\ell(y)=\log\frac{R(y)}{P(y)}$ be the log-likelihood ratio. Under the regularity hypothesis that the cumulant generating function $\Lambda_P(\lambda)=\log\mathbb{E}_{P}[e^{\lambda\ell}]=\psi(\lambda)$ is finite on an open interval containing $[0,1]$ (automatic when the per-probe alphabet is finite, $|\mathrm{supp}|=n<\infty$, the regime of all single-draw probes) and $\ell$ is non-degenerate, the Chernoff-information theorem (method of types in the finite-alphabet case, \citet[Thm.~11.9.1]{cover2006elements}) gives $P_e(m)\ge e^{-mI^{\star}(1+o(1))}$. With \eqref{eq:exp-achiev}, $P_e(m)=e^{-I^{\star}m(1+o(1))}$.

\emph{Positivity.} By H\"older, $\sum_yP^{1-\lambda}R^{\lambda}\le(\sum_yP)^{1-\lambda}(\sum_yR)^{\lambda}=1$, so $\psi\le0$ and $I^{\star}\ge0$, with $\psi\equiv0$ when $P=R$. If $P\neq R$ then $\ell$ is non-degenerate under $P$, so $\psi$ (its cumulant generating function, CGF) is strictly convex; together with $\psi(0)=\psi(1)=0$ this forces $\psi(\lambda)<0$ on $(0,1)$, hence $I^{\star}>0$. The argument needs no common support: on atoms with $P(y)=0<R(y)$ the summand vanishes for $\lambda\in(0,1)$, only deepening the strict inequality. Thus $I^{\star}>0\iff P\neq R$. \hfill$\square$

\paragraph{Part (2).}
Let $\{M_i\}_{i=1}^K$ have laws $\{P_i\}$ and prior $\pi_i\ge\pi_{\min}>0$. The MAP classifier errs only if some wrong $j$ has joint posterior at least that of the true $i$; hence
\[\begin{aligned}
  \mathbb P(\text{err})&\le\sum_{i}\pi_i\sum_{j\neq i}\mathbb P_{P_i^{\otimes m}}\!\Bigl(\pi_jP_j^{\otimes m}\ge\pi_iP_i^{\otimes m}\Bigr)\\
  &\le\sum_{i\neq j}\Bigl(\sum_y P_i^{1-\lambda^\star_{ij}}P_j^{\lambda^\star_{ij}}\Bigr)^{m},
\end{aligned}\]
where each inner term is the Bhattacharyya/Chernoff bound \eqref{eq:exp-achiev} for the ordered pair $(i,j)$ (the prior ratio $\pi_j/\pi_i$ is absorbed by the same $\pi^{1-\lambda}\pi^{\lambda}\le1$ step). Bounding each by $e^{-mI^{\star}_{\min}}$ with $I^{\star}_{\min}=\min_{i\neq j}I^{\star}(c;M_i,M_j)>0$ (all endpoints distinct) gives
\[
  \mathbb P(\text{err})\le K(K-1)\,e^{-mI^{\star}_{\min}}\le K^2e^{-mI^{\star}_{\min}}.
\]
Requiring the right side $\le\delta$ holds once $m\ge(2\log K+\log(1/\delta))/I^{\star}_{\min}$, i.e. $m\gtrsim\log(K/\delta)/I^{\star}_{\min}$. This is a union bound layered on top of the binary exponent of \citet[Thm.~11.9.1]{cover2006elements}, not that theorem alone. \hfill$\square$

\paragraph{Part (3).}
\emph{Regularity.} With posteriors clipped to $[\epsilon_0,1-\epsilon_0]$, $s\in[-\log(1-\epsilon_0),\log(1/\epsilon_0)]$ is bounded, so the CGFs $\Lambda_P(t)=\log\mathbb{E}_{P}[e^{ts}]$ and $\Lambda_R(t)=\log\mathbb{E}_{R}[e^{ts}]$ are finite for all $t\in\mathbb{R}$, in particular near $0$, which is the hypothesis of Cram\'er's theorem. (Without clipping, $\mathbb{E}_{P}[e^{ts}]=\mathbb{E}_{P}[\hat{\mathbb P}(P\mid y)^{-t}]$ may diverge for every $t>0$ and the law can fail, so boundedness is the regularity condition we assume.)

\emph{Exponential law of $S_m$.} Let $\mu_P=\mathbb{E}_{P}[s]<\mu_R=\mathbb{E}_{R}[s]$ (relabel otherwise). By Cram\'er, for fixed $\tau\in(\mu_P,\mu_R)$,
\[\begin{aligned}
  \tfrac1m\log\mathbb P_{P}(S_m\ge\tau)&\to-I_P^{*}(\tau),\\
  \tfrac1m\log\mathbb P_{R}(S_m<\tau)&\to-I_R^{*}(\tau),
\end{aligned}\]
with $I_P^{*}(x)=\sup_t\{tx-\Lambda_P(t)\}$, $I_R^{*}(x)=\sup_t\{tx-\Lambda_R(t)\}$ the Legendre--Fenchel transforms. Both are convex, non-negative, vanish only at the respective means; on $(\mu_P,\mu_R)$, $I_P^{*}$ increases and $I_R^{*}$ decreases, crossing at a unique $\tau^\dagger$. The Bayes (or balanced) error of the mean-threshold rule decays at
\begin{equation}\label{eq:Imean}\begin{aligned}
  I^{\mathrm{mean}}&=\max_{\tau\in(\mu_P,\mu_R)}\min\{I_P^{*}(\tau),I_R^{*}(\tau)\}\\
  &=I_P^{*}(\tau^\dagger)=I_R^{*}(\tau^\dagger)>0,
\end{aligned}\end{equation}
the maximum attained at the unique crossing $\tau^\dagger$ (where moving $\tau$ either way lowers the smaller exponent), establishing the exponential verification law for $S_m$ at rate $I^{\mathrm{mean}}$.

\emph{Optimality among sum-based tests (corrected identification).} The empirical mean $S_m$ is the one-dimensional sufficient statistic of the i.i.d.\ sample for the exponential family $\{p_t(y)\propto P(y)e^{ts(y)}\}_{t\in\mathbb{R}}$ generated by tilting in $s$. Consequently every test that depends on $(y_1,\dots,y_m)$ only through the sum $\sum_i s(y_i)$ is a (possibly randomized) function of $S_m$, and since the threshold rule is the likelihood-ratio test among functions of the $1$-D statistic $S_m$ (monotone likelihood-ratio (LR) under the tilt family), its best Bayes exponent is exactly \eqref{eq:Imean}, attained by thresholding $S_m$ at $\tau^\dagger$. (We do \emph{not} claim a common tilt equating the two distinct score-CGF derivatives; the exponent is the crossing value \eqref{eq:Imean}, not a Chernoff value of the two-sided tilt.)

\emph{Comparison with $I^{\mathrm{sc}}$.} Let $\mu_P,\mu_R$ now denote the score push-forward laws of Def.~\ref{def:Isc}, with $I^{\mathrm{sc}}=-\min_{0\le\lambda\le1}\log\int\mu_P^{1-\lambda}\mu_R^{\lambda}$ the Chernoff information of the \emph{full} scalar score, attained by the score-level likelihood-ratio test (LRT) on $\log(\mu_R/\mu_P)$. The mean-threshold test is one particular non-LRT function of the scores, so by optimality of the LRT (Neyman--Pearson at the score level),
\[
  I^{\mathrm{mean}}\le I^{\mathrm{sc}}\le I^{\star},
\]
the second inequality being Prop.~\ref{prop:Isc} (data processing under $y\mapsto s$). Equality $I^{\mathrm{mean}}=I^{\mathrm{sc}}$ holds iff thresholding $S_m$ coincides with the score-LRT, i.e. iff $s(y)=a+b\log\frac{\mu_R(s(y))}{\mu_P(s(y))}$ for constants $a,b$ (an affine reparametrisation of the score-level log-LR).

\emph{Strictness.} The inequality is generically strict. With scores in $\{-1,0,1\}$ and
\[
  \mu_P=(0.45,0.10,0.45),\qquad\mu_R=(0.05,0.90,0.05),
\]
one has $\mathbb{E}_P[s]=\mathbb{E}_R[s]=0$, so $S_m\to0$ a.s.\ under both hypotheses and $I^{\mathrm{mean}}=0$, while $I^{\mathrm{sc}}=-\min_\lambda\log\sum_s\mu_P^{1-\lambda}\mu_R^{\lambda}=0.511$ (at $\lambda^\star=\tfrac12$). Hence $0=I^{\mathrm{mean}}<I^{\mathrm{sc}}=0.511$. Even restricting to score laws with \emph{unequal} means, $I^{\mathrm{mean}}/I^{\mathrm{sc}}$ can be as small as $\approx10^{-3}$, so the gap is not a measure-zero artefact; and on a monotone-LR example with $s$ affine in $\log(\mu_R/\mu_P)$ the two coincide ($I^{\mathrm{mean}}\approx I^{\mathrm{sc}}\approx0.236$, up to grid resolution), confirming the equality condition. Hence the mean-statistic exponent is $I^{\mathrm{mean}}\le I^{\mathrm{sc}}$, strict in general and equal only in the affine case. \hfill$\square$

\paragraph{Remark.} In summary: (i) a positive oracle exponent $I^{\star}$ exists iff $P\neq R$; (ii) $K$-way separation costs $m\gtrsim\log(K/\delta)/I^{\star}_{\min}$; and (iii) \IRIS{}'s \emph{mean} statistic $S_m$ decays \emph{exponentially} in $m$ at the bounded rate $I^{\mathrm{mean}}\le I^{\mathrm{sc}}\le I^{\star}$. The separately-fit \emph{ranking} rate $I^{\mathrm{auc}}$ of Def.~\ref{def:I} (slope of $\log(1-\mathrm{AUROC})$) is a two-sample quantity on $2m$ observations and is \emph{not} bounded by $I^{\mathrm{sc}}$ in general---it can be as large as $2\,I^{\mathrm{sc}}$ (e.g.\ $\mathrm{Ber}(.25)/\mathrm{Ber}(.75)$ scores, where $I^{\mathrm{auc}}=2D(0.5\|0.25)=0.288$ while $I^{\mathrm{sc}}=0.144$)---so it is the empirically-validated budgeting rate, not a proven lower bound.

\subsection{Proof of Prop.~\ref{prop:len} (length)}
\noindent\emph{Restatement.} Fix a probe $c$ and let $I_{\mathrm{seq}}(L):=C\big(P(\cdot\mid c),R(\cdot\mid c)\big)=-\min_{0\le\lambda\le1}\log\sum_{y_{1:L}}P(y_{1:L}\mid c)^{1-\lambda}R(y_{1:L}\mid c)^{\lambda}$ be the Chernoff information carried by one \emph{full} length-$L$ response under \eqref{eq:ar}. Then (i) $I_{\mathrm{seq}}$ is non-decreasing in $L$; (ii) it admits \emph{no} per-position conditional-sum (``chain'') decomposition and $I_{\mathrm{seq}}(L)\neq L\,I_{\mathrm{seq}}(1)$ in general (with the gap possibly of either sign); (iii) across $m$ independent queries the exponent is exactly additive, $C(P^{\otimes m},R^{\otimes m})=m\,I_{\mathrm{seq}}(L)$. The empirical non-monotonicity in $L$ is a property of the operational exponents $I^{\mathrm{sc}},I^{\mathrm{auc}}$ and the yield $q(L)/L$, not of $I_{\mathrm{seq}}(L)$.

\paragraph{Assumptions.} A fixed probe $c$; all per-position alphabets finite (or countable with $I_{\mathrm{seq}}(L)<\infty$), so the Hellinger integral and the min over $\lambda\in[0,1]$ are well defined; cross-query independence (independent API calls). No boundedness of Chernoff by $\log n$ is used. Write $\psi_{P,Q}(\lambda)=\log\sum_y P(y)^{1-\lambda}Q(y)^{\lambda}$ and $f_\lambda(P,Q)=\sum_y P(y)^{1-\lambda}Q(y)^{\lambda}$, so $C(P,Q)=-\min_{0\le\lambda\le1}\psi_{P,Q}(\lambda)$.

\begin{lemma}[Chernoff data processing]\label{lem:dpi}
For any Markov kernel $W(\cdot\mid y)$ with pushforwards $\tilde P,\tilde Q$, $C(\tilde P,\tilde Q)\le C(P,Q)$.
\end{lemma}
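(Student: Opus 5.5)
The plan is to prove Lemma~\ref{lem:dpi} by showing the stronger pointwise fact that the Hellinger-type integral can only grow under a kernel:
\[
  f_\lambda(\tilde P,\tilde Q)\ge f_\lambda(P,Q)\qquad\text{for every fixed }\lambda\in[0,1].
\]
Equivalently, $\psi_{\tilde P,\tilde Q}(\lambda)\ge\psi_{P,Q}(\lambda)$. Taking the minimum over $\lambda\in[0,1]$ on both sides preserves the inequality, since the left minimum is at least the right minimum. Negating then gives $C(\tilde P,\tilde Q)\le C(P,Q)$. So the whole argument reduces to a fixed-$\lambda$ inequality, and no interchange of min and kernel is needed.

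For the fixed-$\lambda$ step I will use the reverse direction of H\"older's inequality on each output atom. Write $\tilde P(z)=\sum_y W(z\mid y)P(y)$ and similarly $\tilde Q(z)$. On $(0,1)$, the map $(a,b)\mapsto a^{1-\lambda}b^{\lambda}$ is concave and positively homogeneous on $\mathbb R_{\ge0}^2$, hence superadditive. This gives
\[
  \Bigl(\sum_y W(z\mid y)P(y)\Bigr)^{1-\lambda}\Bigl(\sum_y W(z\mid y)Q(y)\Bigr)^{\lambda}\ \ge\ \sum_y W(z\mid y)\,P(y)^{1-\lambda}Q(y)^{\lambda}.
\]
Alternatively, the same bound follows from H\"older with exponents $1/(1-\lambda)$ and $1/\lambda$ applied to the vectors $(W(z\mid y)P(y))^{1-\lambda}$ and $(W(z\mid y)Q(y))^{\lambda}$. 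Summing over $z$ and using $\sum_z W(z\mid y)=1$ yields $f_\lambda(\tilde P,\tilde Q)\ge f_\lambda(P,Q)$.

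The endpoints are immediate. At $\lambda=0$ and $\lambda=1$ both sides equal $1$, with the convention $0^0=1$ restricted to the relevant supports.

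The only delicate point is the countable-alphabet case, and I expect it to be routine rather than a real obstacle. There the sums over $y$ and $z$ must be exchanged (Tonelli, since all terms are nonnegative). The superadditivity inequality also has to be extended from finite to countable sums, which follows by a monotone limit of the finite partial sums. Atoms where $P(y)=0$ or $Q(y)=0$ contribute zero on the right for $\lambda\in(0,1)$, so they only help the inequality. Finiteness of $C(P,Q)$ is not needed, because the inequality holds in $[0,\infty]$.

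Deterministic maps such as $y\mapsto s(y)$, or truncation $y_{1:L}\mapsto y_{1:L'}$, are the special case of a $0/1$ kernel. That is how the lemma will feed Prop.~\ref{prop:Isc} and part (i) of Prop.~\ref{prop:len}.
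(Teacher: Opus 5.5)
Your proposal is correct and is essentially the paper's proof. For fixed $\lambda\in(0,1)$ you apply H\"older with exponents $\tfrac1{1-\lambda},\tfrac1\lambda$ on each output atom $z$ to $W(z\mid y)P(y)$ and $W(z\mid y)Q(y)$, sum over $z$ using $\sum_z W(z\mid y)=1$ to get $f_\lambda(\tilde P,\tilde Q)\ge f_\lambda(P,Q)$, note the endpoints are trivial, then take the minimum over $\lambda$ and negate; your alternative concavity/superadditivity derivation and the Tonelli remarks for countable alphabets are harmless extras that the paper leaves implicit.
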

\begin{proof}
Fix $\lambda\in(0,1)$. For each output $z$, H\"older's inequality with exponents $\tfrac1{1-\lambda},\tfrac1\lambda$ applied to $a_y=P(y)W(z\mid y)$, $b_y=Q(y)W(z\mid y)$ gives
\[\begin{aligned}\tilde P(z)^{1-\lambda}\tilde Q(z)^{\lambda}&=\Big(\sum_y a_y\Big)^{1-\lambda}\Big(\sum_y b_y\Big)^{\lambda}\ge\sum_y a_y^{1-\lambda}b_y^{\lambda}\\&=\sum_y P(y)^{1-\lambda}Q(y)^{\lambda}W(z\mid y),\end{aligned}\]
since $W(z\mid y)^{1-\lambda}W(z\mid y)^{\lambda}=W(z\mid y)$. Summing over $z$ and using $\sum_z W(z\mid y)=1$ yields $f_\lambda(\tilde P,\tilde Q)\ge f_\lambda(P,Q)$, hence $\psi_{\tilde P,\tilde Q}(\lambda)\ge\psi_{P,Q}(\lambda)$. The endpoints $\lambda\in\{0,1\}$ give $\psi=0$ on both sides. Taking $\min_\lambda$ and negating, $C(\tilde P,\tilde Q)\le C(P,Q)$.
\end{proof}

\paragraph{(i) Monotonicity.} Truncation $T:y_{1:L}\mapsto y_{1:L-1}$ is a deterministic kernel whose pushforwards of $P,R$ are the length-$(L-1)$ marginals (again AR responses under \eqref{eq:ar}). Lemma~\ref{lem:dpi} gives $I_{\mathrm{seq}}(L-1)\le I_{\mathrm{seq}}(L)$. Also $C\ge0$: at $\lambda=0$, $\psi=0$, so $\min_\lambda\psi\le0$. Thus $I_{\mathrm{seq}}$ is non-negative and non-decreasing; the oracle exponent of the full response cannot fall as $L$ grows.

\paragraph{(ii) No chain rule.} The factorization \eqref{eq:ar} gives the KL chain rule $\mathrm{KL}(P\|R)=\sum_{t=1}^L\mathbb E_{y_{<t}\sim P}[\mathrm{KL}(P(\cdot\mid y_{<t})\|R(\cdot\mid y_{<t}))]$, a property of KL alone. Since $\psi_{P,R}$ is convex in $\lambda$ with $\psi(0)=\psi(1)=0$ and (when the relevant KL terms are finite) $\psi'(0^+)=-\mathrm{KL}(P\|R)$, $\psi'(1^-)=\mathrm{KL}(R\|P)$, one gets the regularity-light sandwich
\[0\le C(P,R)\le\min\{\mathrm{KL}(P\|R),\,\mathrm{KL}(R\|P)\}\tag{$\star$}\]
(both bounds vacuously valid if a KL is $+\infty$). Chernoff is thus only \emph{bounded} by, never equal to, the conditional sum. Concretely, on $\{0,1\}^2$ let $P=\tfrac12\delta_{(0,0)}+\tfrac12\delta_{(1,1)}$ and $R$ be the product law with marginal $(0.9,0.1)$. Then $I_{\mathrm{seq}}(2)=C(P,R)\approx0.347$ nats, the per-position value $I_{\mathrm{seq}}(1)=C((.5,.5),(.9,.1))\approx0.112$, the naive ``Chernoff chain'' (first-symbol Chernoff plus the $P$-average of conditional-symbol Chernoffs) $\approx1.316\neq0.347$, while the genuine KL chain matches to machine precision ($1.7148=1.7148$). So Chernoff information does \emph{not} decompose into a sum of per-position terms. The example also gives $I_{\mathrm{seq}}(2)\approx0.347>2\,I_{\mathrm{seq}}(1)\approx0.225$ (super-additivity under positive within-response association), whereas in the i.i.d.\ product case $P=p^{\otimes L},R=r^{\otimes L}$ one has $f_\lambda(P,R)=f_\lambda(p,r)^L$, a shared minimizer $\lambda^\star$, and $C=L\,C(p,r)$. So the safe statement is non-equality, with the gap of either sign and no sub-additivity claim.

\paragraph{(iii) Accumulation in $m$.} The $m$ queries are independent, so the transcript law is $P^{\otimes m}$ vs $R^{\otimes m}$ over i.i.d.\ length-$L$ responses. The product identity (now across the query index, where independence genuinely holds) gives $f_\lambda(P^{\otimes m},R^{\otimes m})=f_\lambda(P,R)^m$, hence $C(P^{\otimes m},R^{\otimes m})=m\,I_{\mathrm{seq}}(L)$, and by Prop.~\ref{prop:exp} the Bayes error decays as $e^{-m\,I_{\mathrm{seq}}(L)(1+o(1))}$. The exponent grows linearly and exactly in $m$, while within one response the length contribution is the bounded, non-multiplicative $I_{\mathrm{seq}}(L)\le\min\{\mathrm{KL}(P\|R),\mathrm{KL}(R\|P)\}$ of (i)--(ii). A length-$L$ response is therefore not $L$ independent draws.

\paragraph{Observed non-monotonicity.} By (i), $I_{\mathrm{seq}}(L)$ cannot decrease in $L$, so the measured non-monotonicity is not about it. \IRIS{} uses a \emph{fixed} score $s(y)=-\log\hat{\mathbb P}(P\mid y)$ rather than the length-$L$ LRT; its score/plug-in exponents $I^{\mathrm{sc}},I^{\mathrm{auc}}$ (Def.~\ref{def:I}) carry no data-processing monotonicity in $L$ and may rise then fall as the classifier saturates, and the per-token yield $q(L)/L$ of Cor.~\ref{cor:len} is non-monotone by construction (concave $q$ gives a decreasing yield; an S-shaped $q$ from a global statistic gives an interior peak). These operational quantities, with prompt-regime shifts of the served distribution, are where the non-monotonicity lives. Crucially, the $L$-comparison in Section~\ref{sec:exp} varies the prompt's \emph{requested} length, so successive conditions are \emph{different contexts} $c$---not the truncation of one fixed length-$L$ response that part~(i) governs---and the first $16$ symbols of a ``generate $32$'' response need not share the marginal of a ``generate $16$'' response. The non-monotonicity is therefore an operational property \emph{across prompt-length conditions}, consistent with but not predicted by the same-context monotonicity of part~(i). $\qed$

\subsection{Proof of Thm.~\ref{thm:mix} (dilution)}\label{app:proof}
Throughout, the $m$ queries are i.i.d.: query $i$ is routed to the substitute distribution $R$ with probability $\epsilon$ independently across $i$, and otherwise to the reference distribution $P$, so $y_i\stackrel{\mathrm{iid}}\sim Q_\epsilon=(1-\epsilon)P+\epsilon R$. The deployed score $s(y)$ and the level $\tau$ are fixed before the audit (no data-dependent threshold). Write $\alpha_1=\alpha_1(\tau)=\mathbb P_{P}(s>\tau)$ and $q=q(\tau)=\mathbb P_{R}(s>\tau)$, and define the per-query \emph{crossing event} $E_i=\{s(y_i)>\tau\}$ with per-query crossing probability under $Q_\epsilon$
\[
  p:=\mathbb P_{Q_\epsilon}(s>\tau)=(1-\epsilon)\alpha_1+\epsilon q .
\]
The events $E_1,\dots,E_m$ are i.i.d.\ $\mathrm{Bernoulli}(p)$.

\paragraph{(a) First crossing.}
The any-tell rule flags iff at least one $E_i$ occurs, so the miss probability over $m$ queries is exactly $\mathbb P(\text{no hit})=\prod_{i=1}^m\mathbb P(E_i^c)=(1-p)^m$. Since $\alpha_1\ge0$ we have $p\ge\epsilon q$, hence $1-p\le1-\epsilon q$ and
\[
  \mathbb P(\text{no hit in }m)=(1-p)^m\le(1-\epsilon q)^m\le e^{-\epsilon q\,m},
\]
the last step from $1-x\le e^{-x}$. Equivalently the first-hit time $N=\min\{i:E_i\}$ is $\mathrm{Geom}(p)$ and, because $p\ge\epsilon q$, is stochastically dominated by a $\mathrm{Geom}(\epsilon q)$ variable; the resulting $(1-\epsilon q)^m$ is therefore a \emph{lower} bound on the power $1-\mathbb P(N>m)$ (the bound is conservative, in the safe direction). For a suspect with routing fraction $\epsilon'\ge\epsilon$, $p'=\alpha_1+\epsilon'(q-\alpha_1)$ is nondecreasing in $\epsilon'$ whenever $q\ge\alpha_1$, so $p'\ge p\ge\epsilon q$ and the same bound holds uniformly over routing fractions $\ge\epsilon$. Setting $e^{-\epsilon q m}\le\delta$ gives $m\ge\ln(1/\delta)/(\epsilon q)$, i.e.\ $m^\star=\lceil\ln(1/\delta)/(\epsilon q)\rceil$ guarantees power $\ge1-\delta$.

\paragraph{(b) Type-I error.}
On an honest endpoint $\epsilon=0$, so the $E_i$ are i.i.d.\ $\mathrm{Bernoulli}(\alpha_1)$ and the false-positive probability over $m$ queries is \emph{exactly}
\[
  \mathbb P\!\Big(\bigcup_{i=1}^m E_i\Big)=1-(1-\alpha_1)^m\le m\,\alpha_1,
\]
the inequality being $1-(1-x)^m\le mx$ for $x\in[0,1]$ (convexity of $x\mapsto(1-x)^m$, equivalently Bonferroni). Choosing $\tau$ with $\alpha_1(\tau)\le\alpha/m$ bounds the type-I error by $\alpha$.

\emph{Joint operating point.} Parts (a) and (b) share one $\tau$, and (a)'s budget $m^\star=\ln(1/\delta)/(\epsilon q(\tau))$ is exactly the $m$ that (b) must control. Substituting into $\alpha_1(\tau)\le\alpha/m^\star$ gives the feasibility condition
\[
  \alpha_1(\tau^\star)\;\le\;\frac{\alpha}{m^\star}\;=\;\frac{\alpha\,\epsilon\,q(\tau^\star)}{\ln(1/\delta)} .
\]
Raising $\tau$ lowers $\alpha_1$ (good) but also lowers $q$ (shrinking the RHS), so the existence of a feasible $\tau^\star$ is a nontrivial joint property of $(P,R)$, not implied by (a) or (b) alone, and is verified on calibration data.

\paragraph{(c) Two regimes.}

\emph{Regime (i): separating tail.} Suppose there is $\tau$ with $q(\tau)=\Omega(1)$, $\alpha_1(\tau)\to0$, and $\alpha_1(\tau)=o(\epsilon)$. By (a) the power budget is $m^\star=\ln(1/\delta)/(\epsilon q(\tau))=\Theta(\epsilon^{-1}\log(1/\delta))$. For a valid \emph{audit} the feasibility condition of (b) must also hold at this $\tau$: since $q(\tau)=\Theta(1)$ the RHS $\alpha\,\epsilon\,q(\tau)/\ln(1/\delta)$ is $\Theta(\epsilon)$, so feasibility holds precisely because $\alpha_1(\tau)=o(\epsilon)$ (in particular when $\alpha_1$ is a fixed small constant the auditor enforces by pushing $\tau$ into the tail, where by the low-entropy sharpness of \eqref{eq:ar} $R$-typical strings are $P$-improbable, so $\alpha_1$ collapses while $q$ stays $\Omega(1)$). Both the power and type-I requirements are then met at $m^\star=\Theta(\epsilon^{-1}\log(1/\delta))$.

\emph{The chi-square identity.} Since $Q_\epsilon(y)-P(y)=\epsilon\bigl(R(y)-P(y)\bigr)$,
\[\begin{aligned}
  \chi^2(Q_\epsilon\|P)&=\sum_y\frac{(Q_\epsilon(y)-P(y))^2}{P(y)}\\
  &=\epsilon^2\sum_y\frac{(R(y)-P(y))^2}{P(y)}\\
  &=\epsilon^2\,\chi^2(R\|P),
\end{aligned}\]
valid whenever $P(y)=0\Rightarrow R(y)=0$ (i.e.\ $R\ll P$); otherwise both sides are $+\infty$ and the identity holds in $[0,\infty]$. This is the only divergence manipulated; no chain rule, additivity, or alphabet-size bound for Chernoff information is used anywhere.

\emph{Regime (ii): low-separation local mixtures.} Assume $P\neq R$ and the finite-variance condition $V:=\chi^2(R\|P)<\infty$. We first bound the sample complexity of the optimal \emph{mean-shift} (linear-score) test of $H_0:y_i\sim P$ against $H_1:y_i\sim Q_\epsilon$ from $m$ i.i.d.\ responses in the local ($\epsilon\to0$) regime; the matching \emph{all-test floor} below then shows the resulting $\epsilon^{-2}$ order is optimal over \emph{all} tests, so it bounds what any auditor (IRIS included) can achieve.

Introduce the locally-most-powerful (score) statistic
\[
  T(y):=\frac{R(y)}{P(y)}-1,
\]
the Fr\'echet derivative of $\log\frac{dQ_\epsilon}{dP}=\log(1+\epsilon T)$ at $\epsilon=0$. A direct computation gives
\[\begin{aligned}
  \mathbb{E}_{P}[T]&=\sum_y(R-P)=0,\\
  \mathrm{Var}_{P}(T)&=\mathbb{E}_{P}[T^2]=\sum_y\frac{(R-P)^2}{P}\\
  &=\chi^2(R\|P)=V,
\end{aligned}\]
\[\begin{aligned}
  \mathbb{E}_{Q_\epsilon}[T]&=\underbrace{(1-\epsilon)\!\sum_y(R-P)}_{=0}\\
  &\quad+\epsilon\sum_y R\Big(\tfrac{R}{P}-1\Big)\\
  &=\epsilon\,\chi^2(R\|P)=\epsilon V,
\end{aligned}\]
so the per-sample squared standardized separation of $T$ between $P$ and $Q_\epsilon$ is
\[\begin{aligned}
  d^2&:=\frac{(\mathbb{E}_{Q_\epsilon}[T]-\mathbb{E}_{P}[T])^2}{\mathrm{Var}_{P}(T)}\\
  &=\frac{(\epsilon V)^2}{V}=\epsilon^2 V=\epsilon^2\chi^2(R\|P)\\
  &=\chi^2(Q_\epsilon\|P),
\end{aligned}\]
using the identity above. (Optimality: for any score $g$, Cauchy--Schwarz in $L^2(P)$ gives $(\mathbb{E}_{Q_\epsilon}[g]-\mathbb{E}_{P}[g])^2/\mathrm{Var}_{P}(g)=\langle g-\mathbb{E}_{P}g,\;dQ_\epsilon/dP-1\rangle_{P}^2/\mathrm{Var}_{P}(g)\le\|dQ_\epsilon/dP-1\|_{L^2(P)}^2=\chi^2(Q_\epsilon\|P)$, with equality at $g\propto T$. Thus $T$ is optimal and $\chi^2(Q_\epsilon\|P)$ is the largest achievable per-sample separation.)

\emph{Regularity for the central limit theorem (CLT).} Let $\bar T_m=\frac1m\sum_i T(y_i)$ and reject for $\bar T_m>t$. The score $T$ is \emph{fixed} (it does not depend on $\epsilon$); only the sampling law $Q_\epsilon$ changes with $\epsilon$. We require a triangular-array CLT for $\{T(y_i)\}_{i\le m}$ under $Q_\epsilon$ as $\epsilon\to0$, $m\to\infty$. A clean sufficient condition is $\mathbb{E}_{R}[T^2]=\sum_y\frac{(R-P)^2 R}{P^2}<\infty$ (slightly stronger than $\chi^2(R\|P)<\infty$): then $\sup_{\epsilon\le\epsilon_0}\mathbb{E}_{Q_\epsilon}[T^2\mathbf 1\{|T|>R\}]\to0$ as $R\to\infty$ (uniform integrability of $T^2$ along the family, since $Q_\epsilon=(1-\epsilon)P+\epsilon R$ is a convex combination of two laws each with finite $\mathbb{E}[T^2]$), which yields the Lindeberg condition and hence $\sqrt m(\bar T_m-\mathbb{E}_{Q_\epsilon}T)/\sqrt{\mathrm{Var}_{Q_\epsilon}(T)}\Rightarrow\mathcal N(0,1)$ under $Q_\epsilon$ and the analogous statement under $P$. Moreover $\mathrm{Var}_{Q_\epsilon}(T)=V+O(\epsilon)\to V$.

\emph{Sample complexity.} A level-$\alpha$ test sets $t=z_{1-\alpha}\sqrt{V/m}$ (null variance $V$). Its power at $Q_\epsilon$ is $\ge1-\delta$ as soon as
\[\begin{aligned}
  \epsilon V-z_{1-\alpha}\sqrt{V/m}\;&\ge\;z_{1-\delta}\sqrt{\mathrm{Var}_{Q_\epsilon}(T)/m}\\
  &=z_{1-\delta}\sqrt{V/m}\,(1+o(1)),
\end{aligned}\]
i.e.\ $\sqrt m\,\epsilon\sqrt V\ge(z_{1-\alpha}+z_{1-\delta})(1+o(1))$, giving
\[\begin{aligned}
  m^\star&=\big(1+o(1)\big)\frac{(z_{1-\alpha}+z_{1-\delta})^2}{\epsilon^2 V}\\
  &=\big(1+o(1)\big)\frac{(z_{1-\alpha}+z_{1-\delta})^2}{\chi^2(Q_\epsilon\|P)}=\Theta(\epsilon^{-2}),
\end{aligned}\]
the $\Theta$ being in $\epsilon$ \emph{at fixed} $(P,R)$; the hidden constant is $1/\chi^2(R\|P)$, which blows up as the pair approaches the twin wall $\chi^2(R\|P)\to0$ (so the budget is really $\Theta(\epsilon^{-2}/\chi^2(R\|P))$, matching App.~\ref{app:tail}'s $\kappa=\tfrac12$ crossover).

\emph{IRIS's deployed test.} IRIS does not run the LMP test; it thresholds $S_m=\frac1m\sum_i s(y_i)$ with its fixed deployed score $s$. Assume $\sigma_P^2:=\mathrm{Var}_{P}(s)<\infty$, $\sigma_R^2:=\mathrm{Var}_{R}(s)<\infty$, and $\mathbb{E}_{R}[s]\neq\mathbb{E}_{P}[s]$ (the analogous light-tail and nonzero-mean-separation conditions, logically distinct from $\chi^2<\infty$). Then the mean shift is $\mathbb{E}_{Q_\epsilon}[s]-\mathbb{E}_{P}[s]=\epsilon(\mathbb{E}_{R}[s]-\mathbb{E}_{P}[s])=\Theta(\epsilon)$ and $\mathrm{Var}_{Q_\epsilon}(s)=\sigma_P^2+O(\epsilon)$, so the same CLT/Neyman--Pearson argument gives
\[
  m^\star=\big(1+o(1)\big)\frac{\big(z_{1-\alpha}\sigma_P+z_{1-\delta}\sigma_P\big)^2}{\big(\epsilon(\mathbb{E}_{R}[s]-\mathbb{E}_{P}[s])\big)^2}=\Theta(\epsilon^{-2}),
\]
with the model-dependent constant $(z_{1-\alpha}\sigma_P+z_{1-\delta}\sigma_P)^2/(\mathbb{E}_{R}[s]-\mathbb{E}_{P}[s])^2$ in place of $1/\chi^2(R\|P)$. By the optimality (Cauchy--Schwarz) bound above, $(\mathbb{E}_{Q_\epsilon}[s]-\mathbb{E}_{P}[s])^2/\sigma_P^2\le\chi^2(Q_\epsilon\|P)$, so this constant is $\ge1/\chi^2(R\|P)$: IRIS shares the $\epsilon^{-2}$ \emph{scaling} but pays a larger, empirically calibrated constant. \hfill$\square$

\emph{All-test floor (matching lower bound).} The $\epsilon^{-2}$ order is not an artifact of restricting to mean-shift tests; it is a floor over \emph{all} level-$\alpha$ tests, by an elementary information-theoretic bound (no LAN machinery). For any test, the excess of its power at $Q_\epsilon$ over its size at $P$ is at most the total variation $\mathrm{TV}(Q_\epsilon^{\otimes m},P^{\otimes m})$. Write the Bhattacharyya affinity $A:=\sum_y\sqrt{P(y)Q_\epsilon(y)}=1-H^2(P,Q_\epsilon)$ with $H^2$ the squared Hellinger distance; using $H^2\le\tfrac12\chi^2(Q_\epsilon\|P)$ and the identity $\chi^2(Q_\epsilon\|P)=\epsilon^2\chi^2(R\|P)$ proved above, and the tensorization $A(Q_\epsilon^{\otimes m},P^{\otimes m})=A^m$,
\[
  A\ge 1-\tfrac12\epsilon^2\chi^2(R\|P),\qquad A^m\ge 1-\tfrac12 m\,\epsilon^2\chi^2(R\|P)
\]
(the second by Bernoulli's inequality), so from $\mathrm{TV}\le\sqrt{1-A^{2m}}\le\sqrt{2(1-A^m)}$ applied to the product measures,
\[
  \mathrm{TV}(Q_\epsilon^{\otimes m},P^{\otimes m})\le\sqrt{2\,(1-A^m)}\le\epsilon\sqrt{m\,\chi^2(R\|P)}.
\]
A level-$\alpha$ test with power $\ge1-\delta$ requires $\mathrm{TV}\ge 1-\alpha-\delta$, hence
\[
  m\ \ge\ \frac{(1-\alpha-\delta)^2}{\epsilon^2\,\chi^2(R\|P)}\ =\ \Omega(\epsilon^{-2}).
\]
So no auditor---linear-score or not---detects routing fraction $\epsilon$ in fewer than $\Omega(\epsilon^{-2})$ queries once $\chi^2(R\|P)<\infty$: the low-separation wall is information-theoretic, matching the mean-shift upper bound up to the constant, and the mean-shift test of this regime is therefore order-optimal. \hfill$\square$

\paragraph{Remark (regularity).}
(1) The CLT requires finite second moments: $\chi^2(R\|P)<\infty$ (indeed $\mathbb{E}_{R}[T^2]<\infty$ for the uniform-integrability/Lindeberg step) for the optimal test, and $\sigma_P^2,\sigma_R^2<\infty$ for IRIS's test. These are genuine restrictions: $\chi^2$ (or $\mathrm{Var}(\log P)$) can be $+\infty$ even when the Chernoff information $I^\star$ is finite, in which case the normal approximation fails and one must use exact tail bounds; the $\epsilon^{-2}$ law is the light-tail case. (2) Non-degeneracy $P\neq R$ (so $V>0$) is needed, else $m^\star=\infty$. (3) $\Theta(\epsilon^{-2})$ is the leading-order-in-$\epsilon$ scaling at fixed $(P,R)$; uniformity over pairs fails near the twin wall, where the true budget is $\Theta(\epsilon^{-2}/\chi^2(R\|P))$. (4) The $\epsilon^{-2}$ order is a \emph{floor} over \emph{all} tests (not just mean-shift), established by the all-test lower bound above (an elementary $\mathrm{TV}\le\epsilon\sqrt{m\chi^2}$ Hellinger-tensorization argument, no LAN needed); the mean-shift test of this regime is hence order-optimal. (5) No additivity, chain rule, or alphabet-size bound for Chernoff information is invoked; the only divergence pushed around is $\chi^2$, via the exact identity $\chi^2(Q_\epsilon\|P)=\epsilon^2\chi^2(R\|P)$.

\subsection{Proof of Prop.~\ref{prop:e2e} (binomial guarantee)}
The threshold $\tau$, the null rate, and the budget $m^\star$ are fixed from calibration \emph{before} the audit, so the $m^\star$ tells $k=\sum_i\mathbf 1\{s(y_i)>\tau\}$ are a sum of i.i.d.\ Bernoullis whose rate is $\alpha_1$ under $P$ and $p_{\epsilon'}=(1-\epsilon')\alpha_1+\epsilon' q$ under $Q_{\epsilon'}$.

\emph{(i) Type-I.} The one-sided binomial test of $H_0:\mathrm{rate}\le\alpha_1$ at level $\alpha$ rejects iff $k\ge k_\alpha(m)$, where $k_\alpha(m)=\min\{k:\mathbb P_{\mathrm{Bin}(m,\alpha_1)}(K\ge k)\le\alpha\}$. By construction $\mathbb P_{\mathrm{Bin}(m,\alpha_1)}(K\ge k_\alpha(m))\le\alpha$ for \emph{every} $m$, so under $P$ the flag probability is $\le\alpha$ exactly---type-I is controlled by the test's validity, with no union bound and no dependence on $m$. With $\alpha_1$ estimated, using a $(1{-}\gamma)$ Clopper--Pearson upper bound $\overline\alpha_1\ge\alpha_1$ as the null only raises $k_\alpha(m)$, so the bound holds with probability $\ge1-\gamma$ over calibration.

\emph{(ii) Power.} Since $q>\alpha_1$, $p_{\epsilon'}=\alpha_1+\epsilon'(q-\alpha_1)$ is strictly increasing in $\epsilon'$, so $p_{\epsilon'}\ge p_\epsilon$ for $\epsilon'\ge\epsilon$, and the binomial right-tail $\mathbb P_{\mathrm{Bin}(m,p)}(K\ge k_\alpha(m))$ is nondecreasing in the success rate $p$ \emph{at fixed $m$}; hence the power at any $\epsilon'\ge\epsilon$ is at least the power at $\epsilon$. The exact-test power is \emph{not} monotone in $m$---the discrete critical value $k_\alpha(m)$ jumps, so power can dip at a budget increment (e.g.\ at $p_0{=}0.05,p_1{=}0.1,\alpha{=}0.05$ it falls from $\approx0.150$ at $m{=}7$ to $\approx0.038$ at $m{=}8$)---but $\mathbb P_{\mathrm{Bin}(m,p_\epsilon)}(K\ge k_\alpha(m))\to1$ as $m\to\infty$. We therefore define $m^\star$ as the smallest budget for which power $\ge1-\delta$ holds at \emph{every} $m\ge m^\star$ (a stable budget; equivalently, a randomized test restores exact monotonicity in $m$), giving power $\ge1-\delta$ for all $\epsilon'\ge\epsilon$ at every budget $\ge m^\star$. Replacing $q$ by $\underline q\le q$ in the sizing lowers $p_\epsilon$, only \emph{increasing} $m^\star$, so power is preserved; a union bound over the two calibration events gives joint confidence $1-2\gamma$.

\emph{Regimes.} Write $\Delta=p_\epsilon-\alpha_1=\epsilon(q-\alpha_1)$. \emph{Separating threshold} ($q=\Omega(1)$, $\alpha_1\to0$): the null rate $\to0$, so $k_\alpha(m)=1$ for all moderate $m$ and the test fires on the first tell---it is exactly the any-tell rule of Thm.~\ref{thm:mix}(a), with $m^\star=\lceil\ln(1/\delta)/(\epsilon q)\rceil=\Theta(\epsilon^{-1}\log(1/\delta))$. \emph{Near-twins} ($q\to\alpha_1$, $\Delta\to0$): the normal approximation to the binomial gives $m^\star\approx(z_{1-\alpha}\sqrt{\alpha_1(1-\alpha_1)}+z_{1-\delta}\sqrt{p_\epsilon(1-p_\epsilon)})^2/\Delta^2=\Theta(\Delta^{-2})=\Theta\!\big(\epsilon^{-2}(q-\alpha_1)^{-2}\big)$, the $\epsilon^{-2}$ wall of Thm.~\ref{thm:mix}(c). The same i.i.d.-tell count underlies both, so a single test interpolates the two budgets. \hfill$\square$

\subsection{Proof of Cor.~\ref{cor:len} (length choice)}
\paragraph{Assumptions.}
\begin{itemize}\itemsep1pt
\item[(A1)] \emph{Integral split:} $L\mid B$, so $m=B/L\in\mathbb{Z}_{>0}$; the continuous maximization in (ii) is the natural relaxation of this integer-constrained problem.
\item[(A2)] \emph{I.i.d.\ queries:} the $m$ responses are independent, each drawn from $Q_\epsilon(\cdot\mid L)=(1-\epsilon)P(\cdot\mid L)+\epsilon R(\cdot\mid L)$ with identical length-$L$ laws (independent calls, independent $\epsilon$-routing).
\item[(A3)] \emph{Fixed rule per length:} the score $s$ and level $\tau_L$ are fixed, defining $q(L),\alpha_1(L)$ with $q(L)>0$.
\end{itemize}
No within-response positional independence is assumed: by the autoregressive law \eqref{eq:ar} the $L$ positions of one response are dependent, which is exactly why $q(L)$ is a measured primitive of $L$ (Prop.~\ref{prop:len}), not $1-(1-q(1))^L$. Only cross-query independence (A2) is used; no Chernoff additivity or chain rule is invoked anywhere.

\paragraph{Step 1 (tell probability).}
Fix $L$ and let $E=\{s(y)>\tau_L\}$. By (A2) and total probability,
\[
  p_\epsilon(L)=\mathbb P_{Q_\epsilon(\cdot\mid L)}(E)=(1-\epsilon)\alpha_1(L)+\epsilon q(L)\;\ge\;\epsilon q(L),\tag{$\ast$}
\]
the inequality because $\alpha_1(L)\ge0$. Dropping $(1-\epsilon)\alpha_1(L)$ only discards detections from reference responses that happen to cross $\tau_L$, so it makes the subsequent bound conservative.

\paragraph{Step 2 (miss bound).}
By (A2) the $m=B/L$ events $\{E\text{ at query }i\}$ are i.i.d.\ $\mathrm{Bernoulli}(p_\epsilon(L))$, so ``no crossing in $m$ queries'' has probability
\[
  \mathbb P(\mathrm{miss})=\big(1-p_\epsilon(L)\big)^{m}=\big(1-p_\epsilon(L)\big)^{B/L},
\]
an \emph{equality} in $p_\epsilon(L)$. Since $t\mapsto(1-t)^{m}$ is decreasing on $[0,1]$, ($\ast$) gives $\mathbb P(\mathrm{miss})\le(1-\epsilon q(L))^{B/L}$. Finally $1-x\le e^{-x}$ \emph{for all real $x$} (here $x=\epsilon q(L)\in[0,1]$) yields
\[\begin{aligned}
  \mathbb P(\mathrm{miss})\le\big(1-\epsilon q(L)\big)^{B/L}&\le e^{-\epsilon q(L)\,B/L}\\
  &=\exp\!\Big(-\epsilon B\,\tfrac{q(L)}{L}\Big).
\end{aligned}\]
This is part (i): the displayed $e^{-\epsilon q(L)B/L}$ in the paper is the rightmost quantity, the result of two stacked relaxations (($\ast$) and $1-x\le e^{-x}$), hence an upper bound on the miss, not an equality. Setting the bound $\le\delta$ recovers $m^\star=\lceil\ln(1/\delta)/(\epsilon q)\rceil$ of Eq.~\eqref{eq:mstar} with $m=B/L$, confirming the corollary is a direct specialization of Theorem~\ref{thm:mix}(a).

\paragraph{Step 3 (optimal length).}
Fix $\epsilon,B$. The bound factors through $\rho(L):=q(L)/L$ as $e^{-\epsilon B\rho(L)}$. Because $x\mapsto e^{-\epsilon B x}$ is strictly decreasing for $\epsilon B>0$,
\[\begin{aligned}
  \arg\min_{L:\,q(L)>0,\,L\mid B} e^{-\epsilon B\,q(L)/L}&=\arg\max_{L:\,q(L)>0,\,L\mid B}\frac{q(L)}{L}\\
  &=:L^\star,
\end{aligned}\]
independent of $\epsilon,B$ (they enter only through the $L$-free decreasing factor). This is the corollary's claim, corrected to ``minimizes the miss \emph{bound}.''

\paragraph{Step 4 (exact optimum).}
Minimizing $(1-p_\epsilon(L))^{B/L}$ is, via the strictly decreasing transform $-\tfrac1B\log(\cdot)$, equivalent to maximizing $g(L)=-\tfrac1L\log(1-p_\epsilon(L))$. Using $-\log(1-t)=t+\tfrac12t^2+O(t^3)$,
\[\begin{aligned}
  g(L)&=\frac{p_\epsilon(L)}{L}+O\!\Big(\frac{p_\epsilon(L)^2}{L}\Big)\\
       &=\epsilon\frac{q(L)}{L}+(1-\epsilon)\frac{\alpha_1(L)}{L}+O\!\Big(\frac{p_\epsilon(L)^2}{L}\Big).
\end{aligned}\]
Thus $\arg\max_L g(L)=\arg\max_L q(L)/L=L^\star$ \emph{provided} both $p_\epsilon(L)=o(1)$ (small per-query tell / small $\epsilon$) and $\alpha_1(L)=o(\epsilon q(L))$ (FPR controlled at level $\tau_L$, Theorem~\ref{thm:mix}(b)); without the latter the leading term keeps the $(1-\epsilon)\alpha_1(L)/L$ contribution and the optimizer maximizes $p_\epsilon(L)/L$, not $q(L)/L$. When $\epsilon q(L)=\Theta(1)$ the $\tfrac12 p_\epsilon(L)^2/L$ term can shift the optimizer, so ``maximize $q(L)/L$'' is exact for the exponential bound for all $\epsilon$, and for the exact geometric miss only to first order in $p_\epsilon$. \hfill$\square$

\paragraph{Remark (yield shape).}
If the tell is per-position, $q(L)$ is concave with $q(0)=0$, so $\rho(L)=q(L)/L$ is nonincreasing (chord slope of a concave function through the origin) and $L^\star$ is the smallest admissible length. If the tell is a global statistic ($\chi^2$, compression, $n$-gram entropy), $q(L)$ is S-shaped (information-starved at small $L$) and $\rho$ peaks at an interior $L^\star$. Both are consistent with Step 3; which holds is empirical (Prop.~\ref{prop:len}, Sec.~\ref{sec:exp}). No additivity of within-response information across positions is used.

\subsection{Proof of Prop.~\ref{prop:temp} (distributional separation)}
Fix a single-draw probe with response alphabet $\mathcal Y$, $|\mathcal Y|=n$, and categorical laws $p$ and $r$ for the reference and substitute at the chosen temperature. For $\lambda\in[0,1]$ write the Chernoff coefficient $C_\lambda:=\sum_{y\in\mathcal Y}p(y)^{1-\lambda}r(y)^{\lambda}$, so the per-draw Chernoff information is $I^\star=-\min_{0\le\lambda\le1}\log C_\lambda$. All logarithms are natural.

\paragraph{Step 0 (alphabet caveat).}
Take $n{=}2$ and, for $d\in(0,\tfrac12)$, $p=(1-d,d)$, $r=(d,1-d)$. The swap symmetry $r(y)=p(\sigma y)$ gives $C_\lambda=C_{1-\lambda}$, and $\lambda\mapsto\log C_\lambda$ is convex (H\"older), so its minimiser is the fixed point $\lambda^\star=\tfrac12$. Hence
\[\begin{aligned}
  \min_\lambda C_\lambda&=C_{1/2}=\sum_y\sqrt{p(y)r(y)}\\
  &=2\sqrt{d(1-d)},\\
  I^\star&=-\log\!\big(2\sqrt{d(1-d)}\big)\xrightarrow[d\to0^+]{}+\infty.
\end{aligned}\]
The crossover $I^\star=\log 2$ occurs at the \emph{unique} $d_0\in(0,\tfrac12)$ solving $2\sqrt{d_0(1-d_0)}=\tfrac12$, i.e.\ $d_0=\tfrac12\big(1-\tfrac{\sqrt3}{2}\big)\approx0.067$; for every $d<d_0$ the bound $I^\star\le\log n$ \emph{fails}, e.g.\ at $d=0.01$, $I^\star\approx1.6145>\log2\approx0.6931$. Since here $H(p)=H(r)=H_b(d)\to0$ while $I^\star\to\infty$, this simultaneously refutes the variants $I^\star\le\log n$ and $I^\star\le\min(H(p),H(r))$. (For disjoint supports $C_\lambda=0$ on $(0,1)$ and $I^\star=+\infty$.) Thus the conflation of \emph{low output entropy} with \emph{low distinguishability} is the conceptual error: a sharply biased coin whose bias differs between $p$ and $r$ is a low-entropy yet high-exponent probe.

\paragraph{Step 1 (MI cap).}
Let $\Theta\in\{P,R\}$ carry the uniform prior, with $Y\mid\Theta{=}P\sim p$ and $Y\mid\Theta{=}R\sim r$, so $Y$ has marginal $\bar p=\tfrac12(p+r)$. Then
\[\begin{aligned}
  I(\Theta;Y)&=H(Y)-H(Y\mid\Theta)\\
  &=H(\bar p)-\tfrac12\big(H(p)+H(r)\big)\\
  &=\tfrac12\mathrm{KL}(p\|\bar p)+\tfrac12\mathrm{KL}(r\|\bar p)\\
  &=\mathrm{JS}(p,r),
\end{aligned}\]
the (uniform-prior) Jensen--Shannon divergence. Two elementary bounds give
\[\begin{aligned}
  I(\Theta;Y)\;&\overset{(1)}{=}\;H(Y)-H(Y\mid\Theta)\\
  &\overset{(2)}{\le}\;H(Y)\;\overset{(3)}{\le}\;\log|\mathcal Y|=\log n,\\
  I(\Theta;Y)\;&\overset{(4)}{\le}\;H(\Theta)=\log 2,
\end{aligned}\]
where $(2)$ uses $H(Y\mid\Theta)\ge0$ (discrete conditional entropy is nonnegative), $(3)$ is the maximum-entropy bound on $n$ atoms, and $(4)$ is the symmetric bound $I(\Theta;Y)\le H(\Theta)$ with $\Theta$ binary and uniform. Hence a single Bayesian draw transmits at most $\min(\log 2,\log n)=\log 2$ nat about model identity---one bit, ``which of two models''---so for $n>2$ the operative cap is $\log 2$, not the alphabet, and $\log n$ binds only at $n{=}2$ (a coin). This is the honest carrier of the ``small-alphabet / low-entropy probe is weak'' message --- a bound on an \emph{averaged} $f$-divergence, not on the error exponent. In the Step-0 family $\mathrm{JS}(p,r)\le\log2$ for all $d$ (e.g.\ $0.637<\log2$ at $d=0.01$), consistent with the bound even where $I^\star$ diverges: a sharp model-discriminating coin is a high-exponent but low-mutual-information probe.

\paragraph{Step 2 (KL gating).}
For every $\lambda\in[0,1]$, apply Jensen to the concave map $\log$ under the law $p$:
\[\begin{aligned}
  -\log C_\lambda&=-\log\mathbb E_{p}\!\Big[(r/p)^{\lambda}\Big]\\
  &\le-\mathbb E_{p}\!\Big[\lambda\log(r/p)\Big]=\lambda\,\mathrm{KL}(p\|r),
\end{aligned}\]
and symmetrically (Jensen under $r$ with exponent $1-\lambda$) $-\log C_\lambda\le(1-\lambda)\,\mathrm{KL}(r\|p)$. Since $0\le\lambda\le1$,
\[\begin{aligned}
  I^\star=\max_{0\le\lambda\le1}\big(-\log C_\lambda\big)&\le\min\big(\mathrm{KL}(p\|r),\\
  &\quad\ \mathrm{KL}(r\|p)\big).
\end{aligned}\]
In particular $I^\star\to0$ whenever $p\to r$ in KL: a probe is weak precisely when the two model laws are \emph{close}, regardless of $n$. (This bound is sharp in spirit but not the $\log n$ statement: in Step 0, $\min(\mathrm{KL})=4.50$ nat at $d=0.01$, large and diverging, so the KL bound does \emph{not} carry the alphabet message --- only the MI bound of Step 1 does.)

\paragraph{Step 3 (temperature limit).}
Model temperature as inverse-temperature tempering of fixed logits $z_P,z_R\in\mathbb R^{n}$ (App.~\ref{app:tempmag}): for $M\in\{P,R\}$, $p^T_M(y)=e^{\beta z_M(y)}/\sum_{y'}e^{\beta z_M(y')}$, $\beta=1/T$. Assume each arg max is unique with gap $\Delta_M:=z_M(y^\star_M)-\max_{y\ne y^\star_M}z_M(y)>0$, $y^\star_M:=\arg\max_y z_M(y)$.

\emph{(a) Point-mass degeneration.} For $y\ne y^\star_M$,
\[\begin{aligned}
  p^T_M(y)&\le\frac{e^{\beta z_M(y)}}{e^{\beta z_M(y^\star_M)}}=e^{-\beta(z_M(y^\star_M)-z_M(y))}\\
  &\le e^{-\beta\Delta_M}\xrightarrow[\beta\to\infty]{}0,
\end{aligned}\]
so $p^T_M\to\delta_{y^\star_M}$ in total variation at geometric rate, $H(p^T_M)\to0$, and by Eq.~\eqref{eq:heatcap} $\mathrm{Var}_{p^T_M}(z_M)=T^3\,dH/dT\to0$.

\emph{(b) Same-mode branch $y^\star_P=y^\star_R=:y^\star$.} Both laws share the dominating atom $y^\star$ with mass $\to1$, so $\|p^T_P-p^T_R\|_{\mathrm{TV}}\to0$ and, by Step 2,
\[
  0\le I^\star(T)\le\min\big(\mathrm{KL}(p^T_P\|p^T_R),\mathrm{KL}(p^T_R\|p^T_P)\big)\xrightarrow[T\to0]{}0,
\]
with rate $I^\star(T)=O\big(e^{-\beta\min(\Delta_P,\Delta_R)}\big)$ (the off-atom contributions to either KL are $O(e^{-\beta\Delta})$ and the on-atom log-ratio $\log(p^T_P(y^\star)/p^T_R(y^\star))\to\log1=0$). The optimal exponent of the $m$-draw Bayes test is $m\,I^\star(T)$, so the increment contributed by the $(m{+}1)$-th draw is exactly $I^\star(T)\to0$: additional independent queries asymptotically add no exponent.

\emph{(c) Degenerate identical-draw statement (Cor.~\ref{cor:greedy}).} At $T=0$ (greedy) the response equals the constant $y^\star_M$ deterministically, so $y_1=\cdots=y_m=y^\star_M$. On the same-mode branch the log-likelihood ratio between $P$ and $R$ of $m$ identical greedy draws is $m\log\!\big(p^0_P(y^\star)/p^0_R(y^\star)\big)=m\log(1/1)=0$ and does not grow with $m$: repeating an identical draw contributes zero additional exponent, the per-additional-query rate is exactly $0$, matching App.~\ref{app:temp}.

\emph{(d) Distinct-mode branch $y^\star_P\ne y^\star_R$ (honest caveat).} Here the limits $\delta_{y^\star_P},\delta_{y^\star_R}$ have disjoint support, so $C_\lambda\to0$ for $\lambda\in(0,1)$ and $I^\star(T)\to+\infty$: a \emph{single} draw separates $P$ from $R$ almost surely (the support-collapse tell of App.~\ref{app:tempmag}). The two branches do not conflict because they concern different observables: part (ii)'s ``rate vanishes'' is the increment from \emph{repeating an already-observed identical} draw (zero in both branches), whereas distinct-mode separation is delivered by the \emph{first} draw, not by accumulation. The branch-independent invariant is therefore: under greedy decoding, repeating an already-observed identical draw yields no additional exponent, so any nonzero accumulation rate from independent repeats requires $T>0$. $\hfill\blacksquare$

\paragraph{Conclusion.}
A ceiling holds on the per-draw \emph{mutual information}, $I(\Theta;Y)=\mathrm{JS}(p,r)\le H(\Theta)=\log 2$ (one bit of model identity, with $\log n$ a looser bound binding only at $n{=}2$; Step 1), but \emph{not} on the Chernoff exponent $I^\star$ (Step 0), which is instead gated by closeness, $I^\star\le\min(\mathrm{KL}(p\|r),\mathrm{KL}(r\|p))$ (Step 2). For part (ii) (Step 3): as $T\to0$ each law degenerates to a point mass, and on the same-mode branch the per-additional-query exponent vanishes, with the strict $T{=}0$ greedy case giving exactly zero accumulation from identical repeats.

\subsection{Proof of Cor.~\ref{cor:greedy} (greedy decoding)}
\paragraph{Precise statement.}
Fix a probe $c$ and consider greedy decoding ($T{=}0$) of either the reference or substitute endpoint, indexed by $M\in\{P,R\}$. Assume the \emph{idealized greedy decoder}: at every decoded position the next-token argmax of the endpoint law $P_M(\cdot\mid y_{<t},c)$ is unique (or ties are broken by a fixed deterministic rule), and the backend is bit-exact, so that for each endpoint $M$ the decoded response is a single deterministic string
\[\begin{aligned}
  y^{\star}_M \;=\; \arg\max_{y}\,P_M(y\mid c)\quad\\
  \text{(realized identically on every query).}
\end{aligned}\]
Let $s(y)=-\log\hat{\mathbb P}(P\mid y)$ be the fixed (enrollment-frozen) score, and for $m$ nominally i.i.d.\ queries let $S_m^{(M)}=\frac1m\sum_{i=1}^m s(y_i)$ with $y_i\sim P_M^{T=0}(\cdot\mid c)$ (Eq.~\ref{eq:evidence}). Let
\[
  \mathrm{AUROC}(m)\;=\;\mathbb P\!\big(S_m^{(P)}<S_m^{(R)}\big)+\tfrac12\mathbb P\!\big(S_m^{(P)}=S_m^{(R)}\big)
\]
be the rank functional of Section~\ref{sec:prelim}, and let the operational accumulation exponent be the slope
\[
  \widehat{I}_{\mathrm{auc}}\;:=\;-\,\frac{d}{dm}\,\log\!\big(1-\mathrm{AUROC}(m)\big),
\]
fit over $m$. Then $\widehat{I}_{\mathrm{auc}}=0$. Equivalently: repeating an \emph{identical} single greedy draw adds no evidence, so $1-\mathrm{AUROC}(m)$ is constant in $m$ and its decay exponent is exactly zero. (This is a statement about the \emph{increment from repetition}; it does \emph{not} assert that a single greedy draw fails to separate $P$ and $R$.)

\paragraph{Step 1: point-mass score.}
Under the idealized greedy decoder, every query to model $M$ returns the same string $y^{\star}_M$. Because $s$ is a deterministic function, every query yields the same score
\[
  s_M^{\star}\;:=\;s(y^{\star}_M)\;=\;-\log\hat{\mathbb P}(P\mid y^{\star}_M),
\]
i.e.\ the push-forward law $\mu_M$ of $s$ under $P_M^{T=0}(\cdot\mid c)$ is the Dirac mass $\delta_{s_M^\star}$. Its variance is $0$.

\paragraph{Step 2: no $m$ dependence.}
For any $m\ge1$ and any model $M$, the draws $y_1=\cdots=y_m=y^{\star}_M$ are identical, hence
\[\begin{aligned}
  S_m^{(M)}\;&=\;\frac1m\sum_{i=1}^m s(y_i)\;=\;\frac1m\cdot m\,s_M^{\star}\;=\;s_M^{\star}\\
  &\qquad\text{(deterministically, for every }m\text{).}
\end{aligned}\]
Thus $S_m^{(M)}$ is the \emph{same} degenerate random variable $\delta_{s_M^\star}$ for all $m$; averaging copies of a constant returns the constant. (Equivalently, in the empirical-mean / large-deviations language of Prop.~\ref{prop:exp}: the cumulant generating function of $s$ under $\mu_M=\delta_{s_M^\star}$ is the linear $\Lambda_M(\theta)=\theta\,s_M^\star$, whose Legendre transform is $0$ at $x=s_M^\star$ and $+\infty$ elsewhere; the Cram\'er rate function is degenerate and the error exponent governing $S_m$ is $0$. The mean does not concentrate \emph{further} with $m$ because there is no fluctuation to suppress.)

\paragraph{Step 3: constant rank functional.}
Since $S_m^{(P)}\equiv s_P^{\star}$ and $S_m^{(R)}\equiv s_R^{\star}$ are constants independent of $m$, the joint law of $(S_m^{(P)},S_m^{(R)})$ is the product of two Dirac masses and is identical for all $m$. Therefore
\[\begin{aligned}
  \mathrm{AUROC}(m)&\;=\;\mathbf{1}\{s_P^{\star}<s_R^{\star}\}+\tfrac12\,\mathbf{1}\{s_P^{\star}=s_R^{\star}\}\\
  &\;=\;\mathrm{AUROC}(1)\qquad\text{for all }m\ge1,
\end{aligned}\]
a fixed value in $\{0,\tfrac12,1\}$ that does not vary with $m$. In particular $1-\mathrm{AUROC}(m)=1-\mathrm{AUROC}(1)$ is constant in $m$.

\paragraph{Step 4: zero exponent.}
Because $m\mapsto 1-\mathrm{AUROC}(m)$ is constant, $\log\!\big(1-\mathrm{AUROC}(m)\big)$ is constant in $m$ (with the convention $\log 0=-\infty$ treated as a constant when $\mathrm{AUROC}\equiv1$, i.e.\ the two greedy strings already separate perfectly at $m{=}1$). Its slope in $m$ is therefore identically $0$, so
\[
  I\;=\;-\,\frac{d}{dm}\,\log\!\big(1-\mathrm{AUROC}(m)\big)\;=\;0. \qquad\blacksquare
\]

\paragraph{Remark 1 (meaning of $I=0$).}
The exponent measures only the \emph{rate of improvement with repetition}, not the level. If $y^{\star}_P\neq y^{\star}_R$ then $\mathrm{AUROC}(1)$ may equal $1$ and a single greedy draw \emph{perfectly} separates the two endpoints; nonetheless $I=0$ because additional identical draws contribute nothing. Greedy decoding therefore collapses the independent-repeat channel that the $e^{-I m}$ law of Prop.~\ref{prop:exp} relies upon: the auditor cannot drive a residual error toward $0$ by issuing more queries with $T{=}0$. This is exactly the boundary of the temperature gating of Prop.~\ref{prop:temp}(ii): as $T\to0$ the per-query law tends to a point mass and the marginal value of an extra query vanishes; at $T{=}0$ it is exactly $0$.

\paragraph{Remark 2 (temperature consistency).}
The companion second-order analysis sends $T\to0$ ($\beta\to\infty$) so that $V=\mathrm{Var}_{p^{T}}(z)\to0$ and the per-draw oracle rate of Eq.~\eqref{eq:tempmag} vanishes; that is the same degeneracy as Step~1 viewed through the exponential family. The separate fact that $T{=}0$ vs.\ $T{>}0$ is itself trivially distinguishable (AUROC $\approx0.99$ by support/entropy collapse) is \emph{not} a contradiction: it is a single-draw observable about whether decoding is greedy, not an accumulation rate over repeated identical greedy draws.

\paragraph{Remark 3 (why $T>0$).}
The contrapositive of Steps~1--4 is that a strictly positive accumulation exponent requires the per-query score law $\mu_M$ to be \emph{non-degenerate} (positive variance), since a degenerate $\mu_M$ forces $S_m\equiv\mathrm{const}$ and hence $I=0$. For the paper's sampling-based decoders the lever that lifts $\mu_M$ off a point mass is temperature, so $T>0$ is the operative requirement; more generally any decoding configuration that renders $P_M(\cdot\mid c)$ non-degenerate (and a probe $c$ on which the conditional is genuinely random) suffices.

\paragraph{Real-backend caveat.}
The clean $I=0$ uses the idealized greedy decoder of the Statement. On real backends $T{=}0$ is not perfectly deterministic (batching, non-associative floating-point reductions, MoE routing, tie-breaking), so $\mu_M$ may carry a tiny but nonzero variance and the strict equalities of Steps~1--2 hold only approximately; the conclusion then degrades gracefully to a near-zero accumulation rate. Accordingly the corollary is to be read as a statement about the \emph{increment from repeating an identical single draw} (cf.\ App.~\ref{app:temp}), which is what $T{=}0$ controls.

\subsection{Proof of Prop.~\ref{prop:Isc} (score compression)}
Write the Chernoff (R\'enyi/Hellinger) affinities
\[\begin{aligned}
  \Phi_\star(\lambda)&=\sum_y P(y)^{1-\lambda}R(y)^{\lambda},\\
  \Phi_{\mathrm{sc}}(\lambda)&=\int \mu_P(s)^{1-\lambda}\mu_R(s)^{\lambda}\,ds,
  \qquad \lambda\in[0,1],
\end{aligned}\]
where $\mu_P=\phi_\#P,\ \mu_R=\phi_\#R$ are the push-forwards of the deterministic measurable score $s=\phi(y)$, so that $I^\star=-\min_\lambda\log\Phi_\star(\lambda)$ and $I^{\mathrm{sc}}=-\min_\lambda\log\Phi_{\mathrm{sc}}(\lambda)$.

\emph{Step 1 (pointwise data-processing via H\"older).}
Fix $\lambda\in(0,1)$; the endpoints give $\Phi_\star(0)=\Phi_{\mathrm{sc}}(0)=\Phi_\star(1)=\Phi_{\mathrm{sc}}(1)=1$ (total masses) trivially. Since $\phi$ is deterministic, the level sets $\{\phi^{-1}(s)\}_s$ partition response space and $\mu_P(s)=\sum_{y\in\phi^{-1}(s)}P(y)$, $\mu_R(s)=\sum_{y\in\phi^{-1}(s)}R(y)$ (read as integrals against a common dominating measure in the continuous case). On one level set put $a_y=P(y),\,b_y=R(y)$. H\"older's inequality with conjugate exponents $p=\tfrac1{1-\lambda},\,q=\tfrac1\lambda$ gives
\begin{equation}\label{eq:holder-bin}\begin{aligned}
  \sum_{y\in\phi^{-1}(s)} a_y^{1-\lambda}b_y^{\lambda}
  &\ \le\
  \Big(\sum_{y\in\phi^{-1}(s)} a_y\Big)^{1-\lambda}\Big(\sum_{y\in\phi^{-1}(s)} b_y\Big)^{\lambda}\\
  &= \mu_P(s)^{1-\lambda}\mu_R(s)^{\lambda}.
\end{aligned}\end{equation}
(This is superadditivity of the weighted geometric mean: coarsening cannot decrease affinity. It is the H\"older/Hellinger-affinity data-processing inequality, not the log-sum/Jensen inequality used for $f$-divergences.) Summing \eqref{eq:holder-bin} over level sets,
\begin{equation}\label{eq:dpi-pointwise}\begin{aligned}
  \Phi_\star(\lambda)&=\sum_s\sum_{y\in\phi^{-1}(s)}P^{1-\lambda}R^{\lambda}\\
  &\ \le\ \sum_s\mu_P(s)^{1-\lambda}\mu_R(s)^{\lambda}\\
  &=\Phi_{\mathrm{sc}}(\lambda),
  \qquad\forall\,\lambda\in[0,1].
\end{aligned}\end{equation}

\emph{Step 2 ($-\log\min$ reversal).}
For every $\lambda$, $\Phi_{\mathrm{sc}}(\lambda)\ge\Phi_\star(\lambda)\ge\min_{\lambda'}\Phi_\star(\lambda')$, so $\min_\lambda\Phi_{\mathrm{sc}}(\lambda)\ge\min_\lambda\Phi_\star(\lambda)$. As $t\mapsto-\log t$ is strictly decreasing,
\[
  I^{\mathrm{sc}}=-\log\min_\lambda\Phi_{\mathrm{sc}}(\lambda)\ \le\ -\log\min_\lambda\Phi_\star(\lambda)=I^\star.
\]
If $P,R$ have disjoint supports then $\Phi_\star(\lambda)=0$ for $\lambda\in(0,1)$, $I^\star=+\infty$, and the bound holds vacuously.

\emph{Step 3 (interior optimizer under $P\neq R$).}
$\log\Phi_\star$ is convex (H\"older) with $\Phi_\star(0)=\Phi_\star(1)=1$, hence $\Phi_\star\le1$ on $[0,1]$, with $M:=\min_\lambda\Phi_\star<1$ whenever $P\neq R$ (else $\Phi_\star\equiv1$ would force $P=R$). Thus the oracle minimizer $\lambda^\star\in(0,1)$, and likewise any minimizer of $\Phi_{\mathrm{sc}}$ attaining a value $<1$ is interior.

\emph{Step 4 (equality $\Rightarrow$ sufficiency).}
Assume $P\neq R$ and $I^{\mathrm{sc}}=I^\star$, i.e.\ $\min_\lambda\Phi_{\mathrm{sc}}=\min_\lambda\Phi_\star=M<1$. Let $\lambda_1$ attain $\Phi_{\mathrm{sc}}(\lambda_1)=M$. By \eqref{eq:dpi-pointwise},
\[
  M=\min_\lambda\Phi_\star(\lambda)\le\Phi_\star(\lambda_1)\le\Phi_{\mathrm{sc}}(\lambda_1)=M,
\]
so $\Phi_\star(\lambda_1)=\Phi_{\mathrm{sc}}(\lambda_1)=M<1$, whence $\lambda_1\in(0,1)$ (Step 3). The aggregate equality $\Phi_\star(\lambda_1)=\Phi_{\mathrm{sc}}(\lambda_1)$ means $\sum_s\big[\mu_P(s)^{1-\lambda_1}\mu_R(s)^{\lambda_1}-\sum_{\phi^{-1}(s)}a_y^{1-\lambda_1}b_y^{\lambda_1}\big]=0$ with every summand $\ge0$ by \eqref{eq:holder-bin}; hence each summand vanishes and \eqref{eq:holder-bin} is an equality on \emph{every} level set. H\"older equality at the interior exponent $\lambda_1\in(0,1)$ forces the vectors $(a_y^{1-\lambda_1})_y$ and $(b_y^{\lambda_1})_y$ to be proportional on the level set, i.e.\ $a_y\propto b_y$ there (with the convention that on the support both are positive; positions where one vanishes do not contribute and carry $P/R\in\{0,\infty\}$ constant by support agreement). Thus the likelihood ratio $P/R$ is constant on each $\phi^{-1}(s)$, i.e.\ $\sigma(s)$-measurable; by Neyman--Fisher factorization $s$ is a sufficient statistic for $\{P,R\}$.

\emph{Step 5 (sufficiency $\Rightarrow$ equality).}
Conversely, if $P/R\equiv r(s)$ on $\phi^{-1}(s)$ then $a_y=r(s)b_y$ there, so for \emph{every} $\lambda$,
\[\begin{aligned}
  \sum_{y\in\phi^{-1}(s)}a_y^{1-\lambda}b_y^{\lambda}&=r(s)^{1-\lambda}\!\!\sum_{y\in\phi^{-1}(s)}\!\!b_y\\
  &=\Big(\!\sum_{\phi^{-1}(s)}\!a_y\Big)^{1-\lambda}\!\Big(\!\sum_{\phi^{-1}(s)}\!b_y\Big)^{\lambda},
\end{aligned}\]
so \eqref{eq:holder-bin} is an equality; summing over $s$ gives $\Phi_\star\equiv\Phi_{\mathrm{sc}}$ and $I^{\mathrm{sc}}=I^\star$.

\emph{Step 6 (calibration corollary).}
If $\operatorname{logit}\hat{\mathbb P}(P\mid y)=a+b\,\ell(y)$ with $\ell=\log(P/R)$ and $b\neq0$, then $s(y)=-\log\sigma(a+b\,\ell(y))$ is a strictly monotone (injective) function of $\ell$, so the level sets of $s$ coincide with those of $\ell$, on which $P/R=e^{\ell}$ is constant; Step 5 gives $I^{\mathrm{sc}}=I^\star$. If $b=0$ (uninformative) or $s$ merges $\ell$-distinct responses (lost feature / miscalibration), then $P/R$ is not $\sigma(s)$-measurable, \eqref{eq:holder-bin} is strict on some level set at the interior optimizer $\lambda_1$, and Step 4 yields $I^{\mathrm{sc}}<I^\star$. \hfill$\square$

\subsection{Proof of Prop.~\ref{prop:tempmag} (temperature retunes)}
\paragraph{Standing assumptions.}
Fix a context $c$ and the served logit (``energy'') vector $z\in\mathbb{R}^n$ over a finite alphabet of size $n<\infty$, \emph{held fixed across temperatures}. For an inverse temperature $\beta=1/T\in(0,\infty)$ define the tempered next-token law
\[
  p^\beta_i=\frac{e^{\beta z_i}}{\sum_{j}e^{\beta z_j}},\qquad
  A(\beta)=\log\sum_{i=1}^n e^{\beta z_i}.
\]
Because $n<\infty$ and $z$ is finite, $A$ is real-analytic ($C^\infty$) on all of $\mathbb{R}$, $p^\beta_i>0$ for every $i$ and every finite $\beta$, and $\{p^\beta\}_{\beta\in\mathbb{R}}$ is a regular one-parameter exponential family with natural parameter $\beta$ and sufficient statistic $z$. We write
\[\begin{aligned}
  &\beta_0=1/T_0,\quad \beta_1=1/T_1,\quad \Delta\beta=\beta_1-\beta_0,\\
  &V:=A''(\beta_0)=\mathrm{Var}_{p^{\beta_0}}(z).
\end{aligned}\]
The interior expansions below assume the non-degeneracy $V>0$ (equivalently $z$ not constant on the support, i.e.\ $p^{\beta_0}$ not a point mass); the degenerate boundary $V\to0$ is treated separately in Step~6.

\paragraph{Step 1 (cumulants).}
By direct differentiation of $A$,
\[\begin{aligned}
  A'(\beta)&=\frac{\sum_i z_i e^{\beta z_i}}{\sum_i e^{\beta z_i}}=\mathbb{E}_{p^\beta}[z],\\
  A''(\beta)&=\mathbb{E}_{p^\beta}[z^2]-\big(\mathbb{E}_{p^\beta}[z]\big)^2=\mathrm{Var}_{p^\beta}(z)\ge0 .
\end{aligned}\]
The nonnegativity of $A''$ is the convexity of $A$; it holds unconditionally because a variance is nonnegative.

\paragraph{Step 2 ($\chi^2$ identity).}
For any two members $p^{\beta_1},p^{\beta_0}$ of the family,
\[\begin{aligned}
  \chi^2\!\big(p^{\beta_1}\Vert p^{\beta_0}\big)
  &=\sum_i \frac{(p^{\beta_1}_i)^2}{p^{\beta_0}_i}-1\\
  &=\sum_i \frac{e^{2\beta_1 z_i}/Z(\beta_1)^2}{e^{\beta_0 z_i}/Z(\beta_0)}-1,\quad Z(\beta):=e^{A(\beta)},
\end{aligned}\]
and $\sum_i e^{(2\beta_1-\beta_0)z_i}=Z(2\beta_1-\beta_0)$ $=e^{A(2\beta_1-\beta_0)}$, so
\begin{equation}\label{eq:chi2exact}\begin{aligned}
  \chi^2\!\big(p^{\beta_1}\Vert p^{\beta_0}\big)
  &= e^{\,A(2\beta_1-\beta_0)-2A(\beta_1)+A(\beta_0)}-1\\
  &= e^{\,g(\beta_1)}-1,\\
  &g(\beta_1):=A(2\beta_1-\beta_0)-2A(\beta_1)+A(\beta_0).
\end{aligned}\end{equation}
Note $2\beta_1-\beta_0$ is finite, so the right side is finite. Differentiating $g$ in $\beta_1$,
\[\begin{aligned}
  &g(\beta_0)=0,\quad
  g'(\beta_1)=2A'(2\beta_1-\beta_0)-2A'(\beta_1),\\
  &\quad\ \ g'(\beta_0)=0,\\
  &g''(\beta_1)=4A''(2\beta_1-\beta_0)-2A''(\beta_1),\\
  &\quad\ \ g''(\beta_0)=2A''(\beta_0)=2V.
\end{aligned}\]
Hence $g(\beta_1)=V(\Delta\beta)^2+O(\Delta\beta^3)$ and, since $e^{g}-1=g+O(g^2)$,
\begin{equation}
  \chi^2\!\big(p^{\beta_1}\Vert p^{\beta_0}\big)=(\Delta\beta)^2 V+O(\Delta\beta^3).
  \label{eq:chi2exp}
\end{equation}

\paragraph{Step 3 (KL quadratic).}
For an exponential family, $\mathrm{KL}(p^{\beta_1}\Vert p^{\beta_0})=A(\beta_0)-A(\beta_1)-A'(\beta_1)(\beta_0-\beta_1)$, a Bregman divergence of the convex $A$. Taylor expansion of $A$ about $\beta_0$ (legitimate since $A\in C^\infty$) gives
\begin{equation}\label{eq:klexp}\begin{aligned}
  \mathrm{KL}\big(p^{\beta_1}\Vert p^{\beta_0}\big)&=\tfrac12 A''(\beta_0)(\Delta\beta)^2+O(\Delta\beta^3)\\
  &=\tfrac12(\Delta\beta)^2 V+O(\Delta\beta^3),
\end{aligned}\end{equation}
and identically $\mathrm{KL}(p^{\beta_0}\Vert p^{\beta_1})=\tfrac12(\Delta\beta)^2V+O(\Delta\beta^3)$; the two directions coincide through second order, the directional asymmetry entering only at $O(\Delta\beta^3)$ (the Fisher metric is symmetric). Here $A''(\beta_0)=V$ is the Fisher information of the family at $\beta_0$, so \eqref{eq:klexp} is the standard $\tfrac12 g_{\beta\beta}(\Delta\beta)^2$ Fisher quadratic.

\paragraph{Step 4 (Chernoff tilt).}
For two laws $p,q$ on the finite alphabet define $\psi(\lambda)=\log\sum_i p_i^{1-\lambda}q_i^{\lambda}$ and the Chernoff information $I^\star=-\min_{0\le\lambda\le1}\psi(\lambda)$. With $p=p^{\beta_0}$, $q=p^{\beta_1}$ in the same exponential family, $p_i^{1-\lambda}q_i^{\lambda}\propto e^{(\beta_0+\lambda\Delta\beta)z_i}$, whence
\begin{equation}
  \psi(\lambda)=A\big(\beta_0+\lambda\Delta\beta\big)-(1-\lambda)A(\beta_0)-\lambda A(\beta_1).
  \label{eq:psi}
\end{equation}
$\psi$ is smooth in $\lambda$, $\psi(0)=\psi(1)=0$, and $\psi$ is convex in $\lambda$ because $A$ is convex; thus the minimum is interior and unique whenever $V>0$. Its location satisfies $\psi'(\lambda^\star)=\Delta\beta\,A'(\beta_0+\lambda^\star\Delta\beta)-\big(A(\beta_1)-A(\beta_0)\big)=0$, i.e.\ $A'(\beta_0+\lambda^\star\Delta\beta)=\big(A(\beta_1)-A(\beta_0)\big)/\Delta\beta$. Expanding both sides to first order in $\Delta\beta$ gives $\lambda^\star=\tfrac12+O(\Delta\beta)$: the optimal tilt is the midpoint \emph{only in the small-gap limit}, not exactly. However, the value is insensitive to this $O(\Delta\beta)$ error: by the envelope theorem (or simply because $\psi'(\lambda^\star)=0$ kills the first-order sensitivity), evaluating $\psi$ at $\lambda=\tfrac12$ instead of $\lambda^\star$ changes $\psi$ by $O\big((\lambda^\star-\tfrac12)^2\big)\cdot\psi''=O(\Delta\beta^2)\cdot O(\Delta\beta^2)=O(\Delta\beta^4)$, negligible at the order considered. Therefore
\[\begin{aligned}
  I^\star&=-\psi(\lambda^\star)=-\psi(\tfrac12)+O(\Delta\beta^4),\\
  -\psi(\tfrac12)&=\tfrac12\big(A(\beta_0)+A(\beta_1)\big)-A\!\big(\tfrac{\beta_0+\beta_1}{2}\big),
\end{aligned}\]
the Bhattacharyya value. Taylor-expanding the convex $A$ about $\beta_0$, the first-order terms cancel and
\begin{equation}\label{eq:istarexp}\begin{aligned}
  I^\star&=\tfrac12\big(A(\beta_0)+A(\beta_1)\big)-A\!\big(\tfrac{\beta_0+\beta_1}{2}\big)+O(\Delta\beta^4)\\
  &=\tfrac18 A''(\beta_0)(\Delta\beta)^2+O(\Delta\beta^3)
  =\tfrac18(\Delta\beta)^2V+O(\Delta\beta^3).
\end{aligned}\end{equation}
(Numerically, $I^\star/\big(\tfrac18(\Delta\beta)^2V\big)\to1$ as $\Delta\beta\to0$, and the minimizing $\lambda^\star\to0.499$ at small gaps.) Equations \eqref{eq:chi2exp}, \eqref{eq:klexp}, \eqref{eq:istarexp} establish the displayed triple
\(
  \mathrm{KL}\approx\tfrac12(\Delta\beta)^2V,\ \chi^2\approx(\Delta\beta)^2V,\ I^\star\approx\tfrac18(\Delta\beta)^2V.
\)

\paragraph{Step 5 (heat capacity).}
Let $H(T)=-\sum_i p^T_i\log p^T_i$ be the Shannon entropy of the tempered law, $p^T=p^{\beta}$ with $\beta=1/T$. For an exponential family the entropy in the natural parameter is the Legendre-type relation $H=A(\beta)-\beta A'(\beta)$ (since $\log p^\beta_i=\beta z_i-A(\beta)$ gives $H=-\mathbb{E}[\beta z-A]=A-\beta A'$). Differentiating,
\[
  \frac{dH}{d\beta}=A'(\beta)-A'(\beta)-\beta A''(\beta)=-\beta A''(\beta).
\]
With $\beta=1/T$, $d\beta/dT=-1/T^2$, the chain rule gives $\dfrac{dH}{dT}=\dfrac{dH}{d\beta}\dfrac{d\beta}{dT}=(-\beta A''(\beta))\big(-1/T^2\big)=\dfrac{A''(\beta)}{T^3}$, i.e.
\begin{equation}
  \mathrm{Var}_{p^T}(z)=A''(\beta)=T^3\,\frac{dH}{dT}\ \ge\ 0,
  \label{eq:heatcap2}
\end{equation}
the nonnegativity being that of the variance; equivalently $dH/dT=V/T^3\ge0$, so output entropy is nondecreasing in temperature.

\paragraph{Step 6 (temperature gap).}
The map $T\mapsto\beta=1/T$ gives $\Delta\beta=1/T_1-1/T_0=-\Delta T/(T_0T_1)=-\Delta T/T_0^2+O(\Delta T^2)$, so $(\Delta\beta)^2=(\Delta T)^2/T_0^4+O(\Delta T^3)$. Substituting into \eqref{eq:istarexp} and using \eqref{eq:heatcap2} at $T_0$, $V=T_0^3\,dH/dT|_{T_0}$,
\begin{equation}\label{eq:tempmag2}\begin{aligned}
  I^\star(T_0,T_1)&=\tfrac18(\Delta\beta)^2 V+O(\Delta\beta^3)\\
  &=\tfrac18\frac{(\Delta T)^2}{T_0^4}\,\big(T_0^3\tfrac{dH}{dT}\big)+\cdots\\
  &=\frac{(\Delta T)^2}{8\,T_0}\,\frac{dH}{dT}\Big|_{T_0}+\cdots,
\end{aligned}\end{equation}
which is exactly Eq.~\eqref{eq:tempmag}. Thus along the tempering family the response-level separation is $I^\star=\Theta\big((\Delta\beta)^2\big)=\Theta\big((\Delta T)^2\big)$: a strictly \emph{second-order} effect in the gap, vanishing quadratically as $T_1\to T_0$. By the standard Chernoff--Stein bound, distinguishing $p^{\beta_0}$ from $p^{\beta_1}$ from $m$ i.i.d.\ single-draw probes to fixed error needs $m=\Theta(1/I^\star)=\Theta\big((\Delta\beta)^{-2}\big)$ draws.

\paragraph{Greedy boundary.}
As $T\to0$ ($\beta\to\infty$) the law $p^\beta$ concentrates on $\arg\max_i z_i$; if that argmax is unique, $V=\mathrm{Var}_{p^\beta}(z)\to0$ and \eqref{eq:tempmag2} gives a vanishing per-draw rate, consistent with $dH/dT\to0$. The interior expansion's non-degeneracy hypothesis $V>0$ thus fails exactly at the greedy boundary; there the appendix's separate support-collapse observable (greedy repeats a single string) — not the second-order tempering rate — governs separability. This is a change of statistic, not a contradiction.

\paragraph{Geometric conclusion.}
The reachable set $\{\,\mathrm{softmax}(\beta z):\beta>0\,\}$ is a one-dimensional analytic curve in the simplex $\Delta^{n-1}$, hence Lebesgue-measure zero for $n\ge3$. A pure temperature retune moves $\beta$ \emph{along} this curve, incurring only the second-order cost \eqref{eq:tempmag2}; a genuine model change perturbs the vector $z$ itself, generically \emph{off} the curve. For $R$ to mimic $P$ by tempering alone at \emph{every} context $c$ would require $z_R(\cdot)=c_0\,z_P(\cdot)$ for a single global scalar $c_0$ at all contexts---i.e.\ $R$ is a temperature reparametrization of $P$, not a distinct model. Hence \IRIS{}'s second-order insensitivity to temperature is the intended dilution invariance: it waives an on-family sampler retune (unchanged weights) while an off-family substitute incurs first-order ($\Theta(1)$ tell-rate, $1/\epsilon$) cost whenever it is measurably separated. $\square$

\paragraph{Remark (scope).}
This is a supplementary low-separation stress test, not a robustness guarantee against a deliberately matched cheaper model. The guarantee is conditional on the backend being measurably separated from the reference.

\subsection{Proof of Prop.~\ref{prop:phase} (tail budget)}
\noindent\textbf{Restatement.} \emph{Fix a reference distribution $P$ with continuous (atomless) score CDF $F_P(t)=\mathbb P_{y\sim P}(s(y)\le t)$, and define the tail-separation function $q_\alpha=\mathbb P_{y\sim R}\!\big(s(y)>F_P^{-1}(1-\alpha)\big)$. Assume there are constants $c>0$ and a tail exponent $\kappa\in[0,1)$ with $q_\alpha\asymp c\,\alpha^{\kappa}$ as $\alpha\to0$ (i.e.\ $q_\alpha/(c\alpha^\kappa)\to1$). The auditor runs the any-tell rule of Thm.~\ref{thm:mix}, controlling overall type-I error at level $\alpha$ by the union bound, i.e.\ using per-query level $\alpha/m$ (threshold $\tau_m=F_P^{-1}(1-\alpha/m)$, per-query tell rate $q_{\alpha/m}$). Then, with $\alpha,\delta\in(0,1)$ and $c,\kappa$ held fixed while $\epsilon\to0$, the minimal budget attaining power $\ge1-\delta$ against any suspect with routing fraction $\ge\epsilon$ satisfies}
\[\begin{aligned}
  m^\star(\epsilon)&=\Theta\!\big(\epsilon^{-1/(1-\kappa)}\big),\\
  m^\star(\epsilon)&\asymp\Big(\tfrac{\ln(1/\delta)}{c\,\alpha^{\kappa}}\Big)^{1/(1-\kappa)}\epsilon^{-1/(1-\kappa)}.
\end{aligned}\]
\emph{In particular $\kappa=0$ gives $\Theta(\epsilon^{-1})$, $\kappa=\tfrac12$ gives $\Theta(\epsilon^{-2})$ (the same scaling exponent as the aggregate mean-shift budget of Thm.~\ref{thm:mix}(c), a coincidence of exponents between two distinct tests, not of optimality), and $\kappa\in(0,\tfrac12)$ interpolates. As $\kappa\uparrow1$ the exponent and constant diverge; $\kappa=1$ is the degenerate wall, where no finite $m$ attains power once $\epsilon\,c\,\alpha<\ln(1/\delta)$.}

\medskip
\noindent Write $L_\delta:=\ln(1/\delta)>0$; $a\asymp b$ means $a/b\to1$ and $a=\Theta(b)$ means $0<\liminf a/b\le\limsup a/b<\infty$, both as $\epsilon\to0$.

\paragraph{Step 0 (any-tell power).}
By Thm.~\ref{thm:mix}, at per-response threshold $\tau$ the rule flags at the first query with $s(y)>\tau$. Under $Q_\epsilon$ each query crosses with probability $r(\tau):=\mathbb P_{Q_\epsilon}(s>\tau)=(1-\epsilon)\alpha_1(\tau)+\epsilon q(\tau)$, and since queries are i.i.d.\ the within-$m$ power is exactly
\begin{equation}
  \mathrm{pow}_m(\tau)=1-(1-r(\tau))^m,\qquad r(\tau)\ge \epsilon q(\tau).
  \label{eq:powexactC}
\end{equation}

\paragraph{Step 1 (type-I threshold).}
On an honest endpoint the $m$-query false-positive probability is $1-(1-\alpha_1(\tau))^m\le m\,\alpha_1(\tau)$ (Bernoulli/union bound), so it suffices to take $\alpha_1(\tau)\le\alpha/m$. As $F_P$ is atomless, $\tau_m:=F_P^{-1}(1-\alpha/m)$ gives \emph{exactly} $\alpha_1(\tau_m)=\alpha/m$ and, by definition of the tail-separation function, $q(\tau_m)=q_{\alpha/m}$. Thus \emph{for each candidate budget $m$} the auditor is forced to the single threshold $\tau_m$; all statements below are read at this $m$-dependent threshold, which is self-consistent because $\tau_m$ is determined by $m$ alone. (Discrete $F_P$: take the smallest $\tau$ with $\alpha_1(\tau)\le\alpha/m$; equalities become ``$\le$'', lowering the achievable $q$ and enlarging only the constant.)

\paragraph{Step 2 (upper bound).}
At $\tau=\tau_m$, $\mathrm{pow}_m\ge1-(1-\epsilon q_{\alpha/m})^m$ by \eqref{eq:powexactC}, so $(1-\epsilon q_{\alpha/m})^m\le\delta$ suffices for power $\ge1-\delta$. Using $-\ln(1-x)\ge x$ on $[0,1)$ (verified: $h(x)=-\ln(1-x)-x$ has $h(0)=0$, $h'(x)=x/(1-x)\ge0$), this holds whenever
\begin{equation}
  m\,\epsilon\,q_{\alpha/m}\ge L_\delta.
  \label{eq:suffC}
\end{equation}
Substituting $q_{\alpha/m}\asymp c(\alpha/m)^\kappa$ (valid since $\alpha/m\to0$, justified a posteriori below) gives $\epsilon c\alpha^\kappa m^{1-\kappa}\gtrsim L_\delta$; since $\kappa<1$, $x\mapsto x^{1-\kappa}$ is increasing and invertible, so
\begin{equation}
  m\ge m_+(\epsilon):=\Big(\tfrac{L_\delta}{c\,\alpha^{\kappa}}\Big)^{1/(1-\kappa)}\epsilon^{-1/(1-\kappa)}(1+o(1)).
  \label{eq:mplusC}
\end{equation}
As $\epsilon\to0$, $m_+\to\infty$, retroactively validating the $\alpha/m\to0$ tail asymptotic. Hence $m^\star\le\lceil m_+\rceil=O(\epsilon^{-1/(1-\kappa)})$.

\paragraph{Step 3 (lower bound).}
Fix the auditor's only admissible threshold $\tau_m$ from Step 1. Power $\ge1-\delta$ requires, by \eqref{eq:powexactC}, $(1-r(\tau_m))^m\le\delta$, i.e.\ $-m\ln(1-r(\tau_m))\ge L_\delta$. Now $r(\tau_m)=(1-\epsilon)\tfrac{\alpha}{m}+\epsilon q_{\alpha/m}\le\tfrac{\alpha}{m}+\epsilon q_{\alpha/m}\to0$ (as $\alpha/m\to0$ and $\epsilon q_{\alpha/m}\le\epsilon\to0$). Applying $-\ln(1-x)\le x/(1-x)$ on $[0,1)$ (verified: $g(x)=x/(1-x)+\ln(1-x)$ has $g(0)=0$, $g'(x)=x/(1-x)^2\ge0$),
\[\begin{aligned}
  L_\delta&\le -m\ln(1-r(\tau_m))\le \frac{m\,r(\tau_m)}{1-r(\tau_m)}\\
  &=\frac{1}{1-r(\tau_m)}\Big(\underbrace{(1-\epsilon)\alpha}_{\to\,\alpha}+\epsilon\,m\,q_{\alpha/m}\Big).
\end{aligned}\]
Since $r(\tau_m)\to0$ the prefactor $\to1$, and the honest term $(1-\epsilon)\alpha\le\alpha$ is an $m$-independent constant that is dominated by $L_\delta\,m_+(\epsilon)$ as $\epsilon\to0$; hence
\[\begin{aligned}
  &L_\delta\le(1+o(1))\,\epsilon\,m\,q_{\alpha/m},\qquad\text{i.e.}\\
  &\qquad m\,\epsilon\,q_{\alpha/m}\ge L_\delta(1-o(1)).
\end{aligned}\]
This is \eqref{eq:suffC} up to $1-o(1)$, so repeating Step 2 with the reversed inequality gives $m\ge m_-(\epsilon)=m_+(\epsilon)(1-o(1))$. (The worst-case honest contribution $\alpha_1(\tau_m)\le\alpha/m$ is what makes the lower bound match the upper bound to leading order; it adds at most $O(\alpha)=o(L_\delta m_+)$ to the budget.) Combining Steps~2--3,
\[
  m^\star(\epsilon)\asymp\Big(\tfrac{L_\delta}{c\,\alpha^{\kappa}}\Big)^{1/(1-\kappa)}\epsilon^{-1/(1-\kappa)}=\Theta\!\big(\epsilon^{-1/(1-\kappa)}\big).
\]
The two-sided $\Theta$ uses $-\ln(1-x)=x(1+O(x))$ with $x=\epsilon q_{\alpha/m}\to0$, exactly the regime $\epsilon\to0$, $q\le1$.

\paragraph{Step 4 (regimes).}
\emph{(i) $\kappa=0$:} $q_{\alpha/m}\to q_0:=c>0$, $1-\kappa=1$, so $m^\star=\Theta(L_\delta/(\epsilon c))=\Theta(\epsilon^{-1})$, $\alpha$-independent (constant $\alpha^{-\kappa/(1-\kappa)}=\alpha^0=1$); reproduces Thm.~\ref{thm:mix}(a)/(c) first regime, eq.~\eqref{eq:mstar}.
\emph{(ii) $\kappa=\tfrac12$:} $1/(1-\kappa)=2$, so $m^\star=\Theta(\epsilon^{-2})$ with constant $(L_\delta/(c\sqrt\alpha))^2=L_\delta^2/(c^2\alpha)$ (scaling $\alpha^{-1}$). This matches the \emph{scaling exponent} $\epsilon^{-2}$ of the aggregate mean-shift / Neyman--Pearson budget of Thm.~\ref{thm:mix}(c), eq.~\eqref{eq:eps2}; the two budgets come from \emph{different} tests (any-tell first-crossing vs.\ aggregate $S_m$ mean-shift), so the agreement is of exponents only, not of constants or optimality, and $\kappa=\tfrac12$ is an empirical tail property (Table~\ref{tab:tailsep}), not a derived consequence of $\chi^2(R\|P)$ being small.
\emph{(iii) Interpolation/wall:} for $\kappa\in(0,\tfrac12)$, $1<1/(1-\kappa)<2$, monotone increasing. As $\kappa\uparrow1$, $1/(1-\kappa)\to\infty$ and the constant $(L_\delta/(c\alpha^\kappa))^{1/(1-\kappa)}\to\infty$. At $\kappa=1$, \eqref{eq:suffC} becomes $\epsilon c\alpha\cdot m^0=\epsilon c\alpha\ge L_\delta$, independent of $m$: if $\epsilon c\alpha<L_\delta$ no finite $m$ attains power $\ge1-\delta$ (the wall). Hence $\kappa\in[0,1)$ is necessary for the finite closed form. (The $\epsilon$-power $-1/(1-\kappa)$ and $\alpha$-power $-\kappa/(1-\kappa)$ equal $0,-\tfrac13,-1,-3$ at $\kappa=0,\tfrac14,\tfrac12,\tfrac34$.) \hfill$\square$

\paragraph{Remark (scope).}
The $\Theta$ is in $\epsilon$ with $(\alpha,\delta,c,\kappa)$ fixed; the hidden constant carries an $\alpha^{-\kappa/(1-\kappa)}$ and a $(\ln(1/\delta))^{1/(1-\kappa)}$ factor, both diverging as $\kappa\uparrow1$. The result is a statement about the any-tell first-crossing test with a Bonferroni per-query level $\alpha/m$ adapted to the horizon; a per-query level held fixed in $m$ would give a different (non-quantile-shrinking) calculation. Atomlessness of $F_P$ is idealizing for the discrete short-string probes, where one reads $q_{\alpha/m}$ at the achievable level $\alpha_1(\tau)\le\alpha/m$, affecting only the constant; for the population $\kappa{=}0$ this further requires $R$ to retain mass on the extreme upper tail of $s$ under $P$ (a support difference), not merely $P\neq R$ (App.~\ref{app:tail}). One more constant correction: the honest reference itself contributes crossings at rate $\alpha_1=\alpha/m$, so the no-substitute zero-tell probability is $(1-\alpha/m)^m\to e^{-\alpha}$ rather than $1$, and the sufficiency condition \eqref{eq:suffC} sharpens to $m\,\epsilon\,q_{\alpha/m}\ge\ln(1/\delta)-\alpha+o(1)$; this shifts the leading constant by the $O(\alpha)$ term $\ln(1/\delta)-\alpha$ in place of $\ln(1/\delta)$, leaving the $\epsilon$-exponent unchanged.

\subsection{Proof of Prop.~\ref{prop:multi} (multi-diluent audit)}
We prove the detection bound and the identifiability criterion at the population (noiseless) level, the regime in which Prop.~\ref{prop:multi} is stated.

\paragraph{Setup.}
A gateway dilutes one reference $P$ with $J$ substitutes $R_1,\dots,R_J$, routing each query \emph{independently} to $P$ with probability $\epsilon_0$ and to $R_j$ with probability $\epsilon_j$, where
\[\begin{aligned}
  \epsilon=(\epsilon_1,\dots,\epsilon_J)\in\Theta&:=\{\epsilon\in\mathbb{R}^J:\epsilon_j\ge0,\ \mathbf 1^\top\epsilon\le1\},\\
  \epsilon_0&:=1-\mathbf 1^\top\epsilon\ge0,
\end{aligned}\]
so the per-query response law is the mixture $Q=\epsilon_0P+\sum_{j=1}^J\epsilon_jR_j$. The posterior map $u(y)=\hat{\mathbb P}(\cdot\mid y)\in\Delta^{K}$ (the simplex on the $K$ enrolled endpoints) is a \emph{fixed} measurable map, identical at enrollment and audit; signatures $\bar g_M=\mathbb{E}_{y\sim M}[u(y)]$ and the audit mean $\bar v=\mathbb{E}_Q[u]$ are exact expectations (integrable since $u\in[0,1]^K$). No calibration of $u$ is used.

\paragraph{Detection.}
Fix a per-response threshold $\tau$ and let $\alpha_1(\tau)=\mathbb P_{P}(s>\tau)$, $q_j(\tau)=\mathbb P_{R_j}(s>\tau)$. The event $A_\tau=\{y:s(y)>\tau\}$ does not depend on the law, so applying $Q$ to the fixed set $A_\tau$ and using additivity over the mixture components,
\[\begin{aligned}
  p(\tau)=\mathbb P_{Q}(s>\tau)&=\epsilon_0\,\alpha_1(\tau)+\sum_{j=1}^J\epsilon_j\,q_j(\tau)\\
  &\ \ge\ \sum_{j=1}^J\epsilon_j\,q_j(\tau)\ \ge\ \epsilon_{\mathrm{tot}}\min_{1\le j\le J}q_j(\tau),
\end{aligned}\]
with $\epsilon_{\mathrm{tot}}=\sum_j\epsilon_j$; the first inequality drops the nonnegative term $\epsilon_0\alpha_1$ and the second uses $\epsilon_j\ge0$ and $q_j\ge\min_{j'}q_{j'}$. This bound is \emph{unconditional} (no separation hypothesis). If moreover a threshold $\tau^\star$ exists with $q_{\min}:=\min_jq_j(\tau^\star)>0$ and $\alpha_1(\tau^\star)\le\alpha/m$, then under i.i.d.\ routing the tells are i.i.d.\ $\mathrm{Bernoulli}(p(\tau^\star))$ and the any-tell argument of Thm.~\ref{thm:mix}(a,b) applies verbatim with $\epsilon\to\epsilon_{\mathrm{tot}}$, $q\to q_{\min}$: $\mathbb P(\text{no tell in }m)\le(1-\epsilon_{\mathrm{tot}}q_{\min})^m\le e^{-\epsilon_{\mathrm{tot}}q_{\min}m}$, so $m^\star=\lceil\ln(1/\delta)/(\epsilon_{\mathrm{tot}}q_{\min})\rceil$ gives power $\ge1-\delta$, while the honest false-positive probability is $\le m\,\alpha_1(\tau^\star)\le\alpha$. Detection thus runs at the $1/\epsilon$ law in the \emph{total} foreign fraction $\epsilon_{\mathrm{tot}}$, set by the weakest separating diluent, and needs no enrollment of the $R_j$. The hypothesis ``every diluent separates'' is used only to guarantee the existence of $\tau^\star$ (so $q_{\min}>0$ and $\alpha_1\to0$), not for the inequality itself.

\paragraph{Affine mean model.}
With $G:=[\,\bar g_{R_1}-\bar g_P,\dots,\bar g_{R_J}-\bar g_P\,]\in\mathbb{R}^{K\times J}$, linearity of expectation over the mixture and $\epsilon_0=1-\mathbf 1^\top\epsilon$ give
\[
  \bar v=\mathbb{E}_Q[u]=\epsilon_0\,\bar g_P+\sum_{j}\epsilon_j\,\bar g_{R_j}=\bar g_P+G\,\epsilon .
\]
Because $u(y)\in\Delta^K$ pointwise, $\mathbf 1^\top u(y)=1$, hence $\mathbf 1^\top\bar g_M=1$ for every endpoint $M$ and $\mathbf 1^\top(\bar g_{R_j}-\bar g_P)=0$. Every column of $G$ therefore lies in the hyperplane $H_0=\{x:\mathbf 1^\top x=0\}$, so $\mathrm{rank}(G)\le\dim H_0=K-1$. In particular, identifying $J$ fractions requires $J\le K-1$: at least $J{+}1$ enrolled endpoints are necessary.

\paragraph{Identifiability.}
Let $\Phi:\Theta\to\mathbb{R}^K$, $\Phi(\epsilon)=\bar g_P+G\epsilon$, be the (affine) mean map, with linear part $G$, so $\Phi(\epsilon)-\Phi(\epsilon')=G(\epsilon-\epsilon')$. We show $\Phi$ is injective on $\Theta$ (equivalently $\epsilon$ is identifiable from $\bar v$) iff $G$ has full column rank $J$.

\emph{($\Leftarrow$) Full rank $\Rightarrow$ injective.} If $G$ has full column rank then $Gd=0\iff d=0$; hence $\Phi(\epsilon)=\Phi(\epsilon')\Rightarrow G(\epsilon-\epsilon')=0\Rightarrow\epsilon=\epsilon'$, injective on all of $\mathbb{R}^J$, a fortiori on $\Theta$. Moreover, since $G^\top G\succ0$, for $\bar v=\Phi(\epsilon^\star)$ the noiseless simplex-constrained least squares objective $\|\bar v-\Phi(\epsilon)\|_2^2=(\epsilon-\epsilon^\star)^\top(G^\top G)(\epsilon-\epsilon^\star)$ is a strictly convex quadratic vanishing only at $\epsilon=\epsilon^\star$, so $\epsilon^\star$ is its unique minimizer over $\Theta$.

\emph{($\Rightarrow$) Injective $\Rightarrow$ full rank.} Contrapositive: if $G$ is rank-deficient, pick $d\neq0$ with $Gd=0$. The point $\epsilon^\star=\tfrac{1}{2(J+1)}\mathbf 1$ satisfies $\epsilon^\star>0$ and $\mathbf 1^\top\epsilon^\star=\tfrac{J}{2(J+1)}<1$, so $\epsilon^\star$ lies in the (nonempty, open) interior $\Theta^\circ$; for small $t\neq0$, $\epsilon':=\epsilon^\star+td\in\Theta^\circ$ with $\epsilon'\neq\epsilon^\star$ and $\Phi(\epsilon')-\Phi(\epsilon^\star)=t\,Gd=0$. Thus $\Phi$ is not injective. (The necessity direction uses that $\Theta$ is full-dimensional; at a boundary corner a rank-deficient model could be locally identified.)

\paragraph{Signature collapse.}
If two diluents share a signature, $\bar g_{R_a}=\bar g_{R_b}$ ($a\neq b$), columns $a,b$ of $G$ coincide, so $d=e_a-e_b\neq0$ satisfies $Gd=0$: $G$ is rank-deficient and only the sum $\epsilon_a+\epsilon_b$ is identifiable, not the individual fractions ($\epsilon$ and $\epsilon+t(e_a-e_b)$ are indistinguishable for small feasible $t$). More generally $\mathrm{null}(G)=\{d:\sum_jd_j(\bar g_{R_j}-\bar g_P)=0\}$ is exactly the set of unidentified directions, the multi-diluent analogue of a low-separation wall. At $J=1$ full rank reads $\bar g_{R_1}\neq\bar g_P$, the mean-posterior analogue (not a literal specialization) of the single-diluent tell separation $q>\alpha_1$. \hfill$\square$

\putbib
\end{bibunit}
\fi

\end{document}